\declaretheorem[parent=section]{theorem}
\declaretheorem[sibling=theorem]{definition}
\declaretheorem[sibling=theorem]{corollary}
\declaretheorem[sibling=theorem]{lemma}
\DeclareMathOperator {\E}{\mathbb{E}}
\let\epsilon=\varepsilon
\newcommand{\poly}{\mathrm{poly}}
\newcommand{\sig}{\mathrm{sig}}
\newcommand{\q}{\widehat{d}}
\title{Reconciling Graphs and Sets of Sets}
\author{Michael Mitzenmacher\thanks{Harvard University School of Engineering and Applied Sciences.  email: michaelm@eecs.harvard.edu.  Michael Mitzenmacher was supported in part by NSF grants CNS-1228598, CCF-1320231, CCF-1563710 and CCF-1535795.} \and Tom Morgan\thanks{Harvard University School of Engineering and Applied Sciences.  email: tdmorgan@seas.harvard.edu.  Tom Morgan was supported in part by NSF grants CNS-1228598 and CCF-1320231.}}
\date{}
\begin{document}

\maketitle

\begin{abstract}
We explore a generalization of set reconciliation, where the goal is to
reconcile {\em sets of sets}.  Alice and Bob each have a parent set
consisting of $s$ child sets, each containing at most $h$ elements
from a universe of size $u$.  They want to reconcile their sets of
sets in a scenario where the total number of differences between all
of their child sets (under the minimum difference matching between
their child sets) is $d$.  We give several algorithms for this problem,
and discuss applications to reconciliation problems on graphs, databases,
and collections of documents.  We specifically focus on graph reconciliation, providing protocols based on set of sets reconciliation for random graphs from $G(n,p)$ and for forests of rooted trees.    
\end{abstract}

\section{Introduction}
In the standard problem of set reconciliation, two parties Alice and
Bob each hold sets of items $S_A$ and $S_B$ respectively from a common
universe (with items generally represented as words of $w$ bits), and the goal
is for one or both of the parties to determine the union of the two
sets.  Typically in applications the set difference size $d$ is small, and we
seek an algorithm that computes the set difference efficiently and with small communication.
When a bound $d$ on the size of the set difference is known, 
standard polynomial interpolation methods allow reconciliation using only $d$ words, but
this approach is fairly inefficient computationally \cite{minsky2003set,starobinski2003efficient}.  A more practical approach based on 
Invertible Bloom Lookup Tables (or IBLTs, also called Invertible Bloom Filters) uses 
$O(d)$ space and linear time, and succeeds with high probability \cite{eppstein2011straggler,eppstein2011s,gm11}.  

We explore an important problem variant of reconciling \emph{sets of sets}. 
Specifically, Alice and Bob each have a \emph{parent set} consisting of $s$ \emph{child sets}, each containing at most $h$ elements from a universe of size $u$.  We call the sum of the sizes of the child sets $n$.  Alice and Bob wish to reconcile their sets of sets under the scenario that the total number of differences among all of their child sets (under the minimum difference matching between their child sets) is $d$.  We consider primarily the one-way notion of reconciliation, in which at the end of the protocol, Bob can completely recover Alice's data.  (Our work can be extended to mutual reconciliation in various ways.)  

The problem of reconciling sets of sets naturally occurs in several reconciliation problems. For a theoretically interesting example that highlights the approach, we focus in this paper on using reconciliation of sets of sets for reconciling unlabeled random graphs.  Several graph isomorphism and graph watermarking techniques for random graphs develop what we call here \emph{signatures} for the vertices, where a signature corresponds to a set related to the neighborhood of the vertex \cite{babai1979canonical,czajka2008improved}.
We adapt this methodology for reconciliation,  reconciling a perturbed random graph by first reconciling the set of signatures of the vertices, which can themselves be represented as sets that have undergone a small number of total changes.  We also explore reconciling forest graphs, which are another class of graphs in which graph isomorphism is tractable \cite{aho1974design}. 

There are other settings where reconciling sets of sets is a natural primitive.  For example, consider relational databases consisting of binary data, where the columns are labeled but the rows are not.  In such a case, a row database entry can equivalently be thought of as a set of elements from the universe of columns (the set of columns in which the row has a 1 entry).  The problem of reconciling two databases in which a total of $d$ bits have been flipped corresponds exactly to our sets of sets problem.  As another example, consider the well-known approach of using shingles to represent documents \cite{broder1997resemblance}.  In this setting, consecutive blocks of $k$ words of a document are hashed into numbers, and a subset of these numbers are used as a signature for the document;  these signatures can be compared to determine document similarity.  A collection of documents would then correspond to sets of sets, and in cases where two collections had some documents that were similar (instead of exact matches), the corresponding sets would only have a small number of differences.  Reconciling collections of documents could start by reconciling the sets of sets corresponding to the collection, to find documents in one collection with no similar document in another collection.

In what follows, we first review related work in set reconciliation and other research areas.  We then review some known results regarding IBLTs before introducing a series of protocols for reconciling sets of sets.  This culminates in a one round protocol for reconciling sets of sets when $d$ is known in advance that uses $\tilde{O}(d)$ communication and $\tilde{O}(n+d^2)$ computation, as well as a 4 round protocol with matching bounds (up to log factors) when $d$ is unknown.  Finally, we examine applications, focusing on graph reconciliation problems.

\subsection{Related Work}
Initial work for set reconciliation considered two parties reconciling sets of numbers, and utilized characteristic polynomials \cite{minsky2003set,starobinski2003efficient}.  Given a set $S=\{x_1,x_2,\ldots,x_n\}$, Alice would compute the characteristic polynomial 
$\chi_S(z)=(z-x_1)(z-x_2)\cdots (z-x_n)$, and similarly Bob would compute the characteristic polynomial $\chi_T(z)$ for his set $T$.  If there are only $d$ differences in the two sets, then by sending any $d' > d$ evaluations of the characteristic polynomial at previously agreed upon points, Alice and Bob can recover each other's sets (by interpolating the rational function $\chi_S(z)/\chi_T(z)$).  Determining or estimating an upper bound on the number of differences $d$ is considered external to the reconciliation protocol (see \cite{eppstein2011s} for examples of set difference estimation, and Appendix A).  While this approach uses essentially minimal communication, the required interpolation is computationally expensive for even moderately large set differences.  

As an alternative, Invertible Bloom Lookup Tables (IBLTs) utilize an efficient peeling process.  Alice builds a hash table by hashing her set elements multiple times (generally 3 or 4, using multiple hash functions) into the table, and elements that collide are XOR'ed together.  Additional information, such as the number of elements that have hashed into a bucket, are also stored.  Alice sends the hash table to Bob, who then ``removes'' his elements from the table by XORing them in the same manner using the same hash functions into the table, so that only the set difference remains in the table.  Elements alone in a hash bucket can then be found and removed.  With $d$ differences, only a table of size $O(d)$ is needed to recover all the elements (with probability inverse polynomial in $d$), and only linear time is need to create and recover elements from the hash table.   

These ideas have been extended, for example to multi-party reconciliation \cite{boral2014multi,mitzenmacher2013simple}.  However, there remain many open problems related to reconciliation of objects beyond sets.  The problem of reconciling strings, or files, has received significant attention, with the program rsync being a well-known practical implementation \cite{rsync, rsyncalg}, and additional theoretical work \cite{irmak2005improved, yan2008algorithms}.  Reconciliation of objects such as graphs and databases that both limit the communication to be proportional to the size of the difference between the objects and are efficient computationally have received less attention.  For graphs, related work includes work on graph isomorphism \cite{babai2016graph, babai1979canonical,czajka2008improved}, which can be thought of as reconciling with no errors, and graph watermarking \cite{eppstein2016models}, which involves determining if one graph is a perturbation of another.  We incorporate approaches from these lines of work to construct reconciliation protocols for families of random graphs.

Graph reconciliation is also closely related to the problem of graph matching, also known as graph alignment or social network de-anonymization \cite{kazemi2015growing, kazemi2015can, korula2014efficient, yartseva2013performance}.  The underlying problem is essentially the same, but there are not two communicating agents;  one simply seeks an algorithm that aligns input graphs.  In the graph reconciliation setting, where the goal is to use small communication, the number of differences between the graphs is generally assumed to be very small, which is a setting largely unexplored in the graph matching setting.

Reconciling forests as a communication problem does not appear to have been studied previously.  Related work, again without communication, includes computing the edit distance between trees \cite{bille2005survey}.  Tree edit distance considers a different set of graph updates than we consider (vertex updates instead of edge updates). 

\section{Preliminaries} \label{sec:prelims}
Throughout this paper, we work in the word RAM model with words of size $w$.  We often refer to the number of rounds of communication a protocol uses, which denotes the number of total messages sent.  For example, a one round protocol consists only of a single message from Alice to Bob. 

All of our protocols assume access to public coins, meaning that any random bits used are shared between Alice and Bob at the cost of no additional communication.  This is relevant because our protocols lean heavily on the application of various hash functions, and public coins allow Alice and Bob to use the same hash functions without communicating anything about them. Note that our protocols can be converted to use only private coins with minimal additional communication via standard techniques \cite{newman1991private}.  In practice, one would generally start the protocol by sharing a small random seed to be used for generating all future randomness.

In set reconciliation, Alice and Bob each have sets $S_A$ and $S_B$ of size at most $n$ from a universe of size $u$ and their set difference, $|S_A \oplus S_B|$, is at most $d$.  We assume $w = \Omega(\log u + \log n)$.  In what follows we consider the one-way version of set reconciliation, in which at the end of the protocol Bob recovers Alice's set, as opposed to the two-way protocol, in which both parties recover the union of the two sets.

A tool we frequently use is the Invertible Bloom Lookup Table (IBLT) \cite{gm11}, a data structure that represents a set.  We briefly review its properties;  more details can be found in \cite{eppstein2011straggler,gm11}.  An IBLT is a hash table with $k$ hash functions and $m$ cells. We add a key to the table by updating each of the $k$ cells that it hashes to.  (We assume these cells are distinct;  for example, one can use a partitioned hash table, with each hash function having $m/k$ cells.)  Each cell maintains a count of the number of keys hashed to it, an XOR of all of the keys hashed to it, and an XOR of a checksum of all of the keys hashed to it.  The checksum, a sequence of $O(\log u)$ bits produced by another hash function, is sufficiently large so as to ensure that with high probability, none of the distinct keys' checksums collide.  We can also delete a key from an IBLT through the same operation as adding it, except that now we decrement the counts instead of incrementing them.

An IBLT is \emph{invertible} in that after adding $n$ unique keys to it, if $m$ is sufficiently large then we can recover those $n$ keys via a peeling process.  Whenever a cell in the table has a count of 1, its key value will be exactly the key hashed to that cell, which we can recover and then delete from the table.  This deletion potentially creates more cells with a count of 1, allowing the process to continue until no keys remain in the table.  To argue that this is likely to succeed, we interpret the IBLT as a random hypergraph with $m$ vertices and $n$ hyperedges of cardinality $k$.  Each cell corresponds to a vertex, and each key corresponds to a hyperedge connecting the $k$ cells it hashes to.  This peeling process then successfully extracts all of the hyperedges unless the graph has a nonempty 2-core, the probability of which can be directly bounded.  This gives the following theorem.

\begin{theorem}[Theorem 1 of \cite{gm11}] \label{thm:iblt}
There exists a constant $c$ so that an IBLT with $m$ cells ($O(m \log u)$ space) and at most $cm$ keys will successfully extract all keys with probability at least $1 - O(1/\poly(m))$.
\end{theorem}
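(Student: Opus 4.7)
The plan is to recast the claim as a statement about the absence of a 2-core in a random $k$-uniform hypergraph, and then bound the probability of having such a 2-core via the first moment method. I will assume the IBLT uses $k \geq 3$ hash functions, as is standard; the constant $c$ in the theorem will depend on $k$.

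First, I would formalize the reduction. Model the IBLT as a random $k$-uniform hypergraph $H$ on vertex set $[m]$ (the cells), with one hyperedge per key joining the $k$ cells into which the key hashes. The peeling process repeatedly removes any cell of count $1$ together with its unique incident hyperedge; it clears all keys iff $H$ contains no non-empty 2-core, i.e., no sub-hypergraph in which every vertex has degree at least $2$. The checksum, if taken to be $\Theta(\log m)$ bits, ensures by a union bound that during peeling no spurious XOR-coincidence is ever accepted as a valid key, contributing only an additive $O(1/\poly(m))$ to the failure probability. Hence it suffices to bound the probability that $H$ has a non-empty 2-core.

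Next, I would apply the first moment method. A candidate 2-core with $s$ hyperedges supported on exactly $t$ vertices requires $ks \geq 2t$ (so that every support vertex can have degree at least $2$). Under fully random hashing, the expected number of such candidates is at most
\[
\sum_{\substack{1 \leq s \leq n \\ 1 \leq t \leq \min(m,\, ks/2)}} \binom{m}{t}\binom{n}{s}\left(\frac{t}{m}\right)^{ks}.
\]
Using $\binom{m}{t} \leq (em/t)^t$, $\binom{n}{s} \leq (en/s)^s$, and $n \leq cm$, one reduces each summand to the form $\bigl(f_k(s/t,\,t/m,\,c)\bigr)^{t}$ for a function $f_k$ that can be made strictly less than $1$ uniformly in the admissible range by choosing $c = c_k$ small enough. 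Combined with a union bound, this yields the desired $1 - O(1/\poly(m))$ success probability after invoking Markov's inequality and absorbing the checksum error.

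The main obstacle is the regime $t = \Theta(m)$, where the factor $(t/m)^{ks}$ no longer yields automatic polynomial decay; here one must combine $k \geq 3$ with the constraint $ks \geq 2t$ to force the exponent in $m$ to be strictly negative, which is exactly the step that fails for $k = 2$ (corresponding to the emergence of a giant 2-core in an ordinary random graph at constant edge density). The small-$t$ regime, by contrast, is immediate because $(t/m)^{ks}$ already dominates. I would not try to obtain the sharp threshold value of $c$, which is controlled by the well-studied $k$-core threshold; the looser bound above suffices for the existence statement in the theorem.
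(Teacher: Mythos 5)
The paper does not actually prove this statement; it is imported from [gm11], and the preceding paragraph only sketches the reduction to the absence of a nonempty $2$-core in the associated random $k$-uniform hypergraph (``the probability of which can be directly bounded''). You take that same reduction, which is correct, and then supply the missing bound via a first-moment/union-bound calculation. That is a valid way to obtain the existence form of the claim: you correctly isolate why $k\geq 3$ is structurally necessary (under the constraint $ks\geq 2t$ the per-vertex decay rate works out to $\alpha^{(k-1)(s/t)-1}$ with exponent bounded below by $(k-2)/k$, which vanishes at $k=2$), and you correctly note that the dominant $O(1/\poly(m))$ contribution comes from the small-$t$ regime -- concretely two keys landing in the same $k$ cells, which has probability $\Theta(m^{2-k})$. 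What [gm11] actually does differently is appeal to the sharp $2$-core threshold literature for random $k$-uniform hypergraphs (Molloy; Pittel--Spencer--Wormald), which certifies the result up to the true peeling threshold $c^{*}_{k}$ (about $0.818$ at $k=3$); your bare first-moment bound only certifies a substantially smaller constant $c$, which is enough for the existential statement but does not recover the constant practitioners care about. The trade is self-containedness versus tightness of the constant.

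One inaccuracy worth flagging: the checksum step you add is not needed for this theorem and conflates two failure modes the paper deliberately keeps separate. Theorem 2.1 concerns an IBLT built by insertions only, so counts are nonnegative and a cell count of $1$ already certifies a unique key with no possibility of a spurious XOR coincidence; there is no additive checksum term to absorb. The checksum, and its distinct $1/\poly(u)$ failure probability, enters only in the subtractive variant discussed after the theorem, where a count of $\pm 1$ can be produced by several colliding keys. Folding that into this theorem's failure budget is harmless but not what the statement requires.
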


Observe that if we allow the counts in an IBLT to become negative then we can ``delete'' keys that aren't actually in the table.  The IBLT can thus represent two disjoint sets, one for the added or ``positive'' keys and one for the deleted or ``negative'' keys.  The peeling process requires only a small modification to extract both sets.  When peeling, we now also peel cells with counts of $-1$ by adding the key back into the table.  As long as we only attempt to peel from cells that actually only have one key hashed there, both sets will be extracted.  Unfortunately, now a cell having a count of 1 or $-1$ might have multiple keys (some from each set) hashed there.  We use the checksum to check that a count of 1 or $-1$ corresponds to a single key.  By using enough bits for the checksum we can ensure that with suitably high probability, if the hash of the cell's key entry is equal to the cell's checksum, then there is only one key hashed to it.

Note that the IBLT extraction process now has two failure modes: peeling failures and checksum failures.  Peeling failures occur with at most $1 / \poly(m)$ and are entirely detectable as keys will remain in the IBLT that cannot be extracted.  Checksum failures occur with probability at most $1 / \poly(u)$ and may be undetectable, as an unlikely cascade of many of them in sequence can conceivably result in fully emptying the table but ending up with an incorrect set of keys.

The application of IBLTs to set reconciliation is immediate if Alice and Bob have an upper bound $d$ on the size of their set difference.  Alice constructs an $O(d)$ cell IBLT by adding each of her set elements to it.  She then sends it to Bob who deletes each of his set elements from it.  Bob then extracts all the keys from the IBLT.  Assuming the peeling succeeds, the extracted positive keys will be the set $S_A \setminus S_B$ and the extracted negative keys will be $S_B \setminus S_A$.  This gives us the following bound for set reconciliation.

\begin{corollary} \label{thm:setrecon}
Set reconciliation for known $d$ can be solved in 1 round using $O(d \log u)$ bits of communication and $O(n)$ time with probability at least $1-1/\poly(d)$.
\end{corollary}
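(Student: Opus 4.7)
The plan is to instantiate the IBLT machinery described just before the statement, with $m = \Theta(d)$ cells chosen to satisfy the hypothesis of \autoref{thm:iblt}. Specifically, I would have Alice build an IBLT with $m = d/c$ cells (where $c$ is the constant from \autoref{thm:iblt}), insert each element of $S_A$ as a positive key, and send the resulting table to Bob in a single message. Bob then ``deletes'' each element of $S_B$ from the received IBLT (decrementing counts and XORing in keys/checksums), so that the only keys remaining with nonzero count are those in $S_A \setminus S_B$ (positive) and $S_B \setminus S_A$ (negative). Since $|S_A \oplus S_B| \le d = cm$, \autoref{thm:iblt} applies directly to the symmetric-difference content.

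Next I would verify the three claimed bounds. For communication, the IBLT stores $O(m) = O(d)$ cells, each holding a count, an XOR of keys, and an $O(\log u)$-bit checksum; since $w = \Omega(\log u + \log n)$, the total size is $O(d \log u)$ bits. For runtime, Alice's insertions and Bob's deletions each cost $O(k) = O(1)$ per element for a constant number of hash functions, giving $O(n)$ time overall; the peeling phase runs in time linear in $m$, which is absorbed into $O(n)$ since $d \le n$. For the success probability, I combine the two failure modes discussed in the excerpt: peeling failure occurs with probability $1/\poly(m) = 1/\poly(d)$, and a checksum failure over the $O(d)$ peeling steps occurs with probability at most $O(d)/\poly(u) = 1/\poly(d)$ (choosing the checksum length large enough relative to $u$). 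A union bound yields the stated $1 - 1/\poly(d)$.

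The only real subtlety — and hence the ``main obstacle'' if one can call it that — is confirming that the modified peeling process for IBLTs containing both positive and negative keys still falls under \autoref{thm:iblt}'s hypergraph 2-core analysis. This is immediate because the 2-core argument is insensitive to the sign of the count at a cell: what matters is the hypergraph of (cell, key) incidences, which has $|S_A \oplus S_B| \le cm$ hyperedges regardless of sign. With that observation the corollary follows directly.
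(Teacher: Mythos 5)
Your proposal matches the paper's argument exactly: Alice encodes $S_A$ in an $O(d)$-cell IBLT, Bob deletes $S_B$, and peeling recovers the symmetric difference, with the communication, time, and failure-probability bounds following directly from \autoref{thm:iblt} and the surrounding discussion of checksum and peeling failures. Your added observation that the 2-core analysis is sign-insensitive is a reasonable way of justifying the extension to positive/negative keys, which the paper handles in the same spirit in its preliminaries.
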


We refer to ``encoding'' a set in an IBLT as inserting all of its elements into it.  We similarly ``decode'' a set difference from an IBLT by extracting its keys.  Also, in the above protocol it would be equivalent to have Bob delete all of his keys from an empty IBLT, and then combine Alice and Bob's IBLTs by adding/XORing the entries in each cell into an IBLT representing the difference between the sets.  We can in this way ``decode'' a set difference from a pair of IBLTs. 
We can ``recover'' a set by decoding a pair of IBLTs and applying the set difference to one of the sets the IBLTs were encoded from.  We often ward against checksum failures by augmenting the set recovery process with a hash of each of the sets.  Assuming the uniqueness of the hash, we can then detect checksum failures by comparing the recovered set to its hash.

As discussed in the previous section, there is another approach to set reconciliation which uses the characteristic polynomials of the two sets.  Although this is less efficient computationally than an IBLT, it succeeds with probability 1, which can be useful as a subroutine for set of sets reconciliation when working with very small set differences.

\begin{theorem}[\cite{minsky2003set}]
\label{thm:setreconpoly}
Set reconciliation for known $d$ can be solved in 1 round using $O(d \log u)$ bits of communication and $O(n \min(d, \log^2 n)+d^3)$ time with probability 1.
\end{theorem}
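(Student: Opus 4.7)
The plan is to instantiate the characteristic polynomial approach already sketched in the related work section. Alice and Bob publicly agree on a prime $p > u + d$ (so $\log p = O(\log u)$) and on $d+1$ evaluation points $z_0, \ldots, z_d \in \mathbb{F}_p$ chosen to lie outside the universe $[u]$; such points exist by the size of $p$. Each party forms the characteristic polynomial of their set, $\chi_{S_A}(z) = \prod_{a \in S_A}(z-a)$ and $\chi_{S_B}(z) = \prod_{b \in S_B}(z-b)$, over $\mathbb{F}_p$. Alice sends Bob the $d+1$ values $\chi_{S_A}(z_i)$, a total of $O(d \log u)$ bits. Bob evaluates $\chi_{S_B}$ at the same points (nonzero since $z_i \notin [u] \supseteq S_B$), and forms the ratios $r_i = \chi_{S_A}(z_i)/\chi_{S_B}(z_i)$. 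After cancellation of the common factor $\chi_{S_A \cap S_B}$, these ratios equal $\chi_{S_A \setminus S_B}(z_i)/\chi_{S_B \setminus S_A}(z_i)$, a rational function whose numerator and denominator each have degree at most $d$.

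For the evaluation step I would present two options and take the cheaper. Horner's rule gives $O(n)$ per point and therefore $O(nd)$ for all $d+1$ points. Alternatively, one can first expand $\chi_{S_A}$ out using a balanced product tree with FFT-based polynomial multiplication at each level in $O(n \log^2 n)$ time, and then perform multipoint evaluation at $z_0, \ldots, z_d$ in $O((n+d)\log^2(n+d))$ time via the classical subproduct-tree algorithm; for $d \le n$ this totals $O(n \log^2 n)$. Taking the minimum of the two running times gives $O(n \min(d, \log^2 n))$ on each side.

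To recover the rational function, Bob looks for polynomials $p, q$ of degree at most $d$ with $q$ normalized (e.g., monic) satisfying $p(z_i) - r_i\, q(z_i) = 0$ for $i = 0, \ldots, d$. This is a linear system in the $2d+1$ remaining coefficients; by the degree bound the solution is unique up to scaling and can be found in $O(d^3)$ time by Gaussian elimination. Uniqueness then forces $p$ to be a scalar multiple of $\chi_{S_A \setminus S_B}$ and $q$ of $\chi_{S_B \setminus S_A}$, so extracting roots of $p$ and $q$ recovers both symmetric-difference sides, and Bob outputs $S_A = (S_B \setminus \text{roots}(q)) \cup \text{roots}(p)$.

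The step I expect to be subtle, rather than algebraically hard, is justifying probability $1$: the linear system solve is exact and deterministic, but the final root-finding is what must never fail. I would handle this by noting that both $p$ and $q$ split completely into distinct linear factors over $\mathbb{F}_p$ (their roots lie in $[u] \subseteq \mathbb{F}_p$), and invoke a deterministic root-finding procedure within the $O(d^3)$ budget; equivalently, because Bob knows $S_B$, he can identify $\text{roots}(q) = S_B \setminus S_A$ in $O(nd)$ time by evaluating $q$ at each $b \in S_B$, and then uses Berlekamp-style deterministic factorization on the degree-$\le d$ polynomial $p$ to extract $S_A \setminus S_B$ in $O(d^3)$ field operations. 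All subsequent steps are entirely deterministic, so once the correct $r_i$ values are transmitted the protocol succeeds with probability $1$, giving the stated bounds.
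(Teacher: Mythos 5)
Your proposal reconstructs exactly the Minsky--Trachtenberg--Zippel characteristic-polynomial protocol that the paper sketches: $d+1$ evaluations for $O(d\log u)$ communication, a choice between Horner's rule ($O(nd)$) and product-tree coefficient expansion followed by fast evaluation ($O(n\log^2 n)$) giving $O(n\min(d,\log^2 n))$, and $O(d^3)$ Gaussian elimination for the rational interpolation. You are in fact somewhat more careful than the paper's one-paragraph justification -- you separate the interpolation step from root extraction, choose evaluation points explicitly outside $[u]$ so that $\chi_{S_B}(z_i)\neq 0$, and observe that Bob can locate $\mathrm{roots}(q)$ by testing against his own set -- but these are refinements of, not departures from, the same argument.
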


The running time comes from the time to compute the roots of the ratio of polynomials plus the time to evaluate the degree $n$ polynomials at $d$ points.  Computing the roots can be done in $O(d^3)$ via Gaussian elimination. (This can actually be done in time $O(d^{\omega+o(1)})$, where $\omega$ is the exponent for matrix inversion, although this approach is generally less efficient in practice.)  Evaluating the polynomials at $d$ points can be done one of two ways.  We can evaluate the polynomial in $O(n)$ time once for each of the points.  Alternatively, we can compute the coefficients of the polynomial in $O(n \log^2 n)$ steps via divide and conquer and polynomial multiplication using the Fast Fourier Transform.  Once we have the coefficients, we can evaluate $d \leq n$ roots of unity in $O(n \log n)$ time, again using the Fast Fourier Transform.

\section{Reconciling Sets of Sets}
\label{sec:sos}

In this section we discuss several protocols for reconciling sets of sets.  One tool we will frequently use is a \emph{set difference estimator}.  Such estimators were developed in \cite{eppstein2011s} for reconciliation problems;  their goal is to obtain a reasonably good estimate of the number of differences $d$, since our reconciliation algorithms generally require such a bound is known.  We improve on the estimators of \cite{eppstein2011s}, as we describe below.  

A set difference estimator is a data structure for estimating the size of the difference between two sets.  It implicitly maintains two sets $S_1$ and $S_2$ and supports three operations: update, merge, and query.  Update takes in an element $x$ and an index $i \in \{1,2\}$ and adds $x$ to $S_i$.  Merge takes in a second set difference estimator $D'$, which implicitly maintains sets $S'_1$ and $S'_2$ and returns a new set difference estimator $D''$ representing $S_1 \cup S'_1$ and $S_2 \cup S'_2$.  Query returns an estimate for $|S_1 \oplus S_2|$.

\begin{theorem} \label{thm:strata}
There is a set difference estimator requiring \\$O(\log(1/\delta) \log n)$ space with $O(\log(1/\delta))$ update, merge, and query times, which reports the size of the set difference to within a constant factor with probability at least $1 - \delta$.
\end{theorem}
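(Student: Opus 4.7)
The plan is to build the estimator as $k = \Theta(\log(1/\delta))$ independent copies of a base strata estimator adapted from \cite{eppstein2011s}, and to return the median of the per-copy estimates. In a single copy I will fix a public hash function $h$ and assign each element $x$ to stratum $i$ whenever $h(x)$ has exactly $i$ leading zero bits, so an element is sampled into stratum $i$ with probability $2^{-(i+1)}$, independently across elements. Each of the $L = O(\log n)$ strata is equipped with a constant-size IBLT of $c$ cells (with $c$ chosen per Theorem~\ref{thm:iblt}), into which elements of $S_1$ are inserted as positive keys and elements of $S_2$ as negative keys, so that after a merge each stratum's IBLT contains the sampled portion of $S_1 \oplus S_2$. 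Updates, merges, and queries will all operate stratum-by-stratum using the underlying IBLT operations.

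For correctness I will focus on the critical stratum $i^\star$ at which the expected sampled symmetric difference $d/2^{i^\star+1}$ is $\Theta(c)$, so that the IBLT there is loaded at a constant fraction of capacity. By Theorem~\ref{thm:iblt} the IBLT at stratum $i^\star$ decodes with constant probability, and a Chernoff bound shows that the decoded count multiplied by $2^{i^\star+1}$ lies within a constant factor of $d$, also with constant probability. A query walks the strata top-down, reporting the scaled count at the smallest-indexed stratum whose IBLT decodes. Boosting this constant per-copy success probability to $1-\delta$ is then a standard median-of-means argument over the $k$ independent copies, using distinct hash functions across copies.

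The space bound follows immediately, since each copy uses $O(\log n)$ strata of $O(1)$ cells, giving $O(\log(1/\delta)\log n)$ total. Update time is also immediate: each insertion touches exactly one stratum per copy, for $O(\log(1/\delta))$ work overall. The main obstacle I expect is meeting the $O(\log(1/\delta))$ bound for \emph{merge} and \emph{query}, since a naive implementation pays an extra $O(\log n)$ factor from processing every stratum. Closing this gap is presumably the improvement over \cite{eppstein2011s} that the authors allude to, and I anticipate it will need two ingredients: for merge, maintaining each copy in a format where only a constant number of strata are actually touched (for example by lazily deferring XORs into strata that have not yet been probed); and for query, locating the critical stratum via a doubling or binary-search style scan that performs only $O(1)$ IBLT decode attempts in expectation once a coarse upper bound on $n$ is available. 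Verifying that this lazy, amortized scheme still yields a constant-factor estimate, and correctly handling the boundary cases where $d$ is very small (only stratum~$0$ has any load) or so large that even stratum $L$ fails to decode (where one can fall back on a deterministic bound of $\Theta(n)$), is the part of the argument I expect to take the most care.
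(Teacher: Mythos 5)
Your proposal essentially re-derives the strata estimator of \cite{eppstein2011s} that \autoref{thm:strata} is explicitly claiming to \emph{improve upon}, and it does not close either of the two gaps that the theorem states are being closed.

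The space bound is the more serious problem. Each cell of your per-stratum IBLTs must store an XOR of keys, which requires $\Omega(\log u)$ bits, so a single copy of your estimator occupies $\Theta(\log n \cdot \log u)$ bits and the whole structure costs $\Theta(\log(1/\delta)\log n \log u)$ --- exactly the extra $O(\log u)$ factor that the theorem statement is shedding. The paper's construction avoids storing keys entirely: it uses the RoughL0Estimator framework of Kane, Nelson and Woodruff, but replaces its per-level subroutine with one exploiting the fact that here every coordinate lies in $\{-1,0,1\}$. Because the goal is only to \emph{count} the nonzero coordinates, not to recover them, each level can be a tiny hash table of $\Theta(c^2)$ buckets holding \emph{2-bit counters} maintaining a sum modulo 4; if no collisions occur, the number of nonzero buckets equals the $\ell_0$-norm. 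That brings a whole copy down to $O(\log n)$ bits, i.e.\ $O(1)$ machine words. Replacing ``decode the IBLT at the critical stratum'' with ``count nonzero small counters at the critical level'' is the missing idea, and it is not a minor optimization --- it is what makes the stated space bound achievable at all.

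The merge/query time claim is also unsupported as written. You correctly identify that a naive top-down walk pays $\Theta(\log n)$ per query, but the ``lazy XOR deferral plus doubling search'' sketch you offer is not an argument: a merge genuinely must combine contributions at every stratum that either operand has touched, and a doubling search does not obviously find the critical stratum in $O(1)$ expected probes since you have no a priori handle on where it sits. The paper's route is categorically different: because each copy now fits in $O(1)$ words, merging is literally a constant number of word additions followed by masking off per-bucket padding bits, and a query computes in parallel (via $O(1)$ word operations) a single indicator bit per level and then extracts the relevant level with a constant-time least-/most-significant-bit instruction. This bit-parallelism is only available once the representation is key-free; it cannot be grafted onto an IBLT-per-stratum design. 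In short, the construction you propose cannot meet the stated bounds without being replaced wholesale by a counter-based sketch.
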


This improves over the ``strata estimators'' of \cite{eppstein2011s} which take an additional $O(\log u)$ factor for space and an additional $O(\log n)$ factor for query and merge times.  Our estimators are built using techniques from streaming $\ell_0$-norm estimation and are discussed more in \autoref{app:strata}.

When engaging in set reconciliation without a known bound on $d$, Bob can send Alice a set difference estimator with all of his elements added to $S_1$.  Alice then creates a set difference estimator with her own set's elements in $S_2$, and then merges the two estimators and queries the merged estimator.  She then uses the estimate as a bound on $d$ for the protocol of \autoref{thm:setrecon}.  This gives us the following corollary.

\begin{corollary} \label{thm:setrecon2}
Set reconciliation for unknown $d$ can be solved in 2 rounds using $O(d \log u)$ bits of communication and $O(n \log d)$ time, with probability at least $1-1/\poly(d)$.
\end{corollary}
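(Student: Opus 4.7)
The plan is to directly realize the two-round protocol sketched in the paragraph immediately preceding the corollary. In round one, Bob instantiates a set difference estimator from \autoref{thm:strata} with all of his elements added to $S_1$ and transmits it to Alice. Alice initializes her own estimator with her elements in $S_2$, merges the two, and queries to obtain an estimate $\hat{d}$ of $|S_A \oplus S_B|$. In round two, she invokes the IBLT-based protocol of \autoref{thm:setrecon} using $C\hat{d}$ as the known set-difference bound (where $C$ is a sufficiently large constant chosen to absorb the estimator's constant-factor slack) and transmits the resulting IBLT to Bob, who decodes to recover $S_A$.

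For correctness, I would instantiate the estimator with failure parameter $\delta = 1/n^c$ for a large enough constant $c$; since $d \leq n$, this gives failure probability at most $1/\poly(d)$. Conditioned on the estimator returning a value within a constant factor of $d$, the supplied bound $C\hat{d}$ is a valid upper bound on the true set difference, so \autoref{thm:setrecon} guarantees recovery of $S_A$ in round two with probability $1 - 1/\poly(d)$. A union bound over the estimator and IBLT failure events yields overall success probability $1 - 1/\poly(d)$.

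The first round transmits $O(\log(1/\delta)\log n) = O(\log^2 n)$ bits, which is dominated by the $O(\hat{d}\log u) = O(d\log u)$ bits of the second round; total communication is thus $O(d\log u)$. The running time is dominated by the $O(n\log(1/\delta))$ estimator updates and query plus the $O(n)$ IBLT construction and decoding, giving the claimed $O(n\log d)$ bound. The main (mild) subtlety I expect to be the one requiring care is that one wants $\delta$ to depend on $d$ for the sharpest time bound even though $d$ is unknown a priori; this can be handled by having Bob pack a short constant-$\delta$ sketch alongside the main estimator so that Alice can first extract a crude bound on $d$ and then select the appropriate precision from a logarithmically sized family of estimators, all within the single first-round message, thereby keeping the protocol at two rounds.
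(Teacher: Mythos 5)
Your protocol is exactly the one the paper sketches in the paragraph preceding the corollary: Bob sends a set-difference estimator from \autoref{thm:strata}, Alice merges, queries, and then runs the IBLT protocol of \autoref{thm:setrecon} with the resulting estimate. The communication and correctness analysis is sound. You also correctly flag the real subtlety the paper elides: to get $O(n\log d)$ time via $\autoref{thm:strata}$'s $O(\log(1/\delta))$ per-update cost, one must set $\delta = 1/\poly(d)$, but $d$ is unknown to Bob when he builds his estimator in round one.

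Where your proposal has a gap is in the proposed fix. Packing a constant-$\delta$ sketch together with a ``logarithmically sized family of estimators'' so Alice can pick the right precision does not help \emph{Bob's} running time: Bob still has to build the whole family before knowing $d$, and since the family must cover precisions down to $\delta = 1/\poly(n)$ to handle all possible $d \leq n$, his construction time is $\Theta(n \log n)$, not $O(n\log d)$. (Alice, who learns a crude $\widehat d$ before processing her own set, can indeed pick the appropriate precision, but Bob cannot.) So your first instinct --- simply setting $\delta = 1/n^c$ --- already captures what the approach genuinely gives, namely $O(n\log n)$ time and failure probability $1/\poly(n) \leq 1/\poly(d)$; your second, more elaborate workaround does not actually improve the dependence to $\log d$. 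If you want a clean write-up you should either accept the $O(n\log n)$ time bound, or explicitly flag that the stated $O(n\log d)$ bound would require an estimator whose accuracy can be refined adaptively without Bob pre-committing to $\delta$, which \autoref{thm:strata} as stated does not provide.
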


\subsection{Na\"ive Protocol}
In the problem of reconciling sets of sets, Alice and Bob each have a parent set of at most $s$ child sets, each containing at most $h$ elements from a universe of size $u$.  The sum of the sizes of the child sets is at most $n$.  Alice's set of sets is equal to Bob's after a series of at most $d$ element additions and deletions to Bob's child sets.  Let $\q$ be an upper bound on the number of child sets that differ between Alice and Bob.  In general, we may not have such a bound in which case we use $\q = \min(d,s)$.  We wish to develop protocols at the end of which, Bob can fully recover Alice's set of sets.

An equivalent way of defining $d$ is as the value of the minimum cost matching between Alice and Bob's child sets, where the cost of matching two sets is equal to their set difference.  Our protocols actually solve a slightly stronger problem than this.  All of our bounds hold for the setting where $d$ is the sum over each of Alice and Bob's child sets of their minimum set difference with one of the other party's child sets.  In other words, each child set needs to be mapped to at least one of the other party's child sets, but it doesn't have to be mapped to exactly one.

We assume that the word size is $w = \Omega(\log u + \log n)$.  We explore two versions of the problem, one in which the value $d$ is known (or a good upper bound on it is) which we call SSRK (Set of Sets Reconciliation for Known $d$) and the second version where $d$ is unknown, which we call SSRU (Set of Sets Reconciliation for Unknown $d$).

The simplest approach to reconciling sets of sets is to ignore the fact that the items are sets.  Instead we can just treat each child set as an item from a universe of size 
$$\sum_{i=0}^h {u \choose i} = O(\min(u^h, 2^u))$$ 
and directly reduce to set reconciliation (\autoref{thm:setrecon} and \autoref{thm:setrecon2}).  This approach gives the following results.

\begin{theorem} \label{thm:ssrnaive}
SSRK can be solved in one round using $$O(\q \min(h \log u, u))$$ bits of communication and $O(n)$ time with probability at least $1 - 1/\poly(\q)$.
\end{theorem}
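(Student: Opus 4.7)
The plan is to reduce directly to ordinary one-round set reconciliation (\autoref{thm:setrecon}) by treating each child set as a single atomic element drawn from an enlarged universe.

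First, define the enlarged universe $\mathcal{U}=\{C\subseteq[u]:|C|\le h\}$. Since $|\mathcal{U}|=\sum_{i=0}^{h}\binom{u}{i}=O(\min(u^h,\,2^u))$, each child set admits an explicit encoding in $O(\min(h\log u,\,u))$ bits---either as a sorted list of at most $h$ indices of $\lceil\log u\rceil$ bits each, or as a $u$-bit characteristic vector, whichever is shorter. Under this encoding, Alice's and Bob's parent sets become ordinary subsets $\widetilde{S}_A,\widetilde{S}_B\subseteq\mathcal{U}$. Because $\q$ upper-bounds the number of child sets that differ under the optimal matching, $|\widetilde{S}_A\oplus\widetilde{S}_B|\le 2\q$: each matched pair of differing child sets contributes Alice's version to $\widetilde{S}_A\setminus\widetilde{S}_B$ and Bob's version to $\widetilde{S}_B\setminus\widetilde{S}_A$.

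Next, apply \autoref{thm:setrecon} to $(\widetilde{S}_A,\widetilde{S}_B)$ with difference bound $2\q$ over the universe $\mathcal{U}$. This immediately yields a one-round protocol using $O(\q\,\log|\mathcal{U}|)=O(\q\min(h\log u,\,u))$ bits of communication and succeeding with probability at least $1-1/\poly(\q)$, matching the stated bounds.

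The one point worth care is the $O(n)$ running time, since a naive constant-width encoding of $\min(h\log u,\,u)$ bits per child set would cost $\Theta(s\min(h,\,u/w))$ per party and can exceed $n$ when child sets are imbalanced. To achieve $O(n)$, use a variable-length encoding: for each child set $C$, compute the $O(1)$ IBLT hash values (from \autoref{thm:iblt}) and XOR its variable-length representation into each selected cell in $O(|C|)$ word operations, which is justified by the word-RAM assumption $w=\Omega(\log u + \log n)$. Summing over Alice's child sets (and analogously over Bob's during the deletion phase) gives $O(\sum_i|C_i|)=O(n)$ total work. This bookkeeping is the only substantive step; there is no real obstacle, as the theorem is essentially a packaging of sets-of-sets reconciliation as a standard set reconciliation instance over the blown-up universe $\mathcal{U}$.
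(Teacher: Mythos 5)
Your proof is correct and follows exactly the route the paper sketches (the paper offers no formal proof, only the observation that each child set can be treated as a single item from the universe of all subsets of $[u]$ of size at most $h$, followed by a direct appeal to \autoref{thm:setrecon}). Your additional care about the $O(n)$ running time via a variable-length encoding is a reasonable filling-in of a detail the paper leaves implicit.
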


\begin{theorem} \label{cor:ssrunaive}
SSRU can be solved in two rounds using $$O(\q \min(h \log u, u))$$ bits of communication and $O(n \log \q)$ time with probability at least $1-1/\poly(\q)$.
\end{theorem}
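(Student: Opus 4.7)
My plan is to apply Corollary~\ref{thm:setrecon2} in a black-box fashion, treating each child set as a single atomic element in a meta-universe of size $\sum_{i=0}^{h} \binom{u}{i} = O(\min(u^h,2^u))$. To do this I first canonicalize every child set so that identical child sets yield identical bit strings -- for example, by radix-sorting the $\log u$-bit elements and concatenating them, or by writing out the $\{0,1\}^u$ indicator vector, whichever is shorter. This gives a canonical encoding of at most $O(\min(h \log u, u))$ bits per child set, and the canonicalization itself can be done in linear time in the total input size.

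After canonicalization, Alice's and Bob's parent sets are ordinary sets of at most $s$ strings drawn from a universe of size $O(\min(u^h, 2^u))$, and their meta-level set difference is at most $\q$ by definition of $\q$. Invoking Corollary~\ref{thm:setrecon2} with these parameters immediately yields a 2-round protocol with success probability $1 - 1/\poly(\q)$ and communication $O(\q \cdot \log(\text{meta-universe size})) = O(\q \min(h \log u, u))$ bits. For the running time, both the set-difference estimator of Theorem~\ref{thm:strata} and the IBLT perform work proportional to the total bit length of their inputs, with an additional factor of $O(\log \q)$ from the estimator update cost when we set $\delta = 1/\poly(\q)$. Since using variable-length canonical forms keeps the total input length at $O(n)$ words, the total running time is $O(n \log \q)$.

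The only real subtlety is ensuring that every step -- hashing, XORing into IBLT cells, checksumming, and estimator updates -- treats each canonicalized child set as an opaque key of its own length, rather than manipulating individual child-set elements. This is purely a bookkeeping matter, but it is important to use variable-length (rather than zero-padded) canonical forms in order to avoid an $\Omega(sh)$ blowup when many child sets have far fewer than $h$ elements, which would otherwise exceed the $O(n \log \q)$ time budget.
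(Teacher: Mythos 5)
Your proposal matches the paper's intended argument exactly: the paper introduces this theorem by saying the child sets should be treated as atomic items from a meta-universe of size $\sum_{i=0}^h \binom{u}{i} = O(\min(u^h, 2^u))$ and the problem reduced directly to set reconciliation via Corollary~\ref{thm:setrecon2}; it gives no further proof. Your observation about using variable-length canonical encodings (rather than zero-padded fixed-length ones) to keep the total work at $O(n)$ words rather than $O(s\min(h, u/\log u))$ words is a correct and worthwhile detail that the paper leaves implicit.
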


\subsection{IBLTs of IBLTs}

To go beyond the na\"{i}ve approach, we use a more compact representation of the child sets.  Specifically, we encode each child set in an IBLT (a \emph{child IBLT}).  As each of Alice's child sets differs from one of Bob's child sets by at most $d$ elements, it is possible to recover all of the child sets with only $O(d)$ cells per child IBLT, instead of the $h$ elements per child set we needed with the na{\"{i}ve solution.  We also include in our encoding a hash of the child set, as it allows us to easily identify which child set corresponds to which encoding.

\begin{algorithm}[h]
\caption{IBLT of IBLTs Protocol}
\label{alg:ibltofiblts}
\begin{itemize}
\item Alice encodes each of her child sets as an $O(d)$ cell IBLT.  She also uses a $O(\log s)$-bit pairwise independent hash function to compute a hash of each of her child sets.  We call the (child IBLT, hash) pair a child set's \emph{encoding}.  Let $E_A$ be her set of child encodings.
\item Alice creates an $O(\q)$ cell IBLT $T$, and $E_A$ into it.  Alice sends $T$ to Bob.
\item Bob computes the set of his child encodings $E_B$ and deletes it from $T$.  He decodes $T$ to find $E_A \setminus E_B$ and $E_B \setminus E_A$.
\item Bob computes $D_B$, his set of child sets whose hashes match one of the hashes in $E_B \setminus E_A$.
\item For each child IBLT $T_A \in E_A \setminus E_B$ Bob tries to decode it with each child IBLT $T_B \in E_B \setminus E_A$.  If there is no child IBLT $T_B$ with which it decodes, report failure.  Otherwise, Bob recovers Alice's child set corresponding to $T_A$ by applying the decoded set difference to his own child set from $D_B$ corresponding to $T_B$.  Bob creates $D_A$, the set of Alice's child sets that he recovers in this way.
\item Bob removes $D_B$ from his set of sets, and adds $D_A$.
\end{itemize}
\end{algorithm}

\begin{theorem} \label{thm:basic_compacting}
Algorithm \ref{alg:ibltofiblts} solves SSRK in one round using $$O( \q d \log u + \q \log s)$$ bits of communication and $O(n + \q^2 d)$ time with probability at least $1 - 1/\poly(\q)$.
\end{theorem}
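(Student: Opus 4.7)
The plan is to establish three things about Algorithm~\ref{alg:ibltofiblts}: the communication bound, correctness (holding jointly with probability $1-1/\poly(\q)$), and the running-time bound. The analysis leans on \autoref{thm:iblt} applied at two levels---to the outer IBLT $T$ and to the inner per-child-set IBLTs---together with a hash-collision bound and an accounting of checksum failures.

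For communication, Alice sends only $T$, which has $O(\q)$ cells. Each cell XORs child encodings of width $O(d\log u + \log s)$: an $O(d)$-cell child IBLT with $O(\log u)$-bit cells plus an $O(\log s)$-bit hash. Multiplying by $O(\q)$ cells yields $O(\q d\log u + \q \log s)$ bits as claimed.

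Correctness breaks into four sub-claims. (i) $T$ must decode: since at most $\q$ child encodings differ between Alice and Bob, after Bob subtracts his encodings $T$ has at most $2\q$ uncancelled keys, so \autoref{thm:iblt} (with the constant hidden in the $O(\q)$ cell count chosen appropriately) guarantees peeling succeeds with probability $1-1/\poly(\q)$. (ii) Each Alice-only encoding $T_A \in E_A \setminus E_B$ must decode against the ``correct'' Bob-only encoding $T_B$ whose underlying child set differs by at most $d$ elements; such a $T_B$ exists by the minimum-matching hypothesis, and the combined inner IBLT then contains at most $2d$ keys in $O(d)$ cells, so \autoref{thm:iblt} again gives failure probability $1/\poly(d) \leq 1/\poly(\q)$, after which XORing with the matched child set in $D_B$ yields exactly Alice's child set. (iii) Pairwise-independent hashing with $\Theta(\log s)$ bits makes each pair of child-set hashes collide with probability $1/\poly(s)$; a union bound over the $O(s^2)$ pairs controls the total hash-collision failure probability to $1/\poly(\q)$ for a suitable leading constant, so Bob can unambiguously read off $D_B$ and apply the recovered differences. (iv) Across the $O(\q^2)$ attempted inner decodings, a cascade of checksum failures could make a ``wrong'' pair decode to a plausible-looking but incorrect symmetric difference; each attempt has checksum failure probability $1/\poly(u)$, and a union bound with a sufficiently large checksum length (absorbed into the $O(\log u)$ factor in the communication bound) keeps this at $1/\poly(\q)$.

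For running time, Alice and Bob each spend $O(n)$ time inserting their child-set elements into the child IBLTs; decoding $T$ is linear in its size. The dominant new cost is Bob's pairwise inner-decoding phase, which runs up to $\q \cdot \q$ attempts, each consisting of an $O(d)$-time XOR of two child IBLTs followed by an $O(d)$-time peel, for $O(\q^2 d)$ total, giving the claimed $O(n + \q^2 d)$ bound. The main technical subtlety I expect is managing the joint high-probability guarantee: the outer peeling, the $O(\q)$ correct inner peelings, the hash collisions, and the $O(\q^2)$ cross-pair checksum failures involve different random structures, and choosing the constants consistently (particularly making the checksum length depend on $\q$, and the IBLT constants in both levels large enough) is what makes the simultaneous $1-1/\poly(\q)$ bound go through.
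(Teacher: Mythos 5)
Your proof takes essentially the same approach as the paper: bounding communication per IBLT cell, splitting correctness into hash-collision, outer-IBLT peeling, and inner-IBLT peeling failure events (using $\q \leq \min(d,s)$ to collapse the three $1/\poly(\cdot)$ terms into $1/\poly(\q)$), and attributing the $O(\q^2 d)$ time term to the pairwise inner-decoding phase. Your item (iv) on checksum failures across the $O(\q^2)$ cross-pair decoding attempts is a careful addition that the paper's own proof silently elides---the paper's correctness argument never accounts for the possibility of a wrong pair spuriously decoding, so your version is, if anything, slightly more complete.
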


\begin{proof}  
The only communication is through the transmission of $T$.  Each child encoding can be represented in $O(d\log u + \log s)$ space and $|E_A \oplus E_B| \leq \q$, so the communication cost is $O(\q d\log u + \q \log s)$, and $T$ decodes with probability at least $1 - 1 / \poly(\q)$ by \autoref{thm:iblt}.

The running time for computing the encodings, inserting them into/deleting them from $T$, and  decoding $T$ is $O(n)$.  The time for Bob to recover all of Alice's differing child sets is $O(\q^2 d)$, since there are $|E_A \setminus E_B| \cdot |E_B \setminus E_A| = O(\q^2)$, pairs of child IBLTs to attempt to decode, and each decoding takes $O(d)$ time.  Bob's remaining operations can then all be done in $O(n)$ time, giving us a total time of $O(n + \q^2 d)$.

This protocol succeeds so long as none of the child hashes collide, $T$ decodes, and each child IBLT in $E_A \setminus E_B$ successfully decodes with at least one child IBLT in $E_B \setminus E_A$.  There are $O(s^2)$ pairs of child hashes, and each pair collides with probability $1 / \poly(s)$, so by choosing the constant in our $O(\log s)$-bit hash large enough, none of them collide with probability at least $1 - 1 / \poly(s)$.  $T$ fails to decode with probability at most $1 / \poly(d)$.  Each of Alice's differing child sets is within $d$ elements of one of Bob's differing child sets, so for each child IBLT in $E_A \setminus E_B$, there is at least one child IBLT in $E_B \setminus E_A$ with which it will decode with probability at least $1 - 1/\poly(d)$.  As $|E_A \setminus E_B| \leq \q$, all child IBLTs in $E_A \setminus E_B$ decode with probability at least $1 - 1/\poly(d)$.  Therefore, the protocol fails with probability at most
$$1 / \poly(s) + 1 / \poly(\q) + 1/\poly(d) = 1/\poly(\q).$$
\end{proof}

Here we have focused on reconciling sets of sets with a small number of total differences.  However, we note that this approach can also be used to reconcile sets of sets where some child sets may not have a close match.  For example, in our proposed application for reconciling collections of documents, we would expect most documents to be exact duplicates, some to be near-duplicates, and some to be ``fresh'', non-duplicate documents.  We could use the approach of Theorem~\ref{thm:basic_compacting} to find near-duplicate and non-duplicate documents;  non-duplicate documents would yield child IBLTs that could not be decoded when compared against other child IBLTs, which could then be later transmitted directly.

When $d$ is unknown, we cannot simply use a single set difference estimator as we need to know both a bound on the number of differing child sets, and the maximum number of differences between a child set and its closest match on the other side.  We instead handle this scenario through the standard repeated doubling trick of trying the protocol for $d = 1,2,4,8,\ldots$ until the protocol succeeds.  This does not affect the asymptotic communication cost, but it does require $O(\log d)$ rounds of communication and gives us the following result.

\begin{corollary} \label{cor:basic_compacting}
SSRU can be solved in $O(\log d)$ rounds using $$O(\q d \log u + \q \log s)$$ bits of communication and $O(n \log d + \q^2 d)$ time with probability at least $1 - 1/\poly(\q)$.
\end{corollary}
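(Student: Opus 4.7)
The plan is to invoke the repeated-doubling trick alluded to just before the statement. Alice and Bob run Algorithm \ref{alg:ibltofiblts} with guessed value $d_i = 2^i$ (and with $\q_i = \min(d_i,s)$ in place of $\q$) for $i = 0, 1, 2, \ldots$, halting at the first iteration in which Bob's reconstruction verifies. The key qualitative fact is that Algorithm \ref{alg:ibltofiblts} is guaranteed to succeed (up to its inherent failure probability) once $d_i \geq d$, so the loop halts within $\lceil \log d \rceil + 1 = O(\log d)$ iterations.

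To detect success, I would have Alice attach a short pairwise-independent hash of her set of sets (under a fixed canonical serialization) to each of her messages. After performing the decoding and reconstruction steps of Algorithm \ref{alg:ibltofiblts}, Bob rehashes his candidate reconstruction and compares the result to Alice's hash; agreement signals success, and disagreement triggers a request to move on to $d_{i+1} = 2 d_i$. An $\Omega(\log \q)$-bit hash makes the per-iteration false-match probability at most $1/\poly(\q)$, so a union bound over the $O(\log d)$ iterations preserves an overall failure probability of $1/\poly(\q)$ after mildly inflating the constants used in the outer hash and the child IBLT checksums from Theorem \ref{thm:basic_compacting}.

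The resource accounting is then straightforward. Each iteration uses a constant number of rounds, yielding $O(\log d)$ rounds in total. Iteration $i$ costs $O(\q_i d_i \log u + \q_i \log s)$ bits of communication by \autoref{thm:basic_compacting}; since the $d_i$ form a geometric series of sum $O(d)$ and the $\q_i$ are non-decreasing and bounded by $O(\q)$ at the final (successful) iteration, summing yields $O(\q d \log u + \q \log s \log d)$, with the trailing $\log d$ factor absorbed into the leading $\q d \log u$ term under our word-size assumption. For time, iteration $i$ costs $O(n + \q_i^2 d_i)$: the $n$ accumulates to $O(n \log d)$, while $\q_i^2 d_i$ is a geometric series dominated by the last iteration, contributing $O(\q^2 d)$.

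The only subtle point, and the one I would focus on carefully, is the probabilistic bookkeeping for the verification step: we must ensure that undetected failures, both IBLT checksum failures inside each child IBLT and false matches of the outer verification hash, stay rare enough across $O(\log d)$ attempts that the union bound still yields failure probability $1/\poly(\q)$. That works out by scaling the checksum and verification-hash lengths by a constant factor relative to the single-$d$ protocol, at no asymptotic cost in either communication or running time.
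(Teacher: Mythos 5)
Your proposal is correct and matches the paper's approach: the paper proves this corollary by exactly the repeated-doubling trick you describe, running the SSRK protocol for $d=1,2,4,\ldots$ until it succeeds. Your verification mechanism (Alice attaching a hash of her whole set of sets so Bob can detect failure and request the next power of two) fills in the detail the paper leaves implicit, and your geometric-series accounting for communication and time, and the union bound over $O(\log d)$ iterations, all check out.
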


So far we have used that there are at most $d$ differing child sets and each child set differs from another by at most $d$, but we have not exploited the fact that there are a total of $O(d)$ changes across all of the child sets, rather than $O(d^2)$.  We rectify this in the next result by observing that only $O(1)$ of the child sets need $\Omega(d)$ cells, $O(\sqrt{d})$ of the child IBLTs need $\Omega(\sqrt{d})$ cells, and so forth.

\begin{theorem} \label{thm:smarter_compacting}
Algorithm \ref{alg:ibltofiblts2} solves SSRK in one round using $$O(d\log \min(d,h) \log u + d\log s)$$ bits of communication and $O(n \log \min(d,h) + \q d\log \q)$ time with probability at least $2/3$.
\end{theorem}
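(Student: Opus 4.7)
The plan is to encode each child set at multiple resolutions, exploiting the budget constraint that the number of child set pairs whose matched-partner difference exceeds $2^i$ is at most $d/2^i$. Let $L = \lceil \log_2 \min(d,h) \rceil$. For each level $i \in \{0, 1, \ldots, L\}$, Alice will build an outer IBLT of size $\Theta(d/2^i)$ whose contents consist of level-$i$ child IBLTs (of size $\Theta(2^i)$) together with $O(\log s)$-bit child-set hashes. If the outer IBLT at each level carries at most $O(d/2^i)$ non-canceling entries of size $O(2^i \log u + \log s)$ bits, the per-level communication is $O(d \log u + (d/2^i) \log s)$, summing to $O(d \log \min(d,h) \log u + d \log s)$ as required.

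To control the load at level $i$ without knowing which pairs actually need resolution at that level, Alice and Bob use a public hash function to sample each child set into the level-$i$ outer IBLT with a probability that yields $O(d/2^i)$ expected non-canceling (differing) entries. The crucial structural fact is that the number of pairs with $|X \triangle Y| > 2^i$ is at most $d/2^i$, so an appropriately tuned sampling scheme ensures that every pair $(X,Y)$ with $|X \triangle Y| \le 2^i$ is likely to be jointly sampled at some level $i' \in [\lceil \log_2 |X \triangle Y| \rceil, L]$ while keeping each outer IBLT's load within capacity.

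Bob's decoding proceeds level by level. At each level he peels the outer IBLT difference to extract Alice's $(\text{hash}(X), T^{(i)}_X)$ pairs for her differing child sets; he then uses a hash table keyed on child-set hashes to locate his matching $(\text{hash}(Y), T^{(i)}_Y)$ in $O(\log \q)$ time. Because the XOR of two IBLTs of the same size (built with identical hash functions) is itself an IBLT of the symmetric difference, $T^{(i)}_X \oplus T^{(i)}_Y$ decodes in $O(2^i)$ time whenever $|X \triangle Y| \le 2^i$, recovering $X$ from Bob's stored $Y$. Summing matching and decoding costs over $\q$ pairs and $L$ levels gives the claimed $O(\q d \log \q)$ time, while building the child IBLTs at all levels costs $O(n \log \min(d,h))$.

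The main obstacle will be establishing constant success probability. There are two failure modes to control: (i) an outer IBLT's non-canceling load exceeds its $\Theta(d/2^i)$ capacity and peeling fails, and (ii) some differing pair $(X,Y)$ fails to be jointly sampled at any level $\ge \lceil \log_2 |X \triangle Y| \rceil$. By choosing the sampling probabilities and outer IBLT capacity constants with sufficient slack, Chernoff bounds give inverse-polynomial failure probability per level for (i), and the geometric sum of inclusion probabilities across the admissible levels suffices for (ii); a union bound over the $O(L)$ levels and the $\q$ pairs then yields overall success probability at least $2/3$ (the constant can be boosted by standard amplification, but is not needed here). Handling the edge case where $d/2^i < 1$ at high levels is resolved by padding the outer IBLT size to $\Theta(1)$, at negligible communication cost.
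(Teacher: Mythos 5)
Your high-level structure -- cascading levels $i = 1, \ldots, O(\log \min(d,h))$, with level-$i$ child IBLTs of size $\Theta(2^i)$ inside an outer IBLT of size $\Theta(d/2^i)$, and the accounting that there can be at most $d/2^{i-1}$ matched pairs with symmetric difference $\ge 2^{i-1}$ -- matches the paper's Algorithm~\ref{alg:ibltofiblts2}. However, the mechanism you propose for keeping the level-$i$ outer IBLT's load within $O(d/2^i)$ is fundamentally different from the paper's, and it has a gap that I do not think can be repaired.

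You propose non-adaptive public-coin \emph{sampling}: each party hashes each of its child sets and includes it at level $i$ with probability $p_i \approx \min(1, d/(2^i \q))$, so that the expected number of non-canceling (differing) entries at level $i$ is $O(d/2^i)$. The problem is that this sampling is content-based and therefore blind to which pairs actually need to be resolved at a given level. Identical child sets on the two sides hash identically and cancel harmlessly. But for a differing matched pair $(X, Y)$ with $|X \oplus Y| = k$, Alice's $X$ and Bob's $Y$ have different hashes, so they are sampled independently, and $X$ must be sampled at some level $i \ge \lceil \log_2 k \rceil$ for Bob to ever extract its encoding. The probability of that event is at most $\sum_{i \ge \lceil \log_2 k \rceil} p_i = O\!\left(d/(k \q)\right)$, which can tend to $0$. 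Concretely, take $\q \approx \sqrt{d}$ with one pair of difference $k \approx d/2$ and the remaining $\q - 1$ pairs of difference $\approx d/(2\q)$: the total number of changes is $\Theta(d)$, yet the heavy pair is sampled at an admissible level with probability $O(1/\sqrt{d})$, so the protocol fails with probability $1 - o(1)$. The structural fact you invoke (at most $d/2^i$ pairs have difference $> 2^i$) is true, but it tells you which pairs \emph{should} be placed at high levels, and a hash of the set contents has no way to preferentially select them.

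The paper avoids this entirely by \emph{not} sampling: Alice encodes every child set at every level, and the load reduction is achieved \emph{adaptively} on Bob's side during decoding. After processing $T_1, \ldots, T_{i-1}$, Bob has recovered a set $D_A$ of Alice's child sets; before decoding $T_i$ he deletes the level-$i$ encodings of his matching child sets \emph{and} of everything in $D_A$, so the effective load of $T_i$ is only Alice's still-unrecovered differing child sets. The proof shows (events $X_i, Y_i$) that each level recovers a constant fraction ($\ge 9/10$) of what remains, so this residual load shrinks geometrically and stays within the $O(d/2^i)$ capacity. This adaptivity is exactly what your one-shot sampling lacks, and it is the crux of why Theorem~\ref{thm:smarter_compacting} works. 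Your communication and time bookkeeping and your edge-case handling are fine, but the core probability argument needs to be replaced by (something like) the paper's cascading removal-and-peel analysis, including the $T_*$ table when $t = \log_2 h$.
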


\begin{proof}
Going forward we will condition on the event that there are no collisions among the $O(\log (st))$ bit hashes in the child encodings.  There are at most $2s$ child sets per round that can collide, so union bounding over the $t = \log_2 \min(d,h)$ rounds we have no collisions with probability at least $1-4s^2t/\poly(st)\geq 1-1/30$.  

Let us divide Alice's child sets into groups according to how many elements they differ by under the minimum difference matching.  $S_j$ is the set of Alice's child sets whose set difference with its match is in $[2^{j-1}, 2^j - 1]$.  First, observe that every one of Alice's differing child sets is included in some $S_j$ for $j \leq \log d + 1$.  Second, observe that $|S_j| \leq d / 2^{j-1}$ since the total number of element changes is at most $d$.

Consider the IBLT $T_i$. We choose the constant factor parameters of the child IBLTs such that if the child IBLT has fewer than $2^i$ items in it, it successfully decodes with probability at least $1 - \frac{1}{100 \cdot 2^i}$.  Let $Y_i$ be the event that IBLT $T_i$ successfully decodes.  Conditioned on $Y_i$, when processing to match up the child sets within $T_i$, in expectation Bob fails to recover at most $\frac{1}{100 \cdot 2^i}$ of Alice's child sets from $\cup_{j=1}^i S_j$ that he has not yet recovered.  By Markov's inequality, Bob recovers fewer than 9/10 of Alice's child sets in $\cup_{j=1}^i S_j$ that he has not already decoded with probability at most $\frac{1}{10 \cdot 2^i}$.  We use $X_i$ to refer to the event that processing $T_i$ results in Bob recovering at least $9/10$ of $\cup_{j=1}^i S_j$ that he had not previously recovered, so we have argued that
$$\Pr[X_i | Y_i] \geq 1 - \frac{1}{10 \cdot 2^i}.$$

Since there are at most $2d$ differing child sets in $T_1$, we can choose the IBLT's parameters so that $Y_1$ occurs with probability at least $1 - \frac{2}{10d}$.  For $i > 1$, conditioned on $\cap_{j=1}^{i-1}X_j$, the number of Alice's child sets left to be recovered after $T_i$ is processed is at most
\begin{align*}
\sum_{j=i+1}^t &|S_j| + \sum_{j=1}^i |S_j| 10^{j-i-1} \\
&\leq \sum_{j=i+1}^t d / 2^{j-1} + \sum_{j=1}^i d 10^{j-i-1} / 2^{j-1} \\
&\leq d / 2^{i-1} + d / 10^i \sum_{j=1}^i 5^{j-1} \\
&\leq d / 2^{i-1} + d / 2^{i+2} = (9 / 4) (d / 2^i).
\end{align*}
Since $T_i$ has $O(d / 2^i)$ cells, we can choose the constant factors in the order notation so that $Y_i$ occurs with probability at least $1 - \frac{2^i}{10 d}$ conditioned on $\cap_{j=1}^{i-1}X_j$. 

\begin{algorithm}[h]
\caption{Cascading IBLTs of IBLTs Protocol}
\label{alg:ibltofiblts2}
\begin{itemize}
\item For $i = 1, \ldots, t = \log_2 \min(d,h)$, Alice creates an ($O(2^i)$ cell child IBLT, $O(\log (st))$ bit hash) child encoding for each of her child sets and inserts it into an $O(d / 2^i)$ cell IBLT $T_i$.
\item If $t = \log_2 h$, Alice creates an $O(d/h)$ cell IBLT $T_*$ and inserts an $O(h \log u)$ bit encoding of each of her child sets into it.
\item Alice sends $T_1,\ldots,T_t$ and $T_*$ to Bob.
\item Bob deletes ($O(1)$ cell IBLT, hash) encodings of each of his child sets from $T_1$, and then extracts all of the different child encodings from it.  He uses the hashes of his extracted child sets to recover $D_B$, the set of his child sets that differ from any of Alice's.  
\item Bob tries every combination of the extracted child IBLTs, trying to recover Alice's child sets by finding a matching child IBLT of his own.  He inserts each of Alice's child sets that he recovers into the set $D_A$.  Going forward, he will recover more and more of Alice's child sets and $D_A$ will be the set he has recovered so far.
\item For each $i = 2,\ldots,t$, Bob performs the following procedure.  He first deletes the ($O(2^i)$ cell IBLT, hash) of each of his child sets from $T_i$, except for those in $D_B$. He also deletes the ($O(2^i)$ cell IBLT, hash) encoding of each child set in $D_A$ from $T_i$.  He then decodes $T_i$ and extracts all of the different child encodings, which correspond exactly to Alice's differing child sets that aren't yet in $D_A$.  He tries to decode each of Alice's extracted child IBLTs with the child IBLT of each set in $D_B$, adding Alice's child sets that he recovers to $D_A$.
\item If Bob received $T_*$, he deletes all of his child sets from it.  He also deletes each child set in $D_A$ from it.  He then decodes $T_*$ and adds all of the decoded child sets to $D_A$.
\item Bob deletes $D_B$ from his set of sets, and adds $D_A$.
\end{itemize}
\end{algorithm}

If $t < \log h$, and therefore there is no $T_*$, Bob successfully recovers all of Alice's child sets so long as all $X_i$ and $Y_i$ occur.  The probability of this is
\begin{align*}
\Pr&[\cap_{i=1}^t (X_i \cap Y_i)] \\
&= \Pr[Y_1] \Pr[X_1 | Y_1] \ldots \Pr\left[Y_t | \cap_{j=1}^{t-1}X_j\right] \Pr[X_t | Y_t] \\
&= \Pr[Y_1] \prod_{i=2}^t \Pr\left[Y_i | \cap_{j=1}^{i-1}X_j\right] \prod_{i=1}^t \Pr[X_i | Y_i]  \\
&= \prod_{i=1}^t \left(1 - \frac{2^i}{10d}\right) \prod_{i=1}^t \left(1 - \frac{1}{10 \cdot 2^i}\right) \\
&\geq 1 - \sum_{i=1}^t \left(\frac{2^i}{10d} + \frac{1}{10 \cdot 2^i}\right) \\
&\geq 4/5.
\end{align*}
If $t = \log h$, then the protocol will succeed so long as $T_*$ successfully decodes.  $T_*$ has $O(d/h)$ cells and if all $X_i$ and $Y_i$ occur then there are at most $(9/4)d/h$ elements to extract from $T_*$, so we can choose the constants such that $T_*$ decodes with probability at least $1/10$.  We have thus proved that by the end of the procedure, Bob recovers Alice's set of sets with probability at least $4/5-1/10-1/30=2/3$.

The time for Alice and Bob to insert or delete all of their child encodings from the $T_{i}$ is $O(n \log \min(d,h))$ since for each of the $t$ tables, it takes them $O(n)$ time to iterate through their child sets and construct all of their child encodings.  The remaining time is what Bob takes to attempt to decode Alice's child IBLTs.  When processing $T_i$, Bob extracts $O(\min(\q, d/2^i))$ of Alice's child IBLTs, and compare each one against each of his $O(\q)$ differing child sets' IBLTs.  Each child IBLT has $O(2^i)$ cells, so the total processing time is $O(\q \min(\q, d/2^i) 2^i)$.  Summing over $i$, we get $O(\q d \log \q)$.

The communication cost of transmitting $T_{1},\ldots,T_{t}$ is
\begin{align*}
O&\left(\sum_{i=1}^t (d/2^i) \cdot (\log (st) + 2^i \log u)\right) \\
&= O(d \log (st) + d t \log u)\\
&= O(d \log s + d \log \min(d,h) \log u).
\end{align*}
The cost of $T_*$ is $O(d/h \cdot h \log u) = O(d \log u)$.
\end{proof}

We can once again extend this protocol to SSRU by repeated doubling on $d$.

\begin{corollary} \label{cor:smarter_compacting}
SSRU can be solved in $O(\log d)$ rounds using $$O(d\log \min(d,h) \log u + d\log s)$$ bits of communication and $O(n \log d \log \min(d,h) + \q d\log \q)$ time with probability at least $2/3$.
\end{corollary}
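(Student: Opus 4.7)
The plan is to apply the same repeated doubling trick that was used to promote Theorem \ref{thm:basic_compacting} into Corollary \ref{cor:basic_compacting}. Alice and Bob run the protocol of Theorem \ref{thm:smarter_compacting} in succession with guessed bounds $d' = 1, 2, 4, 8, \ldots$, stopping as soon as Bob can certify that the attempt succeeded. To make the certification reliable, I would have Alice append to her first message a short hash of her entire parent set---concretely, a pairwise independent hash of $\Theta(\log(s \log d))$ bits applied to (say) the sorted concatenation of her per-child-set hashes. After each attempt, Bob recomputes this hash on his candidate recovered parent set, compares, and sends Alice a one-bit response driving the loop.

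Each doubling step takes $O(1)$ rounds, so the round count is $O(\log d)$. The communication used on the attempt with guess $d'$ is $O(d' \log \min(d', h) \log u + d' \log s)$ by Theorem \ref{thm:smarter_compacting}, and this quantity at least doubles with $d'$, so the geometric sum over $d' \le O(d)$ is dominated by the final term and gives $O(d \log \min(d, h) \log u + d \log s)$. For time, each attempt spends $O(n \log \min(d', h))$ on building the cascaded child encodings; summing over the $O(\log d)$ attempts yields the $O(n \log d \log \min(d, h))$ factor, while the peeling, matching, and child-IBLT decoding work is dominated by the final successful attempt and contributes $O(\q d \log \q)$ (earlier attempts use strictly smaller parameters and again form a geometric series).

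The step that requires the most care is the probability analysis. Condition on the final attempt (the one with $d' \in [d, 2d)$) behaving as promised by Theorem \ref{thm:smarter_compacting}, which occurs with probability at least $2/3$. The only new failure mode introduced by wrapping the protocol in a doubling loop is that some earlier under-guessed attempt could yield an incorrect candidate whose verification hash coincides with Alice's, causing Bob to halt early with a wrong answer. With the verification hash set to $\Theta(\log(s \log d))$ bits and a union bound over the $O(\log d)$ earlier attempts, the probability of any such spurious collision is an arbitrarily small constant and can be absorbed into the $2/3$ bound after slightly tightening the constants in Theorem \ref{thm:smarter_compacting}. All remaining pieces---the communication, time, and round bounds---then follow from the routine geometric-series accounting above.
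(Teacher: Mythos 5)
Your proposal is correct and matches the paper's approach: the paper proves this corollary in a single line by invoking the standard repeated-doubling trick on $d$ applied to Theorem \ref{thm:smarter_compacting}, together with the observation (made just before the corollary) that Alice can send a hash of her whole set of sets so Bob can certify a successful attempt. One small note: it is cleaner to hash a canonical encoding of the entire parent set directly rather than the sorted concatenation of per-child-set hashes, since the latter can themselves collide and would require an extra union-bound argument to rule out spurious verification.
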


We note that we could extend this recursive use of IBLTs further -- creating IBLTs of structures representing sets of sets as IBLTs of IBLTs -- to reconcile sets of sets of sets, but we do not currently have a compelling application, and leave further discussion to future work. 

All of our protocols so far have given specific failure probabilities, but any of them can be amplified to achieve an arbitrarily small failure probability via replication.  Specifically, Alice can send Bob a hash of her whole set of sets.  They can then run the protocol many times in parallel, with Bob outputting the first recovered set of sets to match Alice's hash.

\subsection{A Multi-Round Approach}
\label{sec:multi}
All of our previous protocols required only a single transmission when $d$ is known.  What follows is a protocol that exploits the additional power of a larger, but still constant, number of rounds of communication.  This protocol, through its use of set difference estimators, also extends more efficiently to the case where $d$ is unknown.

This protocol proceeds in four steps.  First, if $d$ is unknown, Alice and Bob estimate the number of different child sets they have by exchanging set difference estimators of the set of hashes of their child sets.  Second, Alice and Bob determine which of their child sets differ by performing set reconciliation on the hashes of their child sets.  Third, they exchange set difference estimators for the differing child sets, so that for each of Alice's differing child sets, they can identify one of Bob's differing child sets that is similar to it.  Finally, they use these matchings to engage in parallel set reconciliation of all of their differing child sets.  For child sets with larger differences they use \autoref{thm:setrecon}, and for smaller differences they use \autoref{thm:setreconpoly}.

\begin{restatable}{theorem}{multi}
\label{thm:strata_based}
SSRK can be solved in 3 rounds using $$O(\lceil\log_{\q}(1/\delta)\rceil \q\log s + \log (\q / \delta) \q\log h + \lceil\log_d(1/\delta)\rceil d \log u)$$ bits of communication and $$O(\log (\q / \delta) (n+\q^2)+d^2+\min(d h, n \sqrt{d}, n \log^2 h))$$ time with probability at least $1 - \delta$.
\end{restatable}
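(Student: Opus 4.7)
The plan is to implement the three steps sketched before the theorem statement in one round each. In \textbf{Round 1} Alice hashes each of her child sets with an $O(\log(s/\delta))$-bit pairwise independent hash, inserts the resulting hash values into an $O(\q)$-cell IBLT, and transmits this IBLT to Bob. Bob deletes the hashes of his own child sets and peels to recover the symmetric difference of the hash sets; combined with local bookkeeping this lets him identify which of his child sets have no hash-match in Alice's collection (the differing sets on his side) and which hashes come from Alice's differing sets. By \autoref{thm:iblt} this IBLT fails with probability at most $1/\poly(\q)$, so applying the replication amplification trick mentioned at the end of \autoref{sec:sos} (running $\lceil \log_{\q}(1/\delta) \rceil$ parallel copies and accepting any one that decodes to something matching a shared overall hash) drives the failure probability to $O(\delta)$ at a communication cost of $O(\lceil \log_{\q}(1/\delta)\rceil \q\log s)$ bits, contributing the first communication term.

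In \textbf{Round 2}, using the set difference estimators of \autoref{thm:strata}, Alice builds one estimator per differing child set of hers, each with parameter $\delta' = \Theta(\delta/\q^2)$ so as to survive a union bound over the $\q \times \q$ queries below, and sends them to Bob. Bob builds the analogous estimators for his own differing child sets, and for every one of Alice's estimators he merges-and-queries it against each of his to obtain a constant-factor estimate of the pairwise symmetric difference. For each of Alice's differing child sets he records as its matched partner whichever of his sets achieves the smallest estimate. Since the true minimum-difference partner has some true difference $d^{*}_i$ and all estimates are within a constant of their true values, the selected partner has actual difference $O(d^{*}_i)$; moreover $\sum_i d^{*}_i = O(d)$ by the minimum-matching hypothesis. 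Each estimator has size $O(\log(\q/\delta)\log h)$ so communication is $O(\q\log(\q/\delta)\log h)$; building all estimators takes $O(\log(\q/\delta)\,n)$ time (since total child-set size is $n$) and the $\q^2$ merge-and-query calls contribute $O(\log(\q/\delta)\,\q^2)$ time, accounting for the $O(\log(\q/\delta)(n+\q^2))$ term.

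In \textbf{Round 3} Alice reconciles each matched pair in parallel. For each pair let $\tilde d_i$ be Bob's estimate from round two and pick a threshold $T$ to be optimized. Reconcile pairs with $\tilde d_i \leq T$ via the (deterministic) polynomial protocol of \autoref{thm:setreconpoly}, and reconcile pairs with $\tilde d_i > T$ via the IBLT protocol of \autoref{thm:setrecon}. The total communication is $\sum_i O(\tilde d_i \log u) = O(d\log u)$; the factor $\lceil\log_d(1/\delta)\rceil$ in the third communication term appears because each IBLT only succeeds with probability $1 - 1/\poly(\tilde d_i)$, so we amplify to $\delta/\q$ per pair and union-bound over pairs. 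For running time, the polynomial path contributes $\sum_{\tilde d_i \le T} O(m_i \min(\tilde d_i,\log^2 h) + \tilde d_i^3) = O(n \min(T,\log^2 h) + T^2 d)$, while the IBLT path contributes $\sum_{\tilde d_i > T} O(m_i) = O(n)$. Setting $T=\sqrt d$ gives $O(n\min(\sqrt d,\log^2 h) + d^2)$; alternatively, running IBLTs on every pair yields $O(\q h) = O(dh)$. Picking whichever strategy Alice can verify is fastest yields the bound $O(d^2 + \min(dh, n\sqrt d, n\log^2 h))$.

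The principal obstacle is the probability accounting: failures in round 1's hash IBLT, in any of round 2's $\q^2$ estimator queries, and in each of round 3's per-pair reconciliations must combine to at most $\delta$. This dictates the three calibration choices — the replication factor $\lceil\log_{\q}(1/\delta)\rceil$ in round 1, the per-estimator parameter $\delta' = \Theta(\delta/\q^2)$ in round 2, and the per-pair replication factor $\lceil\log_d(1/\delta)\rceil$ on the IBLT path of round 3 — which together propagate precisely into the three communication terms stated. A secondary subtlety is showing that the estimator outputs, being only constant-factor accurate, still yield near-optimal partner choices and that the sizing of each per-pair reconciliation data structure in round 3 (based on $\Theta(\tilde d_i)$ rather than the true difference) preserves success with high probability when $\tilde d_i$ is within a known constant factor of the true difference.
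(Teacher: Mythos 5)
Your high-level plan matches the paper's---hash IBLT to identify differing child sets, pairwise set-difference estimators to form a matching, then parallel per-pair reconciliation split between IBLT (large estimated difference) and characteristic-polynomial (small estimated difference) protocols---and your time and communication accounting lines up with the paper's, including the clever $\sum\tilde d_i^3 \le T^2\sum\tilde d_i$ bound with $T=\sqrt d$.

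However, the direction of message flow in Rounds 2 and 3 is broken as written, and this is a genuine gap. After Round 1 (Alice sends $T_A$), only Bob has enough information to determine which child sets differ; Alice built $T_A$ herself and, with no message from Bob, she cannot tell which of her child sets have a match on Bob's side. Your Round 2 has Alice building estimators ``per differing child set of hers'' and sending them to Bob, which she cannot do. It then has Bob performing the matching and producing the estimates $\tilde d_i$, yet your Round 3 has Alice building and sending the per-pair IBLTs and polynomial evaluations, which requires her to know the matching $b_i$ and the estimates $\tilde d_i$---information that resides only with Bob. Making your flow work would require Bob to send the matching and estimates back to Alice before Round 3, pushing the protocol to four rounds. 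The paper avoids this by making Round 2 go from Bob to Alice: Bob transmits both $T_B$ (so Alice can decode $(T_A,T_B)$ herself and learn her own differing sets) and the list $L_B$ of his estimators; Alice then builds $L_A$ locally, performs all the merge-and-query matching on her side, and in Round 3 sends the $(b_i, T_i)$ and $(b_i, P_i)$ pairs. You should swap the roles in Round 2 and move the matching computation to Alice. Once that is corrected, the rest of your argument---the replication factor $\lceil\log_{\q}(1/\delta)\rceil$ for the hash IBLT, the $\delta/\poly(\q)$ per-estimator calibration, the $\lceil\log_d(1/\delta)\rceil$ replication on the IBLT path, and the threshold-$\sqrt d$ time analysis---is sound.
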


\begin{restatable}{theorem}{multiu} \label{thm:strata_based2}
SSRU can be solved in 4 rounds using $$O(\lceil\log_{\q}(1/\delta)\rceil \q\log s + \log (\q / \delta) \q\log h + \lceil\log_d(1/\delta)\rceil d \log u)$$ bits of communication and $$O(\log (\q / \delta) (n+\q^2)+d^2+\min(d h, n \sqrt{d}, n \log^2 h))$$ time with probability at least $1 - \delta$.
\end{restatable}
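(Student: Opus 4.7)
The plan is to reduce SSRU to the known-$d$ protocol of \autoref{thm:strata_based} by prepending a single estimation round, mirroring how \autoref{thm:setrecon2} is derived from \autoref{thm:setrecon}. In the new first round, Bob builds a set difference estimator from \autoref{thm:strata}, tuned to failure probability at most $\delta/4$, inserting the hashes of his child sets into one side; he sends this estimator to Alice. Alice constructs the analogous estimator on her own child-set hashes, merges the two via the estimator's merge operation, and queries to obtain an estimate $\hat{\q} = \Theta(\q)$. The added communication is $O(\log(1/\delta)\log s)$ bits and the added time is $O(s\log(1/\delta))$, both of which are absorbed into the bounds already claimed for the known-$d$ protocol.

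Armed with $\hat{\q}$, Alice and Bob execute the three-round protocol of \autoref{thm:strata_based} as if $\hat{\q}$ were a known bound on the number of differing child sets. Every sizing decision in that protocol that previously relied on $\q$ --- the hash-level IBLT in round 2, the per-pair set difference estimators exchanged in round 3, and the number of parallel instances used for success amplification --- can instead be driven by $\hat{\q}$ up to constants. Crucially, the per-pair reconciliations in the final round are sized from the per-pair difference estimates produced in round 3 (whose sum is bounded by $d$), so no separate estimate of $d$ is ever required; the $d$-dependent terms in the complexity arise exactly as they did in the known-$d$ setting.

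The main obstacle is bookkeeping the failure probability while only adding a single round. I would union-bound over four failure events: (i) the initial $\q$-estimate exceeds its promised constant-factor guarantee, (ii) the hash reconciliation in round 2 fails, (iii) some per-pair difference estimator in round 3 is off by more than its promised factor, or (iv) some per-pair reconciliation in round 4 fails. Setting each sub-protocol to fail with probability at most $\delta/4$ only alters constants inside the $\log(\q/\delta)$, $\log_\q(1/\delta)$, and $\log_d(1/\delta)$ factors already present in \autoref{thm:strata_based}, so the communication and running-time bounds are unchanged. The round count increases from three to four, giving the claimed SSRU result.
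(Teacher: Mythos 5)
Your proposal is correct and takes essentially the same approach as the paper: prepend a single round in which Bob sends Alice a set difference estimator over the child-set hashes so Alice can estimate $\q$ and size $T_A$, then run the three-round known-$d$ protocol. Your accounting of the failure probability and your observation that no separate estimate of $d$ is needed (the per-pair estimates from round 3 suffice) are both correct and match what the paper does implicitly.
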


The proofs of these theorems appear in \autoref{app:multi}.

\subsection{Handling Multisets}
For some applications, it is useful to perform set reconciliation or set of sets reconciliation with multisets.  For multiset reconciliation, which is set reconciliation with multisets, \autoref{thm:setreconpoly} works as is.  \autoref{thm:setrecon} and \autoref{thm:setrecon2} can be made to work with a simple modification.  We create a set from our multiset, where if an element $x$ occurs in the multiset $k$ times, then $(x,k)$ is an element of the set.  After reconciling this set, recovering the corresponding multiset is immediate.  All of the bounds stay the same ($d$ can only decrease), except that $u$ grows to $u\cdot n$.  All of our protocols can be adapted to reconciling sets of multisets or multisets of multisets in a similar way.  

\subsection{Comparison of Results}
Our protocols have numerous parameters and can therefore be difficult to compare.  Depending on the parameters, and the relative importance of time, rounds of communication and total communication, any of them can be superior.  Here, for intuition, we compare them under an example application.

\autoref{tab:ssrkpoly} compares our protocols in a natural setting of parameters for reconciling relational databases of binary data.  Assuming the data is sufficiently dense in 1s, $h = \Theta(u)$ and $n = \Theta(su)$.  We specifically look at the case when $d$ is very small (smaller than $s$ and $h$) and we wish the reconciliation to succeed with high probability (the protocols are replicated until the failure probability $\delta = 1 / \poly(n)$).  For sufficiently large $u$, the protocols are sorted in ascending order of their total communication costs.  However, \autoref{thm:strata_based} takes more rounds than the rest, and Theorems \ref{thm:ssrnaive}, \ref{thm:basic_compacting} and \ref{thm:smarter_compacting} are in descending order of computation time (assuming sufficiently small $d$).

\begin{table*} 
	\centering
    \begin{tabular}{| l | l | l | l |}
    \hline
    Algorithm & Communication & Time & Rounds \\ \hline
    \autoref{thm:ssrnaive} & $d u \log n / \log d$ & $n \log n / \log d$ & 1 \\ \hline
    \autoref{thm:basic_compacting} & $d \log n (d\log u + \log s) / \log d$ & $(n + d^3) \log n / \log d$ & 1 \\ \hline
    \autoref{thm:smarter_compacting} & $d \log n (\log d \log u + \log s)$ & $(n + d^2) \log n \log d$ & 1 \\ \hline
    \autoref{thm:strata_based} & $d \log n (\log u + \log s / \log d)$ & $(n + d^2)\log n$ & 3 \\ \hline
    \hline
    \end{tabular}
	\caption{A comparison of results of the protocols for SSRK when $h = \Theta(u)$, $n = \Theta(su)$, $\delta = 1 / \poly(n)$ and $d \leq s,h$.  All of the communication and time bounds omit constant factors.}
	\label{tab:ssrkpoly}
\end{table*}

\section{General Graph Reconciliation}
We now consider applications to graph reconciliation.  Here Alice and Bob each have an unlabeled graph, $G_A = (V_A, E_A)$ and $G_B = (V_B, E_B)$ respectively, where $|V_A|=|V_B|=n$, and only $d << n$ edges need to be changed (added or deleted) in $E_A$ to make $G_A$ isomorphic to $G_B$.  Alice and Bob wish to communicate so that they both end up with the same graph.  We note that if $G_A$ and $G_B$ were labeled graphs, then the problem would be equivalent to set reconciliation on their sets of labeled edges.

There are various ways to formalize what we want the final graph to be.  The simplest is the \emph{one-way} version of the problem, where we want Bob to end up with a graph isomorphic to $G_A$.  Alternatively, we might wish for them to end up with something corresponding to the union of the two graphs, but this is not always well defined.  Figure \ref{fig:no_union} gives an example of two graphs where adding one edge to each will yield isomorphic graphs, but there are multiple distinct ways to do so which yield graphs that are not isomorphic.  Consequently, we focus on one-way reconciliation, and simply note that our techniques can be generally be extended to most natural two-way versions.

\begin{figure}
	\centering
 		\includegraphics[width=.45\textwidth]{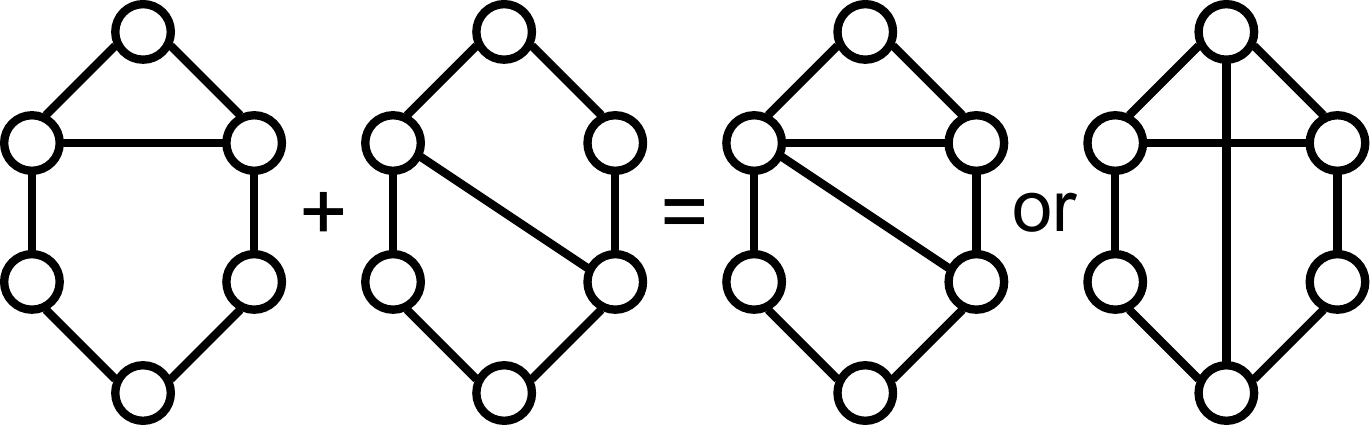}
	\caption{Depending on how we rotate these two graphs before merging them, we get different (non-isomorphic) results.  There is no way to add an edge to only one graph and get isomorphic results, but here we show two different ways to add one edge to each graph and get isomorphic results.}
\label{fig:no_union}
\end{figure}

Before exploring more computationally efficient protocols for graph reconciliation, we investigate what is possible when Alice and Bob each have access to unlimited computation.  This gives us bounds on what we can hope to achieve efficiently.  First let us look at graph isomorphism in this model.  Here we only wish to determine if $G_A$ is isomorphic to $G_B$.  The following protocol is apparently folklore.

\begin{itemize}
\item Alice iterates through all graphs in increasing lexicographical order until she finds one that is isomorphic to hers (checking by exhaustive search).  As there are at most $2^{n \choose 2}$ possible graphs, she can write down the index of the graph as a binary string $s_A$ of length ${n \choose 2}$.  Let $p_A(x)$ be a polynomial of degree ${n \choose 2}$ over $\mathbb{Z}_q$ for some suitably large prime $q$ with coefficients equal to the bits of $s_A$.  She picks a number $r$ uniformly at random from $\mathbb{Z}_q$, and sends Bob $r$ and $p_A(r)$.

\item Bob finds the index of the first graph in increasing lexicographical order which is isomorphic to $G_B$.  Similarly to Alice, he determines a corresponding polynomial $p_B(x)$, and then checks if $p_B(r) = p_A(r)$ in $\mathbb{Z}_q$.  If they are equal, he reports that $G_A$ is isomorphic to $G_B$, and otherwise he reports that they are not isomorphic.
\end{itemize}

\begin{theorem}
This protocol uses $O(\log q)$ bits of communication and succeeds with probability $1-O(n^2/q)$.
\end{theorem}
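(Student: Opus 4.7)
The plan is to verify the two claims separately: the communication bound is immediate, and the probability bound reduces to a standard polynomial fingerprinting argument à la Schwartz–Zippel (in one variable).

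\textbf{Communication.} Alice transmits exactly two elements of $\mathbb{Z}_q$, namely $r$ and $p_A(r)$. Each fits in $\lceil \log_2 q \rceil$ bits, so the total communication is $O(\log q)$. This step is a one-liner.

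\textbf{Correctness when $G_A \cong G_B$.} Since ``lexicographically smallest graph isomorphic to $G$'' is an invariant of the isomorphism class of $G$, Alice and Bob compute the same index, so $s_A = s_B$, hence $p_A$ and $p_B$ are the identical polynomial. In particular $p_A(r) = p_B(r)$ for every $r$, and Bob always reports ``isomorphic''. So in this case the protocol never errs.

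\textbf{Correctness when $G_A \not\cong G_B$.} Then the canonical indices $s_A$ and $s_B$ differ in at least one bit, so $p_A - p_B$ is a nonzero polynomial over $\mathbb{Z}_q$ of degree at most $\binom{n}{2}$. A nonzero univariate polynomial of degree $D$ over a field has at most $D$ roots, so
\[
\Pr_{r \in \mathbb{Z}_q}\bigl[p_A(r) = p_B(r)\bigr] \;\le\; \frac{\binom{n}{2}}{q} \;=\; O(n^2/q).
\]
Hence Bob incorrectly reports ``isomorphic'' with probability at most $O(n^2/q)$, and otherwise correctly reports ``not isomorphic''. Combining the two cases, the protocol succeeds with probability at least $1 - O(n^2/q)$.

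\textbf{Main obstacle.} There isn't really a hard step here; the whole argument is a standard polynomial identity test. The only subtlety worth flagging is that one must use that the canonical form (lex-smallest isomorphic graph) is truly a complete invariant of the isomorphism class, so that $p_A = p_B$ as polynomials is equivalent to $G_A \cong G_B$. Once that is noted, the probability bound follows directly from the degree bound on $p_A - p_B$.
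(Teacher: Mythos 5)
Your proof is correct and follows essentially the same argument as the paper: bound communication by noting only $r$ and $p_A(r)$ are sent, observe that canonical indices agree iff the graphs are isomorphic, and bound the false-positive probability by the number of roots of $p_A - p_B$ (which the paper phrases via Schwartz--Zippel, while you use the equivalent univariate root-count bound directly).
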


\begin{proof}
The only communication used is Alice sending $r$ and $p_A(r)$, and both of these can be represented with $O(\log q)$ bits.

If $G_A$ is isomorphic to $G_B$ then $s_A = s_B$ and $p_A = p_B$ so $p_A(r) = p_B(r)$ in $\mathbb{Z}_q$ and Bob will always report that the graphs are isomorphic.

If $G_A$ is not isomorphic to $G_B$ then $s_A \neq s_B$ and $p_A \neq p_B$.  $p_A(r)$ will equal $p_B(r)$ if and only if $r$ is a root of $p_A(x) - p_B(x)$, which a polynomial of degree $O(n^2)$.  By the Schwartz-Zippel Lemma \cite{schwartz1980fast, zippel1979probabilistic}, this occurs with probability at most $O(n^2 / q)$.
\end{proof}

\begin{corollary}
Graph isomorphism can be solved with $O(\log n)$ bits of communication with probability at least $1 - 1/n$.
\end{corollary}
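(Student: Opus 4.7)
The plan is to simply instantiate the previous theorem with an appropriately chosen prime $q$. The theorem gives communication $O(\log q)$ and failure probability $O(n^2/q)$, so to force the failure probability below $1/n$ it suffices to take $q = \Omega(n^3)$, and to keep the communication at $O(\log n)$ it suffices to take $q = O(n^c)$ for some constant $c$.

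Concretely, I would pick $q$ to be any prime in the interval $[n^3, 2n^3]$; such a prime exists by Bertrand's postulate (and can even be agreed upon ahead of time by Alice and Bob once $n$ is known, since prime-finding in this range is cheap and in any case is a preprocessing step outside the communication budget). With this choice, $\log q = \Theta(\log n)$, so the protocol from the preceding theorem transmits $O(\log q) = O(\log n)$ bits, and its failure probability is $O(n^2/q) = O(1/n)$, which gives success probability at least $1 - 1/n$ as claimed.

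There is essentially no obstacle here: the only thing to verify is that a prime of the required magnitude exists and that both the communication bound and the error bound behave as desired under this choice. One minor wrinkle worth flagging is that the statement of the preceding theorem treats $q$ as a parameter of the protocol rather than something Alice and Bob must coordinate on, so the corollary is implicitly assuming that $n$ is common knowledge (which is standard in this setting, since both parties hold $n$-vertex graphs) and that they can agree on $q$ without extra communication. With that understood, the corollary is an immediate specialization of the theorem.
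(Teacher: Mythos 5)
Your proposal is correct and matches the paper's intent; the paper states this corollary without an explicit proof, treating it as an immediate consequence, and your instantiation with a prime $q \in [n^3, 2n^3]$ is exactly the obvious specialization. One small thing to be careful about: you write that failure probability $O(n^2/q) = O(1/n)$ ``gives success probability at least $1-1/n$,'' but $O(1/n)$ alone does not guarantee a constant of $1$; to fully close the loop one should unfold the Schwartz--Zippel bound, which gives failure probability at most $\binom{n}{2}/q \leq n^2/(2q) \leq 1/(2n)$ for $q \geq n^3$, and that does yield $1-1/n$.
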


Now we turn back to the problem of one-way graph reconciliation.  We assume that at most $d$ edges need to be added or deleted from $G_B$ to yield a graph isomorphic to $G_A$.

\begin{theorem} \label{thm:com_alg}
Graph reconciliation can be solved with probability at least $1 - 1/n$ using one round of $O(d \log n)$ bits of communication.
\end{theorem}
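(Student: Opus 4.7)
The plan is to extend the polynomial fingerprinting idea from the previous theorem from an isomorphism test to a reconciliation protocol, using the fact that the number of graphs within edit distance $d$ of Bob's graph is only $n^{O(d)}$. Let both parties agree on the same canonical form $c(\cdot)$ for unlabeled graphs used in the previous theorem (e.g., the lexicographically smallest isomorphic labeled edge set), and encode $c(G)$ as a bit string $s_G \in \{0,1\}^{\binom{n}{2}}$ which in turn defines a polynomial $p_G(x) = \sum_i s_G[i] \, x^i$ of degree less than $\binom{n}{2}$ over $\mathbb{F}_q$ for a prime $q$ to be chosen. Alice picks $r$ uniformly at random from $\mathbb{F}_q$ and transmits the pair $(r, p_{G_A}(r))$ to Bob.

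Bob, who has unlimited computation, enumerates the set $\mathcal{G}$ of all unlabeled graphs reachable from $G_B$ by adding or deleting at most $d$ edges. For each candidate $G' \in \mathcal{G}$ he computes $c(G')$, evaluates $p_{G'}(r)$, and outputs (any) $G'$ whose evaluation matches $p_{G_A}(r)$. Since $G_A$ is by assumption within $d$ edits of $G_B$ (up to isomorphism), we have $G_A \in \mathcal{G}$ and thus a matching $G'$ always exists.

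Correctness reduces to bounding the probability of a spurious match. For each $G' \in \mathcal{G}$ not isomorphic to $G_A$, the strings $s_{G'}$ and $s_{G_A}$ differ, so $p_{G'} - p_{G_A}$ is a nonzero polynomial of degree at most $\binom{n}{2} - 1 = O(n^2)$. By Schwartz--Zippel the probability that $p_{G'}(r) = p_{G_A}(r)$ is $O(n^2/q)$. The key counting step is that the number of labeled edge sets within Hamming distance $d$ of $E_B$ is at most $\sum_{i=0}^d \binom{\binom{n}{2}}{i} = O(n^{2d})$, which upper bounds $|\mathcal{G}|$. Union bounding over $\mathcal{G}$ gives total failure probability $O(n^{2d+2}/q)$, so picking $q = \Theta(n^{2d+3})$ makes this $O(1/n)$, while the transmitted pair uses $O(\log q) = O(d \log n)$ bits.

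The main thing to get right is the exponent in the counting bound for $|\mathcal{G}|$, since this is what determines the field size and hence the communication; everything else mirrors the isomorphism protocol above. The one subtlety to flag is that Bob must use \emph{the same} deterministic canonical form $c$ as Alice so that his evaluations $p_{G'}(r)$ genuinely match $p_{G_A}(r)$ whenever $G' \cong G_A$; this is immediate since both parties use the lex-first-isomorph rule from the preceding theorem.
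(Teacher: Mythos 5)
Your proof is correct and follows essentially the same approach as the paper: Alice sends the polynomial fingerprint $(r, p_{G_A}(r))$ of the canonical encoding, Bob enumerates the $O(n^{2d})$ graphs within $d$ edge edits of $G_B$, and a Schwartz--Zippel plus union-bound argument with $q = \Theta(n^{2d+3})$ gives the $O(d\log n)$ communication and $1 - O(1/n)$ success probability. Your explicit remark that both parties must use the same deterministic canonical form is a nice clarification of a point the paper leaves implicit, but the argument is otherwise identical.
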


\begin{proof}
Alice's part of the protocol is the same as for graph isomorphism (for a specific choice of $q$ to be determined later).  She computes and transmits $r$ and $p_A(r)$ to Bob using a total of $O(\log q)$ bits of communication.  Bob then finds the polynomial corresponding to every possible change of $d$ edges in $G_B$.  For each of these polynomials, he evaluates them at $r$ and checks if the result equals $p_A(r)$.  Bob then changes his graph to the first graph $G_C$ he encounters where $p_C(r) = p_A(r)$.

There are $\sum_{i=0}^d \left({\binom{n}{2}}\atop{i}\right) = O(n^{2d})$ graphs that Bob will consider, and we are guaranteed that at least one of them is isomorphic to $G_A$.  The protocol fails only if one of the graphs that is not isomorphic to $G_A$ matches its polynomial value.  For a given graph, this will occur with probability at most $O(n^2 / q)$, thus we can union bound over all of the graphs to get a total failure probability of at most $O(n^{2d+2} / q)$.  We now choose $q = n^{2d+3}$ to get our desired probability.  The total communication is now $O(\log n^{2d+3}) = O(d \log n)$.
\end{proof}

This communication bound is also tight.  By a simple encoding argument, we derive the following.  
\begin{theorem}
Any protocol solving graph reconciliation with probability $1-O(1)$ must use $\Omega(d \log n)$ bits of communication in expectation.
\end{theorem}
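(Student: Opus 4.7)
The plan is an encoding argument: fix Bob's input $G_B$, count the number $N$ of pairwise non-isomorphic graphs on $n$ vertices within edit distance $d$ of $G_B$, and conclude that any randomized protocol with constant success probability must transmit $\Omega(\log N)$ bits in expectation. The non-trivial work is in exhibiting a $G_B$ with $N = 2^{\Omega(d \log n)}$.

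I would pick $G_B$ to be (i) rigid and (ii) \emph{robust}, in the sense that $|G_B \triangle \pi(G_B)| > 2d$ for every non-identity permutation $\pi$ of $V(G_B)$. Given such a $G_B$, any two distinct edit sets $E_1, E_2 \subseteq \binom{V}{2}$ of size at most $d$ produce non-isomorphic perturbations: if $\pi(G_B \triangle E_1) = G_B \triangle E_2$, then $\pi(G_B) \triangle G_B = \pi(E_1) \triangle E_2$, whose size is at most $2d$, so (ii) forces $\pi = \id$ and hence $E_1 = E_2$. This yields $N \geq \binom{\binom{n}{2}}{d} = 2^{\Omega(d \log n)}$ for $d$ bounded away from $\binom{n}{2}/2$.

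To produce such a $G_B$ I would apply the probabilistic method to $G_B \sim G(n,1/2)$. For a permutation $\pi$ with exactly $k \geq 2$ non-fixed points, the number of unordered pairs $\{u,v\}$ moved by $\pi$ is $\Omega(kn)$, and $|G_B \triangle \pi(G_B)|$ has mean $\Omega(kn)$ with a Chernoff-type tail of $\exp(-\Omega(kn))$ (grouping moved pairs by their $\pi$-orbits gives enough independence). Since there are at most $n^k$ permutations with $k$ non-fixed points, union-bounding over $k \geq 2$ shows $G_B$ is rigid and robust with high probability, at least while $d = O(n)$.

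For the encoding step, suppose $\Pi$ succeeds with probability at least $1 - \eta$ for some $\eta < 1/4$ and uses $b$ bits in expectation. By Markov, the transcript has length at most $2b$ with probability at least $1/2$, so both events hold with probability at least $1/4$. Fixing $G_B$ as above and drawing $G_A$ uniformly from its $N$ non-isomorphic perturbations, averaging selects a setting of the shared random coins under which a constant fraction of $G_A$'s yield a successful execution with transcript length at most $2b$. Since Bob's output is a deterministic function of $(G_B, R, \mathrm{transcript})$, there are at most $2^{2b+1}$ possible outputs, and successful outputs must lie in distinct isomorphism classes, so $2^{2b+1} \geq \Omega(N)$ and therefore $b = \Omega(\log N) = \Omega(d \log n)$. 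The hardest step is verifying condition (ii) for the full range of $d$; the $G(n,1/2)$ calculation above suffices only for $d = O(n)$, and for larger $d$ I would instead take $G_B$ to be a disjoint union of many small rigid gadgets, so that edge changes localized in different gadgets cannot be confused with each other, and the counting and encoding steps then go through unchanged.
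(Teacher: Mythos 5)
Your proposal is correct, but it takes a genuinely different route from the paper. The paper uses an explicit construction: $G_B$ consists of $d$ ``left'' vertices, $n$ ``right'' vertices, and a large pool of degree-one pendant vertices attached in distinct multiplicities so that every left/right vertex is uniquely identifiable; $G_A$ is obtained by adding $d$ left-to-right edges encoding a $d\log n$-bit string $s$. The argument is then a Shannon-style compression contradiction: conditioned on success, the transcript determines $s$, so one could encode a uniformly random $s$ into an average of $(1-p)d\log n + o(d\log n)$ bits, which is impossible for $p$ bounded away from zero. Your argument instead (a) exhibits $G_B$ nonconstructively via the probabilistic method over $G(n,1/2)$, using rigidity plus the robustness condition $|G_B \triangle \pi(G_B)| > 2d$ for all $\pi \neq \id$ to get $\binom{\binom{n}{2}}{d}$ pairwise non-isomorphic perturbations, and (b) closes with a one-shot counting argument (Markov on the transcript length, fix the public coins, pigeonhole on $2^{2b+1}$ possible outputs). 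Both steps are sound; the robustness-implies-distinctness lemma and the orbit-based Chernoff bound for asymmetry of $G(n,1/2)$ are standard. What each approach buys: your construction lives directly on an $n$-vertex graph, whereas the paper's gadget graph has $\Theta((n+d)^2)$ vertices (so its ``$n$'' in the bound is really the parameter of the construction, not the vertex count); on the other hand, the paper's construction is explicit, and the pendant gadgets make robustness trivial for essentially all $d$, whereas your $G(n,1/2)$ argument requires $2d$ to be below the minimum orbit-weight $\Theta(n)$, hence $d = o(n)$ — you correctly flag this and sketch a gadget fix, which would in effect reconstruct something resembling the paper's design.
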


\begin{proof}
Given a protocol $P$ solving reconciliation with probability $p$ and using $o(d \log n)$ bits of communication in expectation, we show how to encode $d \log n$ bits into $(1-p) d \log n + o(d \log n) < d \log n$ bits on average, yielding a contradiction.  

Let $s \in \{0,1\}^{d\log n}$, and $s = \langle s_1,\ldots,s_d \rangle$ be $s$ divided into strings of length $\log n$.  Let $G_B$ be a graph consisting of three disjoint sets of vertices $V_1$, $V_2$ and $V_3$, where $V_1 = \{v_1,\ldots,v_d\}$, $V_2 = \{v_{d+1},\ldots,v_{d+n}\}$ and $|V_3| = O(n+d)^2$.  There are no edges between any of the vertices in $V_1$ and $V_2$, and each vertex in $V_3$ has degree 1.  For each $i \in [d+n]$, $v_{i}$ is connected to $i+1$ vertices in $V_3$.  Overall there are $O(n+d)^2$ vertices, and each vertex in $V_1$ and $V_2$ is distinguished by its number of adjacent vertices of degree one.  

Let $G_A$ be identical to $G_B$, except with an additional $d$ edges.  For each $i \in [d]$, $v_i$ has an edge to $v_{d+s_i}$, where here we are interpreting $s_i$ as an integer in $[n]$.  Let $m$ be the transcript of communication between Alice and Bob that $P$ would create given Alice's graph is $G_A$ and Bob's is $G_B$.  Let $G'_A$ be the graph $P$ has Bob create given $m$ and $G_B$.  If $G'_A$ is not isomorphic to $G_A$, then $P$ failed and we use $\langle 0, s \rangle$ as our encoding.  If $G'_A$ is isomorphic to $G_A$, then our encoding is $\langle 1, m \rangle$.

Whenever $P$ succeeds, and the encoding is  $\langle 1, m \rangle$, we can recover $s$.  Since $G'_A$ is isomorphic to $G_A$, we can identify $V_1$ and $V_2$ unambiguously, and then determine the $s_i$ values via the edges connecting $V_1$ and $V_2$.  The average number of bits we encode $s$ into is 
\begin{align*}
(1-p) &d \log n + p \E[|m| \; | \; P \text{ succeeds}] +1\\
&\leq (1-p) d \log n + \E[|m|] +1\\
&= (1-p) d \log n + o(d \log n)
\end{align*}
and therefore $P$ cannot exist for $p = 1 - O(1)$.
\end{proof}

\section{Random Graph Reconciliation}
Since reconciliation is harder than graph isomorphism (assuming we determine isomorphic graphs need no changes), we can not expect to develop a protocol that reconciles general graphs in polynomial time.  However, there is a rich line of work showing that for many graphs, and in particular random graphs, graph isomorphism can be solved in polynomial time (with high probability).  This motivates examining one-way graph reconciliation for random graphs; we specifically consider the Erd\H{o}s-R\'enyi ($G(n,p)$) model of random graphs.  Our model is that a base graph $G$ is drawn from the distribution $G(n,p)$, and then Alice and Bob obtain graphs $G_A$ and $G_B$ respectively, where each of $G_A$ and $G_B$ is obtained by making at most $d/2$ edge changes to $G$.  We assume that $p \leq 1/2$, but all our results extend to the case when $p > 1/2$ by taking the graph complement.

Our approach utilizes a methodology used by several graph isomorphism algorithms for random graphs \cite{babai1980random,babai1979canonical,bollobas1998random,czajka2008improved}.  They all find a signature scheme for the vertices where a vertex's signature is invariant under relabeling, and with high probability every vertex in a graph has a unique signature.  When comparing two graphs, if they have different signature sets, we know them to be nonisomorphic.  If their signature sets match, then any labeling must match the vertices with common signatures, which can be used to determine an isomorphism efficiently if one exists.  

To reconcile random graphs, we want a signature scheme with the additional property that the signatures are in some sense robust to a small number of edge changes.  We use this robustness to argue that by reconciling the signatures of $G_A$ and $G_B$, Alice and Bob can agree on a vertex labeling that matches with respect to $G$.  Once we have labeled graphs, complete reconciliation of the edges can be easily done with a standard set reconciliation protocol.  We show that there are existing graph isomorphism schemes for random graphs whose vertex signatures are appropriately robust.

\subsection{Degree Ordering Scheme} \label{sec:gnp_protocol_1}
We start by using the signature scheme of \cite{babai1980random}, which allows a simple protocol;  however, as we later show, we can handle sparser random graphs using a more complex signature scheme and reconciliation protocol.  We first sort the vertices by degree so that $d(v_1) \geq d(v_2) \ldots \geq d(v_n)$.  For the $h$ vertices of largest degree, their signatures are just their degrees.  For the remaining $n-h$ vertices, each vertex's signature, $\sig(v)$, is an $h$-bit string where the $i$th bit of the string, $\sig(v)_i$, denotes whether or not $v$ has an edge to the vertex $v_i$.  The robustness property we utilize is the following.

\begin{definition}
We say a graph is $(h, a, b)$-separated if, after sorting the vertices by degree ($d(v_1) \geq d(v_2) \geq \ldots \geq d(v_n)$), the following properties hold.
\begin{itemize}
\item For all $i \in [h-1]$, $d(v_i)-d(v_{i+1}) \geq a$.
\item For all $i,j \in \{h+1,\ldots,n\}$ and $i \neq j$, the Hamming distance between $\sig(v_i)$ and $\sig(v_j)$, $|\sig(v_i)-\sig(v_j)|$ is at least $b$.
\end{itemize}
\end{definition}

We show below that for an appropriate setting of parameters, $G$ is $(h, d+1, 2d+1)$-separated with high probability.  For now, assume that this is the case.  We claim that if Bob can recover the set of vertex signatures of $G_A$, then the problem reduces to labeled graph reconciliation, which reduces immediately to set reconciliation.  Let $w_1,\ldots,w_n$ be the vertices of $G$.  If a vertex $v_A \in G_A$ and a vertex $v_B \in G_B$ correspond to the same vertex $w_i \in G$, then we say they \emph{conform}.  Thus if we can label $G_A$ and $G_B$ such that every pair of vertices that is labeled the same conform, then we have a labeled graph reconciliation problem with at most $d$ edge differences.  We call this a \emph{conforming labeling}.

Alice labels $G_A$ as follows.  She labels the $h$ highest degree vertices by their degree ordering, and the remaining $n-h$ vertices by the lexicographical order of their $\sig$ strings.  Since $G$ is $(h, d+1, 2d+1)$-separated, and $G_A$ and $G_B$ differ by at most $d$ edges, the $h$ highest degree vertices of $G_A$ will conform to those of $G_B$.  Additionally, for vertices $v_A \in G_A$ and $v_B \in G_B$ which conform and are not in the $h$ highest degrees, $|\sig(v_A)-\sig(v_B)| \leq d$.  If $v_A$ and $v_B$ do not conform, then $|\sig(v_A)-\sig(v_B)| \geq d+1$.  Thus if Bob has all of Alice's signatures, he can change each of his vertices' signatures to the signature of the conforming vertex in $G_A$ and thus match Alice's labeling.

In order for Bob to recover Alice's signatures, we use set of sets reconciliation.  Since the order of the top $h$ degree vertices is the same for Alice and Bob, Alice and Bob need only to reconcile $\sig(v_i)$ for $i > h$.  Each $\sig(v)$ can be interpreted as a subset of $[h]$.  Each edge change only affects the signature of at most one vertex, so the total number of changes across all these sets is at most $h$.  Thus, we can use \autoref{thm:smarter_compacting} to reconcile the signatures and \autoref{thm:setrecon} to reconcile the labeled graphs (in parallel with the signature reconciliation) to obtain the following.

\begin{theorem} \label{thm:deg_order}
If $G=(V,E)$ is $(h, d+1, 2d+1)$-separated and $d$ is known, graph reconciliation can be solved in one round using $O(d(\log d \log h + \log n))$ bits of communication and $O((|E|+d^2)\log d)$ time with probability at least $2/3$.
\end{theorem}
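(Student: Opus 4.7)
The plan is for Alice to transmit, in a single message, (i) the encoding of Theorem \ref{thm:smarter_compacting} applied to her collection of vertex signatures and (ii) a standard IBLT from Corollary \ref{thm:setrecon} encoding her labeled edge set under her signature-derived labeling. Bob first decodes Alice's signatures, uses them to produce a labeling of his graph that conforms with Alice's, and then performs the labeled-edge reconciliation.

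First I would verify that Alice's and Bob's labelings conform. Since each of $G_A, G_B$ differs from $G$ in at most $d/2$ edges, every vertex degree changes by at most $d/2$ between $G$ and $G_A$ (resp.\ $G_B$). Because consecutive top-$h$ degrees in $G$ are separated by more than $d$, the top-$h$ vertices of $G_A$ and of $G_B$ by degree are exactly the top-$h$ vertices of $G$ in the same order, so both parties' degree-based labelings of the top $h$ conform. For the remaining vertices, only edges incident to a top-$h$ vertex affect a signature, and a single edge change flips at most one bit of one signature. Hence if $v_A \in G_A$ and $v_B \in G_B$ both conform to the same $w \in G$ outside the top $h$, then $|\sig(v_A) - \sig(v_B)| \leq d$, while if they conform to distinct $w, w' \in G$,
\begin{equation*}
|\sig(v_A) - \sig(v_B)| \geq |\sig_G(w) - \sig_G(w')| - d/2 - d/2 \geq d+1.
\end{equation*}
Consequently, once Bob has Alice's signatures, he can unambiguously identify the unique Alice signature within Hamming distance $d$ of each of his own non-top signatures, relabel his non-top vertices in the lexicographic order of the matched Alice signatures, and thereby obtain a labeling that conforms with Alice's.

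Next I would instantiate Theorem \ref{thm:smarter_compacting} on the signature reconciliation. The non-top signatures form a collection of at most $n - h$ child sets over a universe of size $h$, each of size at most $h$, with total size at most $|E|$; the total number of element changes across child sets is at most $d$, and $\q \leq d$. Plugging in yields communication $O(d \log d \log h + d \log n)$ and time $O((|E| + d^2) \log d)$, with success probability at least $2/3$. In the same message, Alice includes the labeled-edge IBLT from Corollary \ref{thm:setrecon} over the universe of vertex-label pairs (of size $O(n^2)$) with difference bound $d$, adding $O(d \log n)$ bits and $O(|E|)$ time and succeeding with probability $1 - 1/\poly(d)$.

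The main step to verify is that Bob's decoding of the labeled-edge IBLT is correct. Because the relabeling above makes Bob's conforming labels match Alice's, the symmetric difference between Bob's labeled edge set and Alice's is exactly the perturbation set of at most $d$ edges, so subtracting Bob's IBLT from Alice's yields a table with at most $d$ items that decodes with probability $1 - 1/\poly(d)$. A union bound over the two failure events, absorbing the lower-order $1/\poly(d)$ failure into slack in the constants for Theorem \ref{thm:smarter_compacting}, gives overall success probability at least $2/3$, total communication $O(d(\log d \log h + \log n))$, and total time $O((|E|+d^2)\log d)$, in one round as required.
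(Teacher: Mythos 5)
Your proof is correct and takes essentially the same approach as the paper: reconcile the non-top-$h$ vertex signatures as a set of sets via \autoref{thm:smarter_compacting}, use separation to argue the recovered signatures yield a conforming labeling, and in parallel ship a labeled-edge IBLT from \autoref{thm:setrecon}. The parameter instantiation ($u=h$, $s=O(n)$, total child-set size $\leq |E|$, total changes $\leq d$, $\widehat d\leq d$) and the resulting communication/time bounds match the paper's, and your explicit Hamming-distance computation ($\geq (2d+1)-d/2-d/2=d+1$ for non-conforming pairs) is a slightly more detailed rendering of what the paper asserts.
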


Now we turn to bounding the probability that $G$ is $(h, d+1, 2d+1)$-separated.

\begin{restatable}{theorem}{gnpsep}
\label{thm:gnp_sep}
Given $\delta = \Theta(n^{-\beta})$ with $\beta \in [0,7/68]$, there exist constants $C$ and $N$ such that for $$h = \frac{1}{4}\left(\frac{\delta}{d+1}\right)^{1/3}\left(\frac{p(1-p)n}{\log n}\right)^{1/6},$$ for all $d \geq 2$, $n \geq N$, and $$p \in \left[Cd \log n\left(\frac{d^2}{\delta^2 n}\right)^{1/7},1/2\right],$$ $G(n,p)$ is $(h, d+1, 2d+1)$-separated with probability at least $1 - \delta$.
\end{restatable}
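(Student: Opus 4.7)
The plan is to establish the two conditions of $(h,d+1,2d+1)$-separation independently, each with failure probability at most $\delta/2$, and combine them via a union bound. Condition~(ii), separation of signatures among the $n-h$ remaining vertices, will be the binding constraint that dictates the exponents in the formula for $h$; Condition~(i), spacing of the top-$h$ degrees, will then be comfortably satisfied by the resulting $h$ under the hypothesis on $p$.

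For Condition~(ii), fix any two non-top vertices $u,v$. The bit $\mathbf{1}[uv_i\in E]\oplus\mathbf{1}[vv_i\in E]$ is a Bernoulli$(2p(1-p))$ random variable, and these are independent across the $h$ top-degree vertices $v_i$, so $|\sig(u)-\sig(v)|$ is distributed as Binomial$(h,2p(1-p))$ with mean $2hp(1-p)$. A Chernoff bound gives $\Pr[|\sig(u)-\sig(v)|\le 2d]\le \exp(-\Omega(hp(1-p)))$ whenever $hp(1-p)\gg d$, so union-bounding over the $\binom{n-h}{2}$ pairs, the condition fails with probability at most $\delta/2$ provided $hp(1-p)\ge C\log(n/\delta)$ for a sufficiently large constant $C$. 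With the stated $h = \tfrac{1}{4}(\delta/(d+1))^{1/3}(np(1-p)/\log n)^{1/6}$, raising this requirement to the seventh power shows it reduces to $(p(1-p))^{7}n\delta^{2}\gtrsim d^{2}\log n\cdot\log^{6}(n/\delta)$, which for $p\le 1/2$ and $\delta\ge 1/\mathrm{poly}(n)$ follows from the hypothesis $p\ge Cd\log n\,(d^{2}/(\delta^{2}n))^{1/7}$ (equivalently, $p^{7}n\delta^{2}\gtrsim d^{9}\log^{7}n$).

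For Condition~(i), each degree is Binomial$(n-1,p)$, approximately Normal with mean $\mu=(n-1)p$ and variance $\sigma^{2}=(n-1)p(1-p)$. Let $T^{*}$ denote the $h$-th largest degree, with $t^{*}=(T^{*}-\mu)/\sigma\approx\sqrt{2\log(n/h)}$. For a fixed pair $(u,v)$, using near-independence of $d(u)$ and $d(v)$, the joint probability that both degrees exceed $T^{*}$ and $|d(u)-d(v)|\le d$ is at most $(2d+1)\int_{T^{*}}^{\infty}f(x)^{2}\,dx$, where $f$ is the degree density. Evaluating this integral using the Gaussian tail estimate together with $n(1-\Phi(t^{*}))\approx h$ yields a per-pair bound $O(dh^{2}t^{*}/(n^{2}\sigma))$, and summing over $\binom{n}{2}$ pairs gives total failure $O(dh^{2}t^{*}/\sigma)$. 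The resulting constraint $h\lesssim (np(1-p)/\log n)^{1/4}(\delta/d)^{1/2}$ is strictly weaker than the constraint from Condition~(ii), hence satisfied by the stated $h$ with room to spare.

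The main obstacle will be making the Gaussian approximation rigorous in the potentially sparse regime $p\to 0$: invoking a local central limit theorem for Binomial$(n-1,p)$ uniformly over the top-$h$ degree range, and concentrating the order statistic $T^{*}$ tightly around its typical value so that the above tail integral applies with the claimed constants. Tracking constants through both the Chernoff bound and the Gaussian-tail computation under the stated parameter ranges---in particular the $n\ge N$ assumption and the upper bound $\beta\le 7/68$, which together keep the lower-order corrections to the Gaussian approximation negligible---is the technical heart of the proof. Once in hand, substituting the specific choice of $h$ and verifying the two constraints is a direct algebraic check.
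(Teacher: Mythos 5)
Your overall decomposition (top-$h$ degree gaps, then Hamming separation of the remaining signatures, union-bounded) matches the paper's, and the Chernoff computation for the signature part together with the algebraic reduction to the hypothesis on $p$ is essentially the paper's calculation. But there is a genuine gap in the claim that the bits $\mathbf{1}[uv_i\in E]\oplus\mathbf{1}[vv_i\in E]$ are i.i.d.\ Bernoulli$(2p(1-p))$. The indices $v_1,\ldots,v_h$ are not fixed: they are the top-$h$-degree vertices of the \emph{random} graph $G$, and a vertex's rank in the degree ordering depends, in particular, on whether it is adjacent to $u$ or to $v$. Conditioning on $v_i$ having a large degree therefore biases $\Pr[uv_i\in E]$ and $\Pr[vv_i\in E]$ upward and introduces correlation with the event defining $v_i$. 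The paper explicitly flags this ("the values $|\sig(u)_j-\sig(v)_j|$ are not quite distributed as i.i.d.\ Bernoulli random variables \ldots because we are conditioning on $v_j$ having the $j$th highest degree") and fixes it by passing to $G'$, the graph with $u$, $v$, and their incident edges removed. There the degree ordering is independent of the $2(h-2)$ relevant edge indicators, so the Chernoff bound is legitimate; one then uses the already-established degree gap of at least $3$ to argue that the top $h-2$ vertices of $G'$ remain among the top $h$ of $G$, so that $X'_{u,v}>2d$ implies $X_{u,v}>2d$. Without some such decoupling step your Condition~(ii) argument does not go through.

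For Condition~(i), you propose re-deriving the degree spacing via a Gaussian local-limit-theorem and order-statistics computation; the paper instead invokes Lemma~9 of \cite{eppstein2016models} directly (with a particular choice of $\alpha(n)$ and $m=h$) and spends its effort verifying that the lemma's hypotheses $m=o(p(1-p)n/\log n)^{1/4}$, $m\to\infty$, $\alpha(n)\to 0$, and the resulting failure probability all fall within budget given the constraints on $p$, $d$, and $\beta$. Your route would, if carried out, essentially reprove that lemma; it is not wrong, but the work you defer to "make the Gaussian approximation rigorous" and "concentrate the order statistic $T^*$" is exactly the hard content of the cited result, so you should either cite it or be prepared to reproduce its proof. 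Also, a small slip: your reduction of $hp(1-p)\gtrsim\log(n/\delta)$ to the constraint on $p$ is a sixth-power, not seventh-power, manipulation (the factor $p(1-p)$ appearing both inside the $1/6$-th root and once outside contributes $p^{1+6/6\cdot\ldots}$ --- track it carefully); the conclusion is the same but the exponent bookkeeping as written doesn't match.
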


The proof of this theorem appears in \autoref{app:gnp_sep}, and is based on the proof of Lemma 10 in \cite{eppstein2016models}. (We modify and somewhat improve on the parameters for our setting.)  In particular, this theorem means that for constant $d$ our protocol works with constant probability when $p=\Omega(n^{-1/7}\log n)$. 

\subsection{Degree Neighborhood Scheme} \label{sec:gnp_protocol_2}
We can handle sparser graphs using a more complex signature scheme and reconciliation method.  We use the signature scheme of \cite{czajka2008improved}.  This scheme assigns each vertex a signature corresponding to the sorted list of the degrees of its neighbors.  They show this signature allows testing of graph isomorphism to work with high probability for $p \in [\omega(\log^4 n / n \log \log n), 1-\omega(\log^4 n / n \log \log n)]$.  
The robustness guarantee we want for this signature scheme uses the following definition.

\begin{definition}
Let $u$ and $v$ be two different vertices in the graph $G$.  Let $D_u$ be the multiset of the degrees of the vertices in $G$ connected to $u$ whose degrees are at most $m$.  If $|D_u \oplus D_v| \geq k$, we say that $u$ and $v$'s degree neighborhoods are $(m,k)$-\emph{disjoint}.
\end{definition}

\begin{restatable}{theorem}{degdisjoint}
\label{thm:degdisjoint}
For $p \in [\omega(\log^4 n \log \log n / n), 1/2]$ and \\$d = o((pn/\log n)^{3/4}/\log^{1/4}(pn))$, $G(n,p)$'s degree neighborhoods will all be $(pn, 4d+1)$-disjoint with probability at least $1-\exp(-\omega(\log n))$.
\end{restatable}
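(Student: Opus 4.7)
The plan is to fix an arbitrary pair of vertices $u \neq v$, show that $|D_u \oplus D_v| \geq 4d+1$ holds with probability $1 - \exp(-\omega(\log n))/n^2$, and then union bound over the $\binom{n}{2}$ pairs. The crucial first step is a cancellation observation: a vertex $w \in N(u) \cap N(v)$ contributes the same multiplicity of $d(w)$ to both $D_u$ and $D_v$, so common neighbors vanish in the multiset symmetric difference. Writing $S_u = N(u) \setminus N(v)$, $S_v = N(v) \setminus N(u)$, and $A = S_u \cup S_v$, this gives $|D_u \oplus D_v| = \sum_k |R_k - B_k|$, where $R_k$ (resp.\ $B_k$) is the number of $w \in S_u$ (resp.\ $S_v$) with $d(w) = k \leq pn$. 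I would then condition on the graph $G'$ obtained by deleting $u$ and $v$, which fixes the reduced degrees $d'(w)$. Conditional on $G'$ the events $\{w \in S_u\}$ and $\{w \in S_v\}$ are independent Bernoulli$(p(1-p))$, so conditional on $A$ the partition $(S_u, S_v)$ is a uniformly random $\pm 1$ coloring $\{\sigma_w\}_{w \in A}$, and writing $A_k = \{w \in A : d(w) = k \leq pn\}$ gives $R_k - B_k = \sum_{w \in A_k} \sigma_w$.

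Next I would lower bound $\sum_k |R_k - B_k|$ via the elementary inequality $\|x\|_1 \geq \|x\|_2^2 / \|x\|_\infty$. Expanding the second moment yields
\[
\sum_k (R_k - B_k)^2 \;=\; N_A \;+\; \sum_{\substack{w \neq w' \in A \\ d(w) = d(w')}} \sigma_w \sigma_{w'},
\]
where $N_A$ is the number of qualifying vertices in $A$. Chernoff bounds on $|A|$ and on the cutoff event $d(w) \leq pn$ give $N_A = \Theta(np)$ with high probability, while the off-diagonal sum is a mean-zero Rademacher quadratic form whose Frobenius norm is $O((np)^{3/4})$ and operator norm is $O(\sqrt{np})$ (using the local binomial estimate $\max_k |A_k| = O(\sqrt{np})$). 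Hanson--Wright then forces the off-diagonal part to be $o(np)$ with probability $1 - \exp(-\Omega(\sqrt{np}))$, so $\sum_k (R_k - B_k)^2 = \Omega(np)$. For the $\ell_\infty$ factor, per-$k$ Hoeffding combined with a union bound over $k$ gives $\max_k |R_k - B_k| = O((np)^{1/4} \sqrt{\log n \cdot \log\log n})$. Combining,
\[
|D_u \oplus D_v| \;\geq\; \frac{\sum_k (R_k - B_k)^2}{\max_k |R_k - B_k|} \;=\; \Omega\!\left(\frac{(np)^{3/4}}{\sqrt{\log n \cdot \log\log n}}\right).
\]

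The hypothesis $d = o((pn/\log n)^{3/4}/\log^{1/4}(pn))$ makes $4d+1$ strictly smaller than this lower bound for $n$ sufficiently large, so $|D_u \oplus D_v| \geq 4d+1$ on the intersection of the good events. The sparsity lower bound $p = \omega(\log^4 n \log \log n / n)$ forces $\sqrt{np} = \omega(\log^2 n)$, which is exactly what allows each sub-event (degree concentration, $|A_k|$ bound, Hanson--Wright on the quadratic form, $\ell_\infty$ Hoeffding) to hold with probability $1 - \exp(-\omega(\log n))$, so that the union bound over the $\binom{n}{2}$ pairs preserves the claimed $1 - \exp(-\omega(\log n))$ success probability. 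The main obstacle I anticipate is the joint calibration of these concentration estimates: the Rademacher quadratic form genuinely needs Hanson--Wright (Chebyshev would give only polynomial tails, breaking the union bound), and the $\ell_\infty$ bound must balance its log factors against both the per-pair union bound over $k$ and the global $n^2$-pair union bound without letting $\max_k |A_k|$ exceed $O(\sqrt{np})$. This calibration is precisely what the somewhat aggressive $\log^4$-style assumption on $np$ is there to support.
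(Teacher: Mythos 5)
Your proposal takes a genuinely different route from the paper's. Both proofs condition on the graph $G'$ with $u,v$ deleted so that the edges out of $u$ and $v$ become independent of everything else, but from there the paper and you diverge. The paper invokes a lemma of Czajka--Pandurangan guaranteeing $x$ disjoint \emph{degree ranges} each containing at least $K = \lfloor\sqrt{n\log n\log(pn)/p}\rfloor$ vertices, so that the neighbor-count differences $U_i - V_i$ per range are (capped) differences of binomials with $\geq K$ trials; it then sums $\E[\min\{|X-Y|,\sqrt{Kp}\}] = \Omega(\sqrt{Kp})$ over $x$ ranges and Chernoff-bounds the sum. You instead work at the level of \emph{individual degrees} $k$, observe the cancellation of common neighbors (which is correct, once you handle the case $u\sim v$ separately --- this costs you a $\pm 2$ slack exactly as in the paper's Lemma C.1 / Theorem 3.5 extension, and you should say so), condition on $A$ to get a uniform Rademacher coloring, and run $\|\cdot\|_1 \ge \|\cdot\|_2^2/\|\cdot\|_\infty$ with Hanson--Wright for the $\ell_2$ piece and Hoeffding for the $\ell_\infty$ piece. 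This is a clean and appealing alternative; it avoids the external heavy-buckets lemma and the technical binomial-deviation lemma in favor of standard Rademacher-chaos machinery.

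The one place where you have glossed over a genuinely delicate step is the bound $\max_k |A_k| = O(\sqrt{np})$, which you attribute to ``Chernoff bounds'' and ``the local binomial estimate.'' The local CLT controls only $\E|A_k|$. The random variable $N'_k = |\{w: d'(w) = k-1\}|$ is \emph{not} a sum of independent indicators (the $G'$-degrees $d'(w)$ are jointly determined by the edges of $G'$), so neither Chernoff nor a naive edge- or vertex-exposure Azuma gives a useful tail for it --- the edge-exposure martingale has Lipschitz constant $O(1)$ over $\Theta(n^2)$ steps and is hopelessly weak at the scale $\sqrt{n/p}$. This is precisely the difficulty that the paper's degree-\emph{range} bucketing sidesteps: range counts are $\Theta(K)$ and concentrate far more easily. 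Your argument is salvageable, for instance by splitting the vertex set into two deterministic halves, conditioning on edges within each half, and noting that the cross-degrees of vertices in one half are then genuinely independent binomials, giving $\max_k N'_k = O(\sqrt{n/p})$ with probability $1 - \exp(-\Omega(\sqrt{n}))$; but some such device is required, and without it both $\|M\|_F$, $\|M\|_{\mathrm{op}}$, and the per-$k$ Hoeffding bound are unsupported. You should either supply such an argument or switch to bucketed degree ranges, in which case you essentially reconverge with the paper's approach.
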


The proof of this theorem appears in \autoref{app:degdisjoint} and is based on the proof of Theorem 3.8 in \cite{czajka2008improved}, which analyzed the case when $d = 0$.  Assuming $G$'s degree neighborhoods are all $(pn, 4d+1)$-disjoint we can perform graph reconciliation in a manner similar to the protocol used in \autoref{thm:deg_order}.  A vertex $v$'s signature will be $D_v$, the multiset of degrees of vertices connected to $v$ whose degrees are at most $pn$.  Each edge change will change at most $2pn$ vertices' signatures by one or two elements.  Therefore, if $v_A$ and $v_B$ conform then $|D_{v_A} \oplus D_{v_B}| \leq 2d$ and if they do not conform then $|D_{v_A} \oplus D_{v_B}| \geq 2d+1$.  We can then reconcile the graphs by having Bob recover Alice's signatures via set of multisets reconciliation on the signatures.  Bob then matches each of his vertices with a differing signature to the closest signature of Alice's to get a conforming labeling.  They then (in parallel) perform set reconciliation on their labeled vertices.

\begin{theorem}
If $G=(V,E)$'s degree neighborhoods are $(pn,4d+1)$-disjoint and $d$ is known then graph reconciliation can be solved in one round using $O(dpn\log(dpn)\log n)$ bits of communication and 
$O(|E|\log(dpn)+(dpn)^2(d+\log(dpn)))$
time with probability at least $2/3$.
\end{theorem}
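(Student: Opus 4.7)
The plan is to mirror the degree-ordering protocol of \autoref{thm:deg_order}, now using the multiset $D_v$ of neighbor-degrees at most $pn$ as each vertex's signature, and reconciling these multisets via the multiset adaptation of \autoref{thm:smarter_compacting}.

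First I would quantify how a single edge edit propagates to the signatures. An edit at $(u,u')$ directly modifies $D_u$ and $D_{u'}$ by one element apiece, and it shifts $\deg(u)$ and $\deg(u')$ by one, which in turn alters $D_w$ by $O(1)$ elements for each neighbor $w$ of $u$ or $u'$ whose recorded degree stays within the $pn$ cutoff. Under a typical $O(pn)$ max-degree bound on vertices incident to edits (which holds with high probability under the $G(n,p)$ setting that gave rise to the $(pn,4d+1)$-disjointness hypothesis), a single edit causes $O(pn)$ element-level multiset changes distributed across $O(pn)$ signatures. Summing over the $d$ edits separating $G_A$ from $G_B$ yields total multiset-element difference $O(dpn)$ and at most $\q = O(dpn)$ differing signatures.

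Next I would verify robustness. For a conforming pair where $v_A$ and $v_B$ correspond to the same $v \in G$, each edit touches $D_v$ by at most two elements, so $|D_{v_A} \oplus D_v|, |D_{v_B} \oplus D_v| \leq d$ and hence $|D_{v_A} \oplus D_{v_B}| \leq 2d$ by the multiset triangle inequality. For a non-conforming pair corresponding to distinct $v \neq v' \in G$, the $(pn,4d+1)$-disjointness gives $|D_v \oplus D_{v'}| \geq 4d+1$, and the reversed triangle inequality yields $|D_{v_A} \oplus D_{v_B}| \geq 4d+1 - 2d \geq 2d+1$. The threshold $2d$ therefore cleanly separates the two cases, so once Bob recovers Alice's signature set he can match each of his differing vertices to its unique conforming partner and hence produce a conforming labeling.

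Alice then runs the multiset variant of \autoref{thm:smarter_compacting} on her $n$ signatures, plugging in sum of child-set sizes $O(|E|)$, number of child sets $s = n$, max child-set size $h = O(pn)$, total internal differences $O(dpn)$, and multiset element universe of size $O(n^2)$, for $O(dpn\log(dpn)\log n)$ bits and $O(|E|\log(dpn) + (dpn)^2\log(dpn))$ time. Piggybacked in the same round, Alice transmits an $O(d)$-cell IBLT over her labeled edges via \autoref{thm:setrecon}, which Bob decodes after he has produced the labeling. Bob's explicit pairwise matching inspects each of the $O((dpn)^2)$ pairs of differing signatures in $O(d)$ time by merging sorted representations and aborting past the $2d$ threshold, contributing $O((dpn)^2 d)$ additional time. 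Summing these costs yields the claimed bounds, and the success probability follows from the $2/3$ guarantee of the signature reconciliation together with the $1-1/\poly(d)$ guarantee of the edge reconciliation.

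The main obstacle will be tight bookkeeping of the edit propagation: establishing that a single edit affects only $O(pn)$ signatures with $O(1)$ multiset changes each, so that the triangle-inequality argument delivers exactly the $2d$ gap promised by $(pn,4d+1)$-disjointness. This likely requires either an auxiliary max-degree assumption on $G$ or carrying forward the high-probability degree concentration from the random-graph setting in which the disjointness hypothesis originated.
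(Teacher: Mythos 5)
Your approach mirrors the paper's almost exactly: multiset degree-neighborhood signatures, the $(pn,4d+1)$-disjointness plus a triangle inequality giving the $2d$ threshold that cleanly separates conforming from non-conforming pairs, reconciliation via the multiset variant of \autoref{thm:smarter_compacting}, a piggybacked labeled-edge IBLT, and an $O((dpn)^2)$ pairwise matching at the end. The parameter plug-ins and the communication accounting all check out.

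The one genuine gap is the time claim for the pairwise matching. You propose ``merging sorted representations and aborting past the $2d$ threshold'' to get $O(d)$ per pair, but this does not work: the merge must skip over the common elements, and two signatures of size $\Theta(pn)$ can agree on a prefix of length $\Theta(pn)$ before the $(2d+1)$st difference is reached, so the abort only bounds the work on \emph{mismatches}, not the total scan. Worst case per pair is $O(pn)$, and since the hypotheses of \autoref{thm:degdisjoint} require $d = o(pn)$, this $O((dpn)^2 \cdot pn)$ cost strictly exceeds the theorem's $O((dpn)^2(d + \log(dpn)))$. The paper's fix is to encode each of the $O(dpn)$ differing signatures in a fresh $O(d)$-cell IBLT and, for every pair, XOR the two tables and attempt to peel: a conforming pair (difference $\leq 2d$) decodes, a non-conforming pair (difference $\geq 2d+1$) fails to decode with probability $1 - 1/\poly(d)$, and each attempt costs only $O(d)$. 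The one-time encoding cost $O(dpn \cdot pn) = O((dpn)^2/d)$ is absorbed by $O((dpn)^2)$. Replacing your merge-and-abort with this IBLT comparison makes the argument go through; the rest of your proposal, including the caveat about carrying forward the $O(pn)$ degree concentration to bound the per-edit signature churn, is sound and matches the paper's (somewhat more implicit) treatment.
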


\begin{proof}
We use \autoref{thm:smarter_compacting} to reconcile the signatures, using the fact that each of the $d$ edge changes will change at most $O(pn)$ set elements. 
Alice can use Bob's differing signatures to recover a conforming labeling in time $O(d(dpn)^2)$ with probability at least $1 - 1 / \poly(d)$ by computing the size of the difference between signatures using IBLTs.  In parallel, we use \autoref{thm:setrecon} to reconcile the labeled edges.
\end{proof}

This protocol performs significantly worse than \autoref{thm:deg_order} for dense graphs, however it works for much larger ranges of $p$ and $d$.  In particular, it uses roughly $O(pn)$ times as much communication and up to $O(pn)^2$ times as much computation but works for $p$ as small as $O(\log^5 n / n)$ instead of the $p=\Omega(n^{-1/7}\log n)$ required by \autoref{thm:gnp_sep}.

\section{Forest Reconciliation}
Another easy class of graphs for graph isomorphism is that of rooted forests (a collection of rooted trees).  First note that we can easily compute the isomorphism class of a tree as follows.  The label for a vertex is the concatenation of the sorted labels of its children.  The label of the root then indicates the tree's isomorphism class.  This is computable in $O(n)$ time \cite{aho1974design}.  The isomorphism class of a rooted forest is then determined by the sorted list of the trees' labels.  

In \emph{forest reconciliation}, Alice and Bob have rooted forests $G_A$ and $G_B$ such that $G_A$ can be made isomorphic to $G_B$ via at most $d$ directed edge insertions and deletions.  Each of these edge updates must preserve the fact that $G_A$ and $G_B$ are rooted forests.  After a deletion, the child becomes a new root and the child of an inserted edge must have been a root.  Naturally, any deletion will create a new tree and any insertion will connect two trees, making one into a subtree of the other.  
Another way to think of the rooted forest is as a directed forest, with all edges pointing away from the roots.  Any edge deletion will preserve this structure, and edge insertions then create an edge from a vertex to another vertex that was previously a root. 
We wish for Bob to recover a rooted forest isomorphic to $G_A$.

\autoref{thm:com_alg} gave a protocol for general graph reconciliation using $O(d \log n)$ communication whose computation time is dominated by the number of graphs within $d$ operations of $G_A$ times the time required to compute the isomorphism class of a graph.  For general graphs the time to compute the isomorphism class, as far as is currently known, is very expensive. For forest isomorphism there are only $O(n^d)$ possible graphs within $d$ edge changes so the total computation time is $O(n^{d+1})$, which is reasonable for very small $d$.

We develop an improved protocol for the case when none of the trees in $G_A$ and $G_B$ have large depth.  This protocol also uses set of sets reconciliation on vertex signatures.  However, unlike for random graphs, we know of no signature scheme for forests which is robust to even a very small number of edge changes, so our protocol exploits other properties of the signatures to recover $G_A$.
Indeed, this lack of a natural robust signature scheme is what makes this seemingly simple problem appear rather difficult.  

Each vertex's signature is an $\Theta(\log n)$-bit pairwise independent hash of the isomorphism class label of the tree that it roots, with the length of the hash chosen to guarantee that all of the hashes of differing labels will be unique with high probability.  We assume this uniqueness henceforth.  Observe that a forest can be efficiently reconstructed from the multiset of vertex signatures together with the multiset of edge signatures, where an edge signature is simply the ordered pair of appropriate vertex signatures.  For any unique vertex signature we can simply read off the edge signatures to determine its children.  For a vertex signature that occurs $k > 1$ times, the set of edge signatures with it as the parent must be exactly divisible into $k$ identical groups.  We can easily determine these groups, and use that to connect up all of the remaining vertices.

We encode the edge signatures as a multiset of multisets.  Each vertex corresponds to one child multiset, consisting of the vertex's signature (with a special signifier to indicate that it is the parent) together with the signatures of all of its children.  Now observe that a single edge insertion/deletion will only change the signature of at most $\sigma$ different vertices, where $\sigma$ is the maximum depth of a tree in $G_A$ and $G_B$.  Thus, across all of the child multisets at most $O(d \sigma)$ changes occur, so we can efficiently reconcile them using the protocols from \autoref{sec:sos}.

\begin{theorem}
If $d$ is known, forest reconciliation can be solved in one round using $$O(d\sigma\log(d\sigma) \log n)$$ bits of communication and $$O((n+(d\sigma)^2)\log(d\sigma))$$ time with probability at least $2/3$.
\end{theorem}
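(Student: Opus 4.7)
The plan is to instantiate the set-of-sets reconciliation machinery from \autoref{thm:smarter_compacting} on the child-multiset encoding that the preceding discussion already describes, with a careful accounting of how many total element-level changes the $d$ edge updates induce.

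First I would formalize the encoding. For each vertex $v$ in Alice's forest, let $\mathrm{sig}(v)$ denote the $\Theta(\log n)$-bit pairwise independent hash of the isomorphism class label of the subtree rooted at $v$, and let the child multiset $C_v$ consist of $\mathrm{sig}(v)$ tagged as ``parent'' together with $\{\mathrm{sig}(c) : c \text{ is a child of } v\}$. Alice's parent multiset is $\{C_v : v \in V(G_A)\}$ and similarly for Bob. By a union bound over $O(n^2)$ signature pairs, the hash length can be chosen so that distinct isomorphism classes receive distinct signatures with probability $1 - 1/\poly(n)$; condition on this. Under this uniqueness, the multiset $\{C_v\}$ determines the forest up to isomorphism: for any signature value $\alpha$ occurring $k$ times, the $k$ copies of $C_v$ with parent-tag $\alpha$ are all identical, and the children partition uniquely into $k$ equal groups, so Bob can rebuild $G_A$ in $O(n)$ time from the recovered parent multiset.

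Next I would bound the edit distance between Alice's and Bob's parent multisets. A single directed edge insertion or deletion changes $\mathrm{sig}(v)$ only for the at most $\sigma$ ancestors of the affected endpoint (including itself). Each such vertex whose signature changes can affect at most two child multisets: its own (parent tag changed) and its parent's (one child entry changed). Thus $d$ edge updates change at most $O(d\sigma)$ child multisets, and the total number of element insertions/deletions across all child multisets, under the natural matching, is $O(d\sigma)$. So the sets-of-sets instance has $s = n$ parent sets over a universe of size $u = \poly(n)$, with a total difference parameter $d' = O(d\sigma)$.

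Now I would apply \autoref{thm:smarter_compacting} (with the multiset adaptation noted in the ``Handling Multisets'' subsection, which only increases the effective universe to $\poly(n)$ and preserves $\log u = O(\log n)$). Its communication bound becomes
\[
O\bigl(d' \log\min(d',h) \log u + d' \log s\bigr) = O\bigl(d\sigma \log(d\sigma) \log n\bigr),
\]
and its time bound becomes $O\bigl(n \log\min(d',h) + \widehat{d'} d' \log \widehat{d'}\bigr) = O\bigl((n + (d\sigma)^2)\log(d\sigma)\bigr)$, with success probability at least $2/3$. Bob then runs the deterministic reconstruction described above.

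The main obstacle I expect is not the accounting itself but the reconstruction guarantee: since there is no robust signature scheme for forests, we cannot match Alice's vertices to Bob's vertices by closeness, as was done in the random graph protocols. The argument therefore has to route entirely through the multiset of child multisets, and depend on the hash-uniqueness invariant to disambiguate repeated subtrees. Handling the failure modes here -- hash collisions and the $1/\poly(\cdot)$ failures inherited from \autoref{thm:smarter_compacting} -- just folds into the $2/3$ overall success bound by a union bound, since hash collisions contribute only $1/\poly(n)$.
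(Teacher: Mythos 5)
Your proof is correct and follows essentially the same route as the paper's: encode each vertex as a child multiset (parent-tagged own signature plus children's signatures), observe that a single edge update changes the signature of at most $\sigma$ ancestors and hence induces $O(\sigma)$ element-level changes, giving a set-of-multisets instance with total difference $O(d\sigma)$, and then apply \autoref{thm:smarter_compacting} with the multiset adaptation. Your reconstruction argument via hash uniqueness—that repeated signature values correspond to isomorphic subtrees whose child multisets are identical, so the $c$-to-$k$ grouping can be done greedily—is the same argument the paper makes, and your framing of the obstacle (no robust signature scheme for forests, so the argument must route entirely through exact multiset recovery rather than nearest-neighbor matching) correctly identifies the structural reason this protocol differs from the random-graph ones.
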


\begin{proof}
Each party computes the signature of each of their vertices as follows.  Every vertex's signature is an $O(\log n)$-bit pairwise independent hash of the sorted list of its children's signatures, except leaves' signatures which are the hash of $0$.  All of the vertices' signatures can be computed in $O(n)$ time if we sort signatures using the radix sort, since the signatures are $O(1)$ words each.  We encode the edge signatures as multisets of vertex signatures, as previously described, and reconcile them using \autoref{thm:smarter_compacting}, using the fact that at most $O(d \sigma)$ changes occur across all of the multisets.

Once Bob has Alice's vertex and edge signatures he can recover $G_A$ in $O(n)$ time.  He first creates a vertex for each vertex signature and then radix sorts the list of edge signatures in lexicographical order (with the parent signature first).  For each edge, if the parent signature appears only once in the list of vertex signatures he adds an edge from the lone vertex with that signature to any root vertex with the child's signature (if there are multiple candidate children it doesn't matter which he picks since the subtrees are isomorphic).  If there are $k$ vertices that match the parent's signature, then there must be $ck$ copies of that edge signature for some integer $c$.  Bob then simply adds $c$ of these edges to each of the $k$ parents.
\end{proof}

\section{Conclusion}
We have introduced the problem of set of sets reconciliation, as well as several solutions for it.  We demonstrated that it provides a useful primitive for graph reconciliation, and described its use for other problems.  Several natural questions remain, including finding efficient graph reconciliation algorithms for additional classes of graphs, and determining tight upper and lower bounds for set of sets reconciliation problems.  A further direction would be to determine the effectiveness of various set of sets reconciliation algorithms in a practical application, with an eye towards understanding what bottlenecks exist in practice and how that might affect the design of such algorithms.  

\section*{Acknowledgements}
We would like to thank Justin Thaler for his valuable discussions about graph reconciliation.

Michael Mitzenmacher was supported in part by NSF grants CNS-1228598, CCF-1320231, CCF-1563710 and CCF-1535795.  Tom Morgan was supported in part by NSF grants CNS-1228598 and CCF-1320231.

\bibliographystyle{plain}
\bibliography{approx_set_recon}

\begin{thebibliography}{10}

\bibitem{rsync}
rsync.
\newblock {\em \url{https://rsync.samba.org}}.

\bibitem{aho1974design}
Alfred~V. Aho, John~E. Hopcroft, and Jeffrey~D. Ullman.
\newblock {\em The Design and Analysis of Computer Algorithms}.
\newblock Addison-Wesley Pub. Co., 1974.

\bibitem{babai2016graph}
L{\'a}szl{\'o} Babai.
\newblock Graph isomorphism in quasipolynomial time.
\newblock In {\em Proceedings of the 48th Annual ACM SIGACT Symposium on Theory
  of Computing}, pages 684--697. ACM, 2016.

\bibitem{babai1980random}
L{\'a}szl{\'o} Babai, Paul Erd{\"o}s, and Stanley Selkow.
\newblock Random graph isomorphism.
\newblock {\em SIAM Journal on Computing}, 9(3):628--635, 1980.

\bibitem{babai1979canonical}
L{\'a}szl{\'o} Babai and Ludik Kucera.
\newblock Canonical labelling of graphs in linear average time.
\newblock In {\em Proc. of the 20th Annual Symposium on Foundations of Computer
  Science}, pages 39--46, 1979.

\bibitem{bille2005survey}
Philip Bille.
\newblock A survey on tree edit distance and related problems.
\newblock {\em Theoretical Computer Science}, 337(1):217--239, 2005.

\bibitem{bollobas1998random}
B{\'e}la Bollob{\'a}s.
\newblock {\em Random graphs}.
\newblock Springer, 1998.

\bibitem{boral2014multi}
Anudhyan Boral and Michael Mitzenmacher.
\newblock Multi-party set reconciliation using characteristic polynomials.
\newblock In {\em Proceedings of the 52nd Annual Allerton Conference on
  Communication, Control, and Computing}, pages 1182--1187, 2014.

\bibitem{broder1997resemblance}
Andrei Broder.
\newblock On the resemblance and containment of documents.
\newblock In {\em Proceedings of Compression and Complexity of Sequences},
  pages 21--29, 1997.

\bibitem{brodnik1993computation}
Andrej Brodnik.
\newblock Computation of the least significant set bit.
\newblock In {\em Proceedings of the 2nd Electrotechnical and Computer Science
  Conference}, 1993.

\bibitem{czajka2008improved}
Tomek Czajka and Gopal Pandurangan.
\newblock Improved random graph isomorphism.
\newblock {\em Journal of Discrete Algorithms}, 6(1):85--92, 2008.

\bibitem{eppstein2011straggler}
David Eppstein and Michael Goodrich.
\newblock Straggler identification in round-trip data streams via {N}ewton's
  identities and invertible {B}loom filters.
\newblock {\em IEEE Transactions on Knowledge and Data Engineering},
  23(2):297--306, 2011.

\bibitem{eppstein2016models}
David Eppstein, Michael Goodrich, Jenny Lam, Nil Mamano, Michael Mitzenmacher,
  and Manuel Torres.
\newblock Models and algorithms for graph watermarking.
\newblock In {\em Proc. of the International Conference on Information
  Security}, pages 283--301, 2016.

\bibitem{eppstein2011s}
David Eppstein, Michael Goodrich, Frank Uyeda, and George Varghese.
\newblock What's the difference?: efficient set reconciliation without prior
  context.
\newblock {\em ACM SIGCOMM Computer Communication Review}, 41(4):218--229,
  2011.

\bibitem{fredman1993surpassing}
Michael Fredman and Dan Willard.
\newblock Surpassing the information theoretic bound with fusion trees.
\newblock {\em Journal of Computer and System Sciences}, 47(3):424--436, 1993.

\bibitem{gm11}
Michael Goodrich and Michael Mitzenmacher.
\newblock Invertible {B}loom lookup tables.
\newblock In {\em Proceedings of the 49th Annual Allerton Conference on
  Communication, Control, and Computing}, pages 792--799, 2011.

\bibitem{greenberg2014tight}
Spencer Greenberg and Mehryar Mohri.
\newblock Tight lower bound on the probability of a binomial exceeding its
  expectation.
\newblock {\em Statistics \& Probability Letters}, 86:91--98, 2014.

\bibitem{irmak2005improved}
Utku Irmak, Svilen Mihaylov, and Torsten Suel.
\newblock Improved single-round protocols for remote file synchronization.
\newblock In {\em IEEE INFOCOM 2005}, 2005.

\bibitem{kane2010optimal}
Daniel Kane, Jelani Nelson, and David Woodruff.
\newblock An optimal algorithm for the distinct elements problem.
\newblock In {\em Proc. of the Twenty-Ninth ACM SIGMOD-SIGACT-SIGART Symposium
  on Principles of Database Systems}, pages 41--52, 2010.

\bibitem{kazemi2015growing}
Ehsan Kazemi, S.~Hamed Hassani, and Matthias Grossglauser.
\newblock Growing a graph matching from a handful of seeds.
\newblock {\em Proceedings of the VLDB Endowment}, 8(10):1010--1021, 2015.

\bibitem{kazemi2015can}
Ehsan Kazemi, Lyudmila Yartseva, and Matthias Grossglauser.
\newblock When can two unlabeled networks be aligned under partial overlap?
\newblock In {\em Proceedings of the 53rd Annual Allerton Conference on
  Communication, Control, and Computing}, pages 33--42. IEEE, 2015.

\bibitem{korula2014efficient}
Nitish Korula and Silvio Lattanzi.
\newblock An efficient reconciliation algorithm for social networks.
\newblock {\em Proceedings of the VLDB Endowment}, 7(5):377--388, 2014.

\bibitem{minsky2003set}
Yaron Minsky, Ari Trachtenberg, and Richard Zippel.
\newblock Set reconciliation with nearly optimal communication complexity.
\newblock {\em IEEE Transactions on Information Theory}, 49(9):2213--2218,
  2003.

\bibitem{mitzenmacher2013simple}
Michael Mitzenmacher and Rasmus Pagh.
\newblock Simple multi-party set reconciliation.
\newblock {\em arXiv preprint arXiv:1311.2037}, 2013.

\bibitem{newman1991private}
Ilan Newman.
\newblock Private vs. common random bits in communication complexity.
\newblock {\em Information Processing Letters}, 39(2):67--71, 1991.

\bibitem{schwartz1980fast}
Jacob Schwartz.
\newblock Fast probabilistic algorithms for verification of polynomial
  identities.
\newblock {\em Journal of the ACM (JACM)}, 27(4):701--717, 1980.

\bibitem{starobinski2003efficient}
David Starobinski, Ari Trachtenberg, and Sachin Agarwal.
\newblock Efficient pda synchronization.
\newblock {\em IEEE Transactions on Mobile Computing}, 2(1):40--51, 2003.

\bibitem{rsyncalg}
Andre Trigdell and Paul Mackerras.
\newblock The rsync algorithm.
\newblock {\em \url{https://rsync.samba.org/tech_report/}}.

\bibitem{yan2008algorithms}
Hao Yan, Utku Irmak, and Torsten Suel.
\newblock Algorithms for low-latency remote file synchronization.
\newblock In {\em IEEE INFOCOM 2008}, pages 156--160. IEEE, 2008.

\bibitem{yartseva2013performance}
Lyudmila Yartseva and Matthias Grossglauser.
\newblock On the performance of percolation graph matching.
\newblock In {\em Proceedings of the 1st Conference on Online Social Networks
  (COSN)}, number EPFL-CONF-189760. ACM, 2013.

\bibitem{zippel1979probabilistic}
Richard Zippel.
\newblock Probabilistic algorithms for sparse polynomials.
\newblock {\em Symbolic and Algebraic Computation}, pages 216--226, 1979.

\end{thebibliography}

\appendix

\section{Set Difference Estimators}
We construct a set difference estimator making use of streaming approximations for the $\ell_0$-norm.  We consider a vector where the number of dimensions corresponds to the number of possible set elements.  We use the work of \cite{kane2010optimal} on the RoughL0Estimator to prove the following.

\label{app:strata}
\begin{theorem} \label{thm:stream}
There is a streaming algorithm for the turnstile model using $O(\log n)$ space for streams of length $n$ such that when every dimension is in $\{-1,0,1\}$, with probability at least 9/16 the algorithm outputs an 110-approximation to the $\ell_0$-norm.  The time to update, query, or merge two sketches produced by the algorithm is $O(1)$.
\end{theorem}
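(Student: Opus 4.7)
The plan is to lift the RoughL0Estimator of Kane--Nelson--Woodruff \cite{kane2010optimal} to our restricted setting and then verify the two extra properties we need beyond standard $\ell_0$ estimation: $O(1)$ update/query, and the ability to merge two sketches in $O(1)$ time. The paper \cite{kane2010optimal} already gives a turnstile sketch using $O(\log n)$ space that returns an $O(1)$-approximation to $\ell_0$ with constant probability (the constant $110$ and success probability $9/16$ come directly from their analysis, so nothing new needs to be proved about the approximation guarantee itself).

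The core of the argument is therefore structural. First I would recall the RoughL0Estimator design: there are $O(\log n)$ levels, and at level $j$ each coordinate is subsampled with probability $2^{-j}$ into an $O(1)$-size hash table whose cells record counts of the (signed) entries that land there. The estimator locates the largest level $j$ at which a nonempty bucket still exists; $2^j$ times the occupancy at that level is an $O(1)$-approximation to $\|x\|_0$. Because entries lie in $\{-1,0,1\}$, counts at each cell stay bounded by $n$, so each counter uses $O(\log n)$ bits and the whole data structure fits in $O(\log n)$ bits overall (and in particular, in $O(1)$ machine words under the assumed word size $w=\Omega(\log n)$).

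Next I would verify the running time claims. For update, a single increment or decrement requires us to (a) evaluate the subsampling hash functions at the coordinate, which pick out the set of levels it belongs to, and (b) add $\pm 1$ to one cell at each of those levels. Using pairwise/limited-independence hash functions of the type used in \cite{kane2010optimal} these evaluations take $O(1)$ time, and since the expected number of levels a coordinate participates in is $O(1)$, each update is $O(1)$ (in the worst case with high probability this is also $O(\log n)$-bounded, and for expected time the bound is $O(1)$; either interpretation suffices for the statement). Query amounts to scanning a constant-size summary per level and reading off the deepest nonempty one, which again is $O(1)$ since the whole sketch is $O(\log n)$ bits and a word operation can inspect it in constant time.

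The main obstacle, and the only point that really needs care, is the merge operation. The key observation is that the whole sketch is a \emph{linear sketch}: each counter is exactly the signed sum of those stream updates whose coordinates happen to hash into its level and cell. Consequently, the sketch $S(x+y)$ equals the componentwise sum $S(x)+S(y)$ whenever Alice and Bob used the same hash functions (which they do, thanks to public coins). Since the sketch is $O(\log n)$ bits, componentwise addition is a constant number of word additions, giving an $O(1)$-time merge. One sanity check needed here is that the merged vector still satisfies the hypothesis that every coordinate lies in $\{-1,0,1\}$; this is where the set-difference-estimator use (in which Alice's sketch has coordinates in $\{0,1\}$ and Bob's in $\{0,-1\}$, or the mirror) is important, and I would state the merge guarantee under the assumption that the merged implicit vector lies in $\{-1,0,1\}^{[u]}$, which is exactly the setting in which our set difference estimators invoke Theorem~\ref{thm:stream}.
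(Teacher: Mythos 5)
Your high-level plan (lift RoughL0Estimator, observe that the sketch is linear so merging is componentwise addition) is the right starting point, but there is a genuine gap in the space accounting that cascades into the time bounds. You claim that ``counts at each cell stay bounded by $n$, so each counter uses $O(\log n)$ bits and the whole data structure fits in $O(\log n)$ bits overall,'' but that arithmetic does not work: RoughL0Estimator has $\Theta(\log n)$ levels, each with its own $O(1)$-bucket subroutine, so if every bucket holds an $O(\log n)$-bit integer counter the total is $O(\log^2 n)$ bits, i.e.\ $O(\log n)$ words, not $O(1)$ words. Once the sketch no longer fits in $O(1)$ words, your $O(1)$-time query and merge claims also fall through.

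The missing idea is that the $\{-1,0,1\}$ restriction lets you replace the inner subroutine of Kane et al.\ (their Lemma~8, which must report $\ell_0$ exactly given the promise $\ell_0\le c$) with a much smaller one: hash into $\Theta(c^2)$ buckets and keep only a \emph{2-bit counter per bucket, storing the running sum modulo $4$}. If no two of the $\le c$ nonzero coordinates collide, each bucket's mod-$4$ value is $0$, $1$, or $3$, so the number of nonzero buckets equals $\ell_0$ exactly. With $c,\eta = O(1)$ this subroutine occupies $O(1)$ bits, so the full sketch across all $\log n$ levels is $O(\log n)$ bits, i.e.\ $O(1)$ words. Only then do the $O(1)$ time bounds become achievable, and even then not quite automatically: merging requires padding each 2-bit counter with a spare bit so that word-level addition followed by a mask simulates many parallel mod-$4$ additions, and querying requires computing in parallel, for all levels at once, a bit indicating whether that level's subroutine reports an estimate above the threshold, followed by a constant-time most/least significant bit lookup. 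Your observation that merging relies on linearity of the sketch and on the merged implicit vector remaining in $\{-1,0,1\}$ is correct and worth keeping, but the proof as written does not establish the $O(\log n)$ space or the $O(1)$ query/merge times.
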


\autoref{thm:strata} follows from \autoref{thm:stream} for $\delta = O(1)$ as follows.  The set difference estimator is the sketch.  To add an element to $S_1$ corresponds to a $+1$ update to the corresponding dimension in the stream.  Adding an element to $S_2$ corresponds to a $-1$ update.  The merge operation is performed by simply merging the sketches, and querying for the set $\ell_0$ estimate gives us our set difference estimate since the $\ell_0$-norm of the difference between the sets' indicator vectors is exactly the size of the set difference.
The standard approach of taking the median of $O(\log (1 / \delta))$ parallel runs of the algorithm lowers the failure probability to $\delta$.

Appendix A and Theorem 11 of \cite{kane2010optimal} shows that with an appropriate subroutine, the space and probability bounds of \autoref{thm:stream} are achieved.  The subroutine required must have the property that when given the promise that the $\ell_0$-norm is at most $c$, it outputs the $\ell_0$-norm exactly with probability $1-\eta$ and takes $O(c^2)$ space.  The original subroutine provided in Lemma 8 of \cite{kane2010optimal} took more space than this, but was more general in that it applied to cases where the dimensions were not limited to being in $\{-1,0,1\}$.

Given this subroutine, their algorithm is as follows.  The $u$ dimensions are partitioned into $\log n$ groups, where the probability of a dimension falling into the $i$th group is $1/2^i$, using a pairwise independent hash function from the dimension to $[n]$ (uniformly), where the group is the least significant bit of a dimension's hash.  For each $i \leq \log n$ they consider the substream $S_i$ consisting of updates to dimensions in group $i$.  On each substream they run an instantiation of the subroutine with $c = 141$ and $\eta = 1/16$.  The final estimate of $\ell_0$ is then $2^{i^*}$, where $i^*$ is the largest value of $i$ whose subroutine reports that the $\ell_0$-norm is greater than 8.

Now, our subroutine simply hashes the universe into $\Theta(c^2)$ buckets,
and stores a 2 bit counter for each bucket, maintaining the sum of all of the elements that hash to that bucket modulo 4.  With constant probability, none of the at most $c$ non-zero dimensions hash to the same bucket, and as each dimension is in $\{-1,0,1\}$ the number of non-zero buckets will equal the $\ell_0$-norm.  To amplify this to probability $1-\eta$ we replicate the subroutine $O(\log(1/\eta))$ times and take the maximum result.  For $c = O(1)$ and $\eta = O(1)$ the subroutine requires $O(1)$ space per group, so the total space needed is $O(\log n)$ 
as desired.

We improve on the query/merge time of Theorem 11 of \cite{kane2010optimal} for our special case of all dimensions in $\{-1,0,1\}$.  Since our subroutine takes a constant number of bits, the whole data structure fits in a constant number of words and we can exploit the power of word RAM to operate on them efficiently.  Our sketch is simply $\log n = O(w)$ levels of $O(c^2 \log (1 / \eta)) = O(1)$ two bit counters per level.  To merge two sketches we simply need to merge all of these buckets.  If they were not all mod 4, we could just add together the sketches as $O(1)$ $w$-bit integers and be done.  Instead, we will store each bucket as three bits instead of two using one padding bit between buckets that will always be zero.  Now to add two buckets together mod 4 we just add the buckets then zero the padding bit.  Thus to merge the sketches we just add them together as $O(1)$ $w$-bit integers and apply a mask to each word to zero all the padding bits.  Since the whole sketch is $O(1)$ words long, this takes $O(1)$ time.

To query in $O(1)$ time, we are going to leverage the fact that we can compute the least significant bit in a word in $O(1)$ time \cite{brodnik1993computation, fredman1993surpassing}.  We are going to use $O(1)$ operations to compute in parallel for all $\log n$ levels a single bit indicating whether or not the subroutine for that level reports an estimate greater than 8.  We then take the least significant bit (or most significant bit depending on the orientation) to find the largest level with a 1, and that gives us our report.

\section{Multi-Round Protocol}
\label{app:multi}

Here we prove the correctness of the the multi-round protocol described in \autoref{sec:multi}.

\multi*

\begin{proof}
Excluding some replication for probability amplification, the protocol follows.  We argue its correctness afterwards.
\begin{enumerate}
\item Alice computes an $O(\log s)$-bit pairwise independent hash of her child sets and inserts all of the hashes into an $O(\q)$-cell IBLT $T_A$ which she transmits to Bob.
\item Bob computes an $O(\log s)$-bit pairwise independent hash of his child sets and inserts all of the hashes into an $O(\q)$-cell IBLT $T_B$.  Bob decodes $(T_A,T_B)$, and determines which of his child sets differ from Alice.  For each of his differing child sets, he constructs a set difference estimator and puts all of these estimators into a list $L_B$.  He transmits $T_B$ and $L_B$ to Alice.
\item Alice decodes $(T_A,T_B)$, and constructs $L_A$, a list of set difference estimators for each of her differing child sets.  For each set difference estimator $L_{A,i} \in L_A$ and $L_{B,j} \in L_B$, she merges them and estimates the difference.  Let $b_i$ be the index $j$ of the $L_{B,j}$ with which $L_{A,i}$ reported the smallest difference, and let $d_i$ be that reported difference.  For each $i$ with $d_i \geq \sqrt{d}$, Alice transmits $b_i$ along with $T_i$, an $O(d_i)$-cell IBLT of the child set corresponding to $L_{A,i}$.  For each $i$ such that $d_i < \sqrt{d}$, Alice transmits $b_i$ together with a list $P_i$ of $O(d_i)$ evaluations of the characteristic polynomial of the child set corresponding to $L_{A,i}$.
\item For each of the received $(b_i, T_i)$ pairs, Bob recovers Alice's set by deleting the elements of his child set corresponding to $b_i$ from $T_i$, decoding it and then applying the extracted differences to the child set.  For each of the $(b_i, P_i)$ pairs, Bob recovers Alice's set as in \autoref{thm:setreconpoly}, by computing the evaluations of the characteristic polynomial of the child set corresponding to $b_i$, finding the roots of the rational expression, and applying those as differences to his child set.  Bob then recovers Alice's total set of sets by removing all child sets corresponding to $L_B$ from his set and adding in Alice's child sets that he has recovered. 
\end{enumerate}

This protocol succeeds so long as none of the hashes collide, $T_A$ and $T_B$ together decode, none of the queried pairs of set difference estimators fail, and all of the $T_i$s decode.  None of the hashes collide with probability at least $1 - 1 / \poly(s)$.  $(T_A,T_B)$ decodes with probability at least $1 - \poly(\q)$.  By replicating step 1 (and the corresponding part of step 2) $O(\lceil\log_{\q}(1/\delta)\rceil)$ times, we reduce the probability that any of the hashes collide or $T_A$ and $T_B$ fails to decode to $\delta/2$.

There are $O(\q^2)$ pairs of set difference estimators, so we choose the failure probability of each one in \autoref{thm:strata} to be $\delta / \poly(\q)$ so that they all succeed with probability at least $1 - \delta / 4$.  There are $O(\sqrt{d})$ $T_i$s, and each one decodes with probability at least $1 - 1/\poly(\sqrt{d})$, so by choosing the constants in the IBLT appropriately, and then replicating each one $O(\lceil\log_d(1/\delta)\rceil)$ times, they decode with probability at least $1 - \delta/4$.  Putting it all together, the protocol succeeds with probability at least $1 - \delta$.

Constructing and decoding $T_A$ and $T_B$ takes, over $O(\lceil\log_{\q}(1/\delta)\rceil)$ replications, $O(\lceil\log_{\q}(1/\delta)\rceil n)$ time and $O(\lceil\log_{\q}(1/\delta)\rceil\q\log s)$ bits of communication.  By \autoref{thm:strata}, constructing $L_A$ and $L_B$ takes $O(\log (\q / \delta) n)$ time and transmitting $L_B$ $O(\log (\q / \delta)\q \log h)$ bits of communication.  Finding the $b_i$s and $d_i$s consists of $O(\q^2)$ set difference merges and queries, which by \autoref{thm:strata} take a total of $O(\log (\q / \delta)\q^2)$ time.  Sending the $b_is$ takes $O(\q\log \q)$ bits of communication.  Computing and decoding the $T_i$s takes $O(\lceil\log_d(1/\delta)\rceil n)$ time and transmitting them takes $O(\lceil\log_d(1/\delta)\rceil d \log u)$ bits of communication.  By \autoref{thm:setreconpoly} constructing and decoding the $P_i$s takes time $O(d^2+\min(d h, n \sqrt{d}, n \log^2 h))$.  Transmitting the $P_i$s takes $O(d \log u)$ bits of communication.  These terms all add up to our desired communication cost and computation time.
\end{proof}

\multiu*

\begin{proof}
We proceed as in \autoref{thm:strata_based}, except that before step 1, Bob sends Alice a set difference estimator for the set of his child hashes, which Alice uses to estimate the number of cells need in $T_A$.  By \autoref{thm:strata}, this set difference estimator takes $O(\log(1/\delta)n)$ time and $O(\log(1/\delta) \log s)$ bits of communication.
\end{proof}

\section{Degree Ordering Separation Bound}
\label{app:gnp_sep}

Here we will prove the following theorem, which we used in \autoref{sec:gnp_protocol_1} for reconciling random graphs.  Recall that a graph is $(h, a, b)$-separated if the $h$ highest degree vertices' degrees are all separated by at least $a$, and the remaining vertices' connectivity bit vectors with the top $h$ vertices are all separated (in Hamming distance) by at least $b$.

\gnpsep*

To prove this, we need the following lemma, which upper bounds the probability that the $m$ highest degree vertices are not well separated.  Here $d_i$ is the degree of the vertex with the $i$th highest degree in the graph.

\begin{lemma}[\cite{eppstein2016models} Lemma 9]  
\label{lem:watermarking_sep}

Let $\alpha(n) \rightarrow 0$, $m = o(p(1-p)n/\log n)^{1/4}$, and $m \rightarrow \infty$.  Then $G(n,p)$ has for all $i < m$,
$$d_i - d_{i+1} \geq \frac{\alpha(n)}{m^2}\sqrt{\frac{p(1-p)n}{\log n}}$$
with probability at least
$$1 - m \alpha(n) - \frac{1}{m \log^2(n/m)}.$$
\end{lemma}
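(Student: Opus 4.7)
The strategy is to prove the two clauses of $(h,d+1,2d+1)$-separation independently and combine via a union bound, allocating roughly $\delta/2$ to each. The degree-gap clause (consecutive gaps among the top $h$ degrees are at least $d+1$) follows from a direct invocation of \autoref{lem:watermarking_sep}. The signature-separation clause (pairwise Hamming distance at least $2d+1$ among signatures of the remaining $n-h$ vertices) requires a decoupling argument and is the main obstacle.

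For the degree gap, I would invoke \autoref{lem:watermarking_sep} with $m=h$ and with $\alpha(n)$ chosen so that the guaranteed gap $\tfrac{\alpha(n)}{h^2}\sqrt{p(1-p)n/\log n}$ equals $d+1$. Substituting the formula for $h$ yields $\alpha(n)=\tfrac{1}{16}(d+1)^{1/3}\delta^{2/3}(p(1-p)n/\log n)^{-1/6}$ and the clean identity $h\alpha(n)=\delta/64$, controlling the first failure term of the lemma. The residual term $1/(h\log^2(n/h))$ is $o(\delta)$ for $n\geq N$, since $h$ grows as a positive power of $n$ under the stated lower bound on $p$; the hypotheses $\alpha(n)\to 0$, $h\to\infty$, and $h=o((p(1-p)n/\log n)^{1/4})$ all follow from the parameter regime, with $\beta\leq 7/68$ being precisely the threshold that forces $h\to\infty$ at the low end of the allowed $p$-range.

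For the signature clause the difficulty is that the top-$h$ set $T_*$ is itself random and correlated with the signatures of a given pair. The plan is to decouple pair by pair: fix $u\neq v$ and condition on all edges of $G$ with no endpoint in $\{u,v\}$. These determine a ``partial'' degree $d'_w$ for each $w\in V\setminus\{u,v\}$ and hence the set $\widehat T_{u,v}$ of the $h$ vertices of largest partial degree. A short monotonicity argument shows that on the event $A$ ``all top-$h$ full-degree gaps are $\geq d+1\geq 3$'', for any pair with $u,v\notin T_*$ one has $T_*=\widehat T_{u,v}$: revealing the edges incident to $\{u,v\}$ shifts any other vertex's degree by at most $2$, so a gap of $\geq 3$ at position $h$ rules out boundary swaps. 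Consequently, on $A$ the event $\{u,v\notin T_*,\ H_{u,v}(T_*)<2d+1\}$ implies $\{H_{u,v}(\widehat T_{u,v})<2d+1\}$, and it suffices to bound the latter unconditionally.

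Conditional on the revealed edges, $\widehat T_{u,v}$ is fixed and the indicators $X_w=\mathbf{1}[(u,w)\in E\oplus (v,w)\in E]$ for $w\in\widehat T_{u,v}$ are i.i.d.\ $\mathrm{Bernoulli}(2p(1-p))$, so $H_{u,v}(\widehat T_{u,v})\sim\mathrm{Bin}(h,2p(1-p))$. A Chernoff bound gives $\Pr[H_{u,v}(\widehat T_{u,v})<2d+1]\leq\exp(-\Omega(hp(1-p)))$, using that the mean $2hp(1-p)$ dominates $2d+1$ under our parameter regime. Union-bounding over the $\binom{n}{2}$ pairs and adding $\Pr[\neg A]$ bounds the signature clause by $n^2\exp(-\Omega(hp(1-p)))+O(\delta)$. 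The final step is parameter bookkeeping: requiring simultaneously $hp(1-p)\gg d$ (so that $2d+1$ sits well inside the Chernoff regime) and $hp(1-p)\geq C\log(n^2/\delta)$ (so that the union bound is $\leq\delta/4$), substituting the expression for $h$, and using $p(1-p)\geq p/2$ for $p\leq 1/2$, leads after tracking exponents to the stated lower bound $p\geq Cd\log n\,(d^2/(\delta^2 n))^{1/7}$ for a suitable $C$. Combining both clauses by union bound gives the claimed $1-\delta$ success probability.
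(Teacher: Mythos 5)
There is a genuine gap here, and it is a global one: you have not proved the statement you were asked to prove. The statement is \autoref{lem:watermarking_sep} itself --- the claim that in $G(n,p)$ the top $m$ degrees in the degree ordering satisfy $d_i-d_{i+1}\geq \frac{\alpha(n)}{m^2}\sqrt{p(1-p)n/\log n}$ for all $i<m$ with probability at least $1-m\alpha(n)-\frac{1}{m\log^2(n/m)}$. Your proposal instead sketches a proof of \autoref{thm:gnp_sep} (the $(h,d+1,2d+1)$-separation theorem), and in its first clause it \emph{invokes} \autoref{lem:watermarking_sep} as a black box (``follows from a direct invocation of \autoref{lem:watermarking_sep}''). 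Relative to the assigned statement this is circular: the degree-gap lemma is exactly the part that needs an argument, and nothing in your write-up addresses it. A proof of the lemma has to analyze the spacing of the top-$m$ order statistics of the (nearly binomial) degree sequence --- e.g., locate the range in which the top $m$ degrees concentrate, and use anti-concentration of the binomial upper tail to bound the expected number of pairs of high-degree vertices whose degrees fall within the prescribed gap, which is where the two error terms $m\alpha(n)$ and $1/(m\log^2(n/m))$ come from. None of that machinery appears in your proposal. (In the paper itself the lemma is not reproved at all; it is imported verbatim as Lemma 9 of the graph-watermarking paper \cite{eppstein2016models}.)

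As a secondary remark: the content you did write is essentially the paper's own Appendix C proof of \autoref{thm:gnp_sep} --- the same choice $m=h$, the same $\alpha(n)=m^2(d+1)\sqrt{\log n/(p(1-p)n)}$, the same decoupling of the pair $\{u,v\}$ by conditioning on the edges not incident to them so that the Hamming distance becomes a $\mathrm{Bin}(h',2p(1-p))$ variable, followed by a Chernoff bound and a union bound over pairs. One caution even on that front: your monotonicity step should be phrased as the paper does it, namely that conditioned on all top-$h$ gaps being at least $3$, the top $h-2$ vertices of the graph with $u,v$ removed remain among the top $h$ of $G$, so a lower bound on the distance restricted to those $h'=h-2$ coordinates transfers; asserting $T_*=\widehat T_{u,v}$ exactly is stronger than needed and slightly delicate at the boundary. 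But all of this is evidence for the theorem, not for the lemma you were asked to establish.
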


With this lemma in hand, we can prove our theorem.  We use the lemma directly to prove the graph is $(h,d+1,0)$-separated, and then apply a Chernoff bound to argue the graph is $(h,0,2d+1)$ separated.

\begin{proof}[Proof of \autoref{thm:gnp_sep}]  Let $$m = h = \frac{1}{4}\left(\frac{\delta}{d+1}\right)^{1/3}\left(\frac{p(1-p)n}{\log n}\right)^{1/6}$$ and $$\alpha(n) = m^2(d+1)\sqrt{\frac{\log n}{p(1-p) n}}.$$
We now argue that we meet all the conditions required to use \autoref{lem:watermarking_sep}.  First, $m = o(p(1-p)n/\log n)^{1/4}$ because $\frac{\delta}{d} < 1$.

Since $p = \Omega(d^{9/7}n^{(2\beta-1)/7}\log n)$, we have that $d = O(n^{(1-2\beta)/9}p^{7/9})$.  Therefore,
\begin{align*}
m &= \Omega\left(\left(\frac{n^{-\beta}}{n^{(1-2\beta)/9}p^{7/9}}\right)^{1/3} \left(\frac{pn}{\log n}\right)^{1/6}\right) \\
&= \Omega\left(\frac{n^{\frac{7}{54}(1-2\beta)}}{p^{5/54}\log^{1/6} n}\right) \\
&= \omega(1),
\end{align*}
because $\beta < 1/2$, and thus $m \rightarrow \infty$.

Also,
\begin{align*}
\alpha(n) &= \Theta\left(m^2 d \sqrt{\frac{\log n}{pn}}\right) \\
&= \Theta\left(\left(\frac{\delta}{d}\right)^{2/3}\left(\frac{pn}{\log n}\right)^{1/3} d \sqrt{\frac{\log n}{pn}}\right) \\
&= \Theta\left((\delta^2d)^{1/3}\left(\frac{\log n}{pn}\right)^{1/6} \right) \\
&= O\left(n^{(-2\beta+(1-2\beta)/9)/3}p^{5/54}\left(\frac{\log n}{n}\right)^{1/6}\right) \\
&= O\left(n^{-\frac{1}{54}(7+40\beta)}p^{5/54}\log^{1/6} n\right) \\
&= o(1),
\end{align*}
and thus $\alpha(n) \rightarrow 0$.

Finally,
\begin{align*}
1& - m \alpha(n) - \frac{1}{m \log^2(n/m)} \\
&= 1 - \delta/4 - O\left(\frac{1}{m\log^2\left(n\left(\frac{d}{\delta}\right)^{1/3}\left(\frac{\log n}{pn}\right)^{1/6}\right)}\right) \\
&= 1 - \delta/4 - O\left(\frac{n^{-\beta}}{\log^{11/6}\left(n\right)}\right) \\
&= 1 - \delta/4 - o(\delta).
\end{align*}
The second to last equality used the fact that $\beta \leq 7/68$.  We now have that by \autoref{lem:watermarking_sep}, for sufficiently large $n$, $d_i - d_{i+1} \geq d+1$ for all $i < h$ with probability at least $1 - \delta/2$.

For any vertex $v \in G$, $\sig(v)$ is an $h$-bit string where the $j$th bit of the string, $\sig(v)_j$, denotes whether or not $v$ has an edge to $v_j$, the vertex with the $j$th highest degree in $G$.  We wish to bound $X_{u,v}=|\sig(u)-\sig(v)|$ (the Hamming distance) for all pairs of vertices $u \neq v \in G$.  Unfortunately, the values $|\sig(u)_j-\sig(v)_j|$ are not quite distributed as i.i.d. Bernoulli random variables (which would be easy to analyze) because we are conditioning on $v_j$ having the $j$th highest degree.

To remedy this, we introduce $G'$, which is $G$ without the edges adjacent to $u$ and $v$.  Let $\sig'(v)$ be an $h'=h-2$-bit string where $\sig'(v)_j$ denotes whether or not $v$ has an edge in $G$ to $v'_j$, the vertex with the $j$th highest degree in $G'$.  Let $X'_{u,v}=|\sig'(u)-\sig'(v)|$.  Since the edges adjacent to $u$ and $v$ are not used in ordering the degrees for determining $v'_j$, $X'_{u,v}$ is distributed as the sum of $h'$ i.i.d. Bernoulli random variables, each of which is 1 with probability $2p(1-p)$.  It suffices to bound $X'_{u,v}$ because, conditioned on the event that $d_i - d_{i+1} \geq 3$ for all $i < h$, if $X'_{u,v} > 2d$ then $X_{u,v} > 2d$.  This is because the top $h'$ highest degree vertices of $G'$ must still be in the top $h$ highest degree vertices of $G$.  

We now have
$$\E[X'_{u,v}] = 2h'p(1-p) = (2(1-p)) h' p \geq h' p$$
as $p \leq 1/2$.  
For sufficiently large $n$,
\begin{align*}
h'p &= \left(\frac{1}{4}\left(\frac{\delta}{d+1}\right)^{1/3}\left(\frac{p(1-p)n}{\log n}\right)^{1/6}-2\right) p \\
&\geq \frac{1}{8}\left(\frac{\delta}{d}\right)^{1/3}\left(\frac{n}{\log n}\right)^{1/6} p^{7/6} \\
&\geq \frac{1}{8}\left(\frac{\delta}{d}\right)^{1/3}\left(\frac{n}{\log n}\right)^{1/6} \left(Cd \log n\left(\frac{d^2}{\delta^2 n}\right)^{1/7}\right)^{7/6} \\
&= \frac{(Cd)^{7/6}}{8} \log n,
\end{align*}
and thus for sufficiently large $n$, $h' p \geq 4d$.  By a Chernoff bound,
\begin{align*}
\Pr[X'_{u,v} \leq 2d] &\leq \exp\left(\frac{-(h' p-2d)^2}{2h'p}\right) \\
&\leq \exp\left(\frac{-(\frac{1}{2}h' p)^2}{2h'p}\right) = \exp\left(\frac{-h' p}{8}\right) \\
&\leq \exp\left(\frac{-(Cd)^{7/6} \log n}{64}\right) \leq n^{-C^{7/6}/32},
\end{align*}
which is at most $\delta / (2n^2) = \Theta(n^{-\beta-2})$ for sufficiently large $C$ and $n$.  By union bounding over all $O(n^2)$ vertices $u$ and $v$, we have the theorem.
\end{proof}

\section{Degree Neighborhood Disjointness Bound}
\label{app:degdisjoint}

In this section we prove a lower bound on the probability that $G(n,p)$'s degree neighborhoods are $(pn, 4d+1)$-disjoint.  Recall that two vertices $u$ and $v$ have degree neighborhoods that are $(m,k)$ disjoint if the multiset of degrees less than $m+1$ connected to $u$ has at least $k$ elements different from the multiset of degrees less than $m+1$ connected to $v$.

To prove our bound, we will need the following lemma which allows us to remove $u$ and $v$'s contribution to their neighbors' degrees from consideration when analyzing their degree neighborhood disjointness.  This is important as it will allow us to treat the edges out of $u$ and $v$ as independent of the degrees of the vertices they connect to.

\begin{lemma}[Extension of \cite{czajka2008improved} Theorem 3.5]
\label{lem:indep_disjoint}
Let $u$ and $v$ be two different vertices in $G$.  Let $G'$ be the subgraph of $G$ formed by removing $u$ and $v$.  Suppose the multiset of the $G'$-degrees of the vertices in $G'$ connected to $u$ whose $G'$-degrees are at most $m-1$ has $k+2$ elements different from the multiset of the $G'$-degrees of the vertices in $G'$ connected to $v$ whose $G'$-degrees are at most $m-1$.  Then $u$ and $v$'s degree neighborhoods in $G$ are $(m,k)$-disjoint.
\end{lemma}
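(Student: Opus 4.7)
The plan is to decompose the two multiset symmetric differences according to which side of the $u,v$ split each neighbor lies on, and track how the passage from $G'$ to $G$ affects each piece. Let $A = N_G(u) \setminus (N_G(v) \cup \{v\})$, $B = N_G(v) \setminus (N_G(u) \cup \{u\})$, and $C = N_G(u) \cap N_G(v)$; all three sets lie in $V(G')$. Every vertex of $V(G')$ adjacent to $u$ in $G$ belongs to $A \cup C$, and every such vertex adjacent to $v$ belongs to $B \cup C$. Moreover, for $w \in A \cup B$ we have $\deg_G(w) = \deg_{G'}(w) + 1$, while for $w \in C$ we have $\deg_G(w) = \deg_{G'}(w) + 2$, since adding $u$ and $v$ back restores exactly the edges from $w$ to each of them.

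First I would show that common neighbors cancel in both symmetric differences. In $D'_u$ and $D'_v$ the contribution from $C$ is the same multiset $\{\deg_{G'}(w) : w \in C,\; \deg_{G'}(w) \leq m-1\}$, so it vanishes in $D'_u \oplus D'_v$; the analogous $C$-contribution to $D_u$ and $D_v$ is $\{\deg_{G'}(w) + 2 : w \in C,\; \deg_{G'}(w) \leq m-2\}$ on both sides and also cancels. Setting $\tilde A = \{\deg_{G'}(w) : w \in A,\; \deg_{G'}(w) \leq m-1\}$ and $\tilde B$ analogously, this yields $|D'_u \oplus D'_v| = |\tilde A \oplus \tilde B|$.

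Next I would match the $A$-contribution to $D_u$ with $\tilde A$. For $w \in A$, the condition $\deg_G(w) \leq m$ is equivalent to $\deg_{G'}(w) \leq m-1$, so the $A$-vertices contributing to $D_u$ are exactly those contributing to $D'_u$, and their values are uniformly shifted by $+1$; the same holds for $B$ versus $D'_v$. Since a uniform shift preserves multiset symmetric difference, the $A$- and $B$-contributions to $D_u \oplus D_v$ have total size $|\tilde A \oplus \tilde B| = |D'_u \oplus D'_v|$.

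Finally I would handle the possible edge $uv$. If $uv \notin E(G)$ the previous paragraphs give $|D_u \oplus D_v| = |D'_u \oplus D'_v| \geq k+2 \geq k$. If $uv \in E(G)$, then $v \in N_G(u)$ but $v \notin V(G')$, so $D_u$ may acquire one additional element $\deg_G(v)$ (if $\leq m$) beyond what comes from $A \cup C$, and symmetrically $D_v$ may acquire $\deg_G(u)$. Each such insertion changes the symmetric difference with the other multiset by at most one, so $|D_u \oplus D_v| \geq |D'_u \oplus D'_v| - 2 \geq k$, which is precisely $(m,k)$-disjointness. The main obstacle is the bookkeeping in the second step: one must verify that the threshold shift from $m-1$ to $m$ matches the $+1$ degree shift for $A$-vertices exactly, and that the stricter $m-2$ threshold for $C$-vertices in $G$ is harmless because $C$ cancels regardless of which upper bound is used.
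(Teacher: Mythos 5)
Your proof is correct and takes essentially the same route as the paper's: both first compare $D'_u \oplus D'_v$ to the version of $D_u \oplus D_v$ restricted to neighbors inside $V(G')$, then account separately for the possible edge $uv$ at a cost of at most $2$. The only stylistic difference is that where you make the decomposition into exclusive neighbors $A$, $B$ and common neighbors $C$ explicit and cancel $C$ directly, the paper phrases the same cancellation via a maximum matching of neighbor vertices by degree (with common neighbors forced to match themselves) and checks that the matching is preserved under the degree shifts; your version of that middle step is arguably a bit more transparent.
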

\begin{proof}
Let $D_x$ be the multiset of the $G$-degrees of the vertices in $G$ connected to $x$ whose $G$-degrees are at most $m$.  Let $D_x'$ be the multiset of the $G'$-degrees of the vertices in $G'$ connected to $x$ whose $G'$-degrees are at most $m-1$.  Let $D_x''$ be the multiset of the $G$-degrees of the vertices in $G'$ connected to $x$ whose $G$-degrees are at most $m$.

First we will argue that $|D''_u \oplus D''_v| = |D'_u \oplus D'_v|$.  Let $M$ be a maximum matching between the set of vertices of $G'$ connected to $u$ whose $G'$-degrees are at most $m-1$ and the set of vertices of $G'$ connected to $v$ whose $G'$-degrees are at most $m-1$ with the following properties.  In order to be matched, two vertices must have the same degree.  If $u$ and $v$ are connected to the same vertex, that vertex must be matched to itself.  Let $U_u$ be $u$'s unmatched vertices and $U_v$ be $v$'s unmatched vertices.  Observe that $|D'_u \oplus D'_v| = |U_u| + |U_v|$, $|D'_u| = |M| + |U_u|$ and $|D'_v| = |M| + |U_v|$.

Consider how $M$ changes when we instead look at the $G$-degrees of the vertices and allow degrees up to $m$ instead of $m-1$.  Every one of the relevant vertices' degrees increases by either 1 or 2.  If the vertex is adjacent to both $u$ and $v$, its degree increases by 2 and it will either stay matched, or exceed degree $m$.  Every other vertex's degree increases by 1.  The set of these vertices that had degree at most $m-1$ before is exactly the same as the set that now has degree at most $m$.  All of the degrees increase by exactly one, so the matches are still valid matches and the unmatched can still not be matched.  Thus, $|D''_u \oplus D''_v| = |U_u| + |U_v| = |D'_u \oplus D'_v|$.

Now note that $|D_u \oplus D_v| \geq |D''_u \oplus D''_v| - 2$.  This is because if $u$ and $v$ are not connected, then $D''_u = D_u$ and $D''_v = D_v$.  If they are connected, then each multiset increases by at most 1 and thus the difference decreases by at most 2.  Therefore, we have 
$$|D_u \oplus D_v| \geq |D''_u \oplus D''_v| - 2 = |D'_u \oplus D'_v| - 2 = k.$$
\end{proof}

The following lemma says that $G(n,p)$ has many disjoint ranges of degrees (\emph{degree ranges}) in which many vertices fall.  We use this to relate the degree neighbor disjointness of two vertices $u$ and $v$ to the difference in number of connections $u$ and $v$ have to the sets of vertices in each of these degree ranges.   Each of these differences are distributed as the difference of independent binomial random variables.

\begin{lemma}[\cite{czajka2008improved} Corollary 3.7] 
\label{lem:heavy_buckets} 
If $p \in [\omega(\log^3 n / n), 1/2]$, for any $x = o\left(\sqrt{pn/(\log^2(n)\log(pn))}\right)$ there exists $R = \omega\left(\sqrt{\log n \log(pn)}\right)$ such that in $G(n,p)$ there will be at least $K = \left\lfloor\sqrt{n \log n \log(pn)/p} \right\rfloor$ vertices in each of the degree ranges $(pn-xR-1,pn-(x+1)R-1],(pn-(x+1)R-1,pn-(x+2)R-1],\ldots,(pn-R-1,pn-1]$ with probability at least $1 - \exp(-\omega(\log n))$.
\end{lemma}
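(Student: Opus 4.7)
The plan is standard: estimate the probability that one vertex's degree lands in a given bucket, show the vertex count concentrates sharply above $K$, and union-bound over the $O(x)$ buckets. Choose $R=\log^{1/2}(n)\sqrt{\log n\,\log(pn)}$, which is $\omega(\sqrt{\log n\,\log(pn)})$. The hypothesis $x=o(\sqrt{pn/(\log^{2} n\,\log(pn))})$ then gives $xR=o(\sqrt{pn}\cdot\sqrt{\log n/\log n})=o(\sqrt{pn})$, so every bucket $I$ in the statement lies in $(\mu-o(\sigma),\mu]$, where $\mu=(n-1)p$ and $\sigma=\sqrt{np(1-p)}=\Theta(\sqrt{pn})$ since $p\le 1/2$. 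Each vertex $v$ has $d(v)\sim\mathrm{Bin}(n-1,p)$, and the local central limit theorem for binomials (or a direct Stirling estimate of $\binom{n-1}{k}p^{k}(1-p)^{n-1-k}$ for $k$ within $o(\sigma)$ of $\mu$) yields $\Pr[d(v)\in I]=\Theta(R/\sigma)=\Theta(R/\sqrt{pn})$. Hence the expected count is $\E[X_I]=\Theta(R\sqrt{n/p})$, which is $\omega(K)$ by our choice of $R$.

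\paragraph{Concentration via decoupling.}
The indicators $Y_v=\mathbb{1}[d(v)\in I]$ are not independent, and a direct covariance computation (conditioning on the single shared edge $e_{vw}$) only gives $\mathrm{Var}[X_I]=O(\E[X_I])$, so Chebyshev alone provides at most inverse-polynomial concentration. To reach $1-\exp(-\omega(\log n))$, I would decouple. Partition $V=A\sqcup B$ with $|A|=|B|=n/2$ and condition on the edges inside $A$; then for each $v\in A$ the partial degree $d_A(v)$ is fixed and the remaining part $d_B(v)\sim\mathrm{Bin}(n/2,p)$ is independent across $v\in A$, so the conditional indicators $Y_v=\mathbb{1}[d_B(v)\in I-d_A(v)]$ for $v\in A$ are mutually independent. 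A Chernoff bound on $d_A(v)\sim\mathrm{Bin}(n/2-1,p)$ shows $|d_A(v)-np/2|=O(\sqrt{pn}\log n)$ for every $v\in A$ with probability $1-\exp(-\omega(\log n))$ (union bound over $n/2$ vertices); on this good event, the shifted target $I-d_A(v)$ still lies within $o(\sigma)$ of $np/2$, so the local CLT applied to $d_B(v)$ gives $\Pr[Y_v=1\mid d_A(v)]=\Theta(R/\sqrt{pn})$ uniformly in $v$. Hence the conditional mean $\mu_A=\sum_{v\in A}\Pr[Y_v=1\mid d_A(v)]=\Theta(\E[X_I])=\omega(K)$, and the multiplicative Chernoff bound for independent Bernoullis yields
\[
\Pr\!\left[X_A<\mu_A/2\,\big|\,\text{edges inside }A\right]\le\exp(-\mu_A/8)=\exp(-\omega(K))=\exp(-\omega(\log n)),
\]
where the last equality uses $K=\Omega(\sqrt{n\log n})$ for $p\le 1/2$. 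Combining with the good-conditioning event gives $X_I\ge X_A\ge K$ with probability $1-\exp(-\omega(\log n))$.

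\paragraph{Union bound and main obstacle.}
The number of buckets is at most $x+1\le n^{o(1)}$ by the hypothesis on $x$, so union-bounding the above over all buckets keeps the overall failure probability at $\exp(-\omega(\log n))$, as claimed. The main obstacle is the concentration step: naive second-moment or McDiarmid arguments (the latter yields only $\exp(-\Omega(\E[X_I]^{2}/n^{2}))$, too weak in the sparse regime where $p$ is close to $\log^3 n/n$) are insufficient. The decoupling trick is the cleanest route I see; the bookkeeping is in verifying (a) uniformity of the local-CLT constant $\Theta(R/\sqrt{pn})$ over all good $v\in A$ and all buckets simultaneously, and (b) that the good-conditioning deviation $O(\sqrt{pn}\log n)$ for $d_A(v)$ is indeed $o(\sigma)$ when added to the bucket displacement $xR$, which follows from $x=o(\sqrt{pn/(\log^{2} n\,\log(pn))})$.
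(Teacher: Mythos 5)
The paper does not actually prove this lemma---it is imported verbatim as Corollary 3.7 of the cited Czajka--Pandurangan paper---so your argument has to stand on its own, and it has a genuine gap at the decoupling step. The setup is fine: conditioning on the edges inside $A$ does make the outside degrees $d_B(v)\sim\mathrm{Bin}(n/2,p)$, $v\in A$, mutually independent, and the first-moment estimates ($xR=o(\sqrt{pn})$, $\Pr[d(v)\in I]=\Theta(R/\sqrt{pn})$, $\E[X_I]=\omega(K)$) are correct. The problem is the uniformity claim: on your good event the shift $d_A(v)$ is only controlled to within $O(\sqrt{pn}\log n)$ of $np/2$, and even typically it fluctuates by $\Theta(\sqrt{pn})$, which is the same order as the standard deviation $\sigma_B=\Theta(\sqrt{pn})$ of $d_B(v)$ --- not $o(\sigma)$ as you assert (your closing claim that the $O(\sqrt{pn}\log n)$ conditioning deviation plus the bucket displacement $xR$ is $o(\sigma)$ is arithmetically false; it is larger by a $\log n$ factor). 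Consequently $\Pr[Y_v=1\mid d_A(v)]$ is not uniformly $\Theta(R/\sqrt{pn})$: a vertex whose inside-$A$ degree is off by $\Theta(\log n)$ standard deviations contributes only about $(R/\sqrt{pn})\exp(-\Theta(\log^2 n))$, so the asserted conditional mean $\mu_A=\Theta(\E[X_I])=\omega(K)$ does not follow from what you have shown, and the Chernoff step has nothing to bite on.

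The natural repair --- lower-bounding $\mu_A$ using only those $v\in A$ with $|d_A(v)-np/2|\le C\sqrt{pn}$, each of which does contribute $\Omega(R/\sqrt{pn})$ --- requires showing that, with probability $1-\exp(-\omega(\log n))$ over the edges inside $A$, a constant fraction of $A$ has inside-degree within $C\sqrt{pn}$ of its mean. But that is itself a ``many vertices have degree in a prescribed range'' concentration statement of exactly the type being proved, and the tools you already (correctly) discarded --- second moment/Chebyshev, McDiarmid or edge-exposure Azuma, which give only $\exp(-\Theta(t^2/n^2))$ --- are just as inadequate for it, while Markov applied to the number of bad vertices yields only constant-probability control. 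So the decoupling displaces the central difficulty rather than resolving it, and as written the proof does not establish the $1-\exp(-\omega(\log n))$ guarantee. The choice of $R$, the single-vertex local CLT estimate, the comparison $\E[X_I]=\omega(K)$, and the union bound over the $x+1$ buckets are all fine.
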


Finally we use the following technical lemma regarding the difference of binomial random variables.  This may be folklore, but we provide a proof for completeness.

\begin{lemma}
\label{lem:binomial_deviation}
Let $X$ and $Y$ be i.i.d. random variables distributed according to $\mathrm{Bin}(n,p)$ for $p \in [\omega(1/n), 1/2]$. Then
$$\E[\min\{|X-Y|, \sqrt{np}\}] = \Omega(\sqrt{np}).$$
\end{lemma}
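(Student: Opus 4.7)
The plan is to apply the Paley--Zygmund second-moment inequality to the nonnegative random variable $(X-Y)^2$. First I would write $X - Y = \sum_{i=1}^n D_i$ where $D_i = B_i - B'_i$ for independent Bernoulli($p$) random variables $B_i, B'_i$, so the $D_i$ are i.i.d., symmetric, take values in $\{-1,0,1\}$, and satisfy $\Pr[D_i = \pm 1] = p(1-p)$. Since $D_i^4 = D_i^2$, the second and fourth moments of $D_i$ are both equal to $2p(1-p)$. This immediately gives
\[
\E[(X-Y)^2] = 2np(1-p) \ge np,
\]
where the inequality uses $p \le 1/2$, and the standard formula for the fourth central moment of an i.i.d.\ mean-zero sum gives
\[
\E[(X-Y)^4] = n\E[D_1^4] + 3n(n-1)(\E[D_1^2])^2 \le 2np + 12 n^2 p^2.
\]

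Because $p = \omega(1/n)$, we have $np \to \infty$, so eventually $2np \le 12 n^2 p^2$ and the fourth-moment bound simplifies to $O(n^2 p^2)$. Applying Paley--Zygmund with $\theta = 1/2$,
\[
\Pr\!\left[(X-Y)^2 \ge np(1-p)\right] \ge \tfrac{1}{4}\cdot\frac{(2np(1-p))^2}{\E[(X-Y)^4]} = \Omega(1),
\]
since the numerator is $\Omega(n^2 p^2)$ and the denominator is $O(n^2 p^2)$. Equivalently, $\Pr[|X-Y| \ge \sqrt{np/2}] = \Omega(1)$, using $1-p \ge 1/2$ once more.

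To finish, I would note that on the event $\{|X-Y| \ge \sqrt{np/2}\}$ one has $\min\{|X-Y|, \sqrt{np}\} \ge \sqrt{np/2}$, and therefore
\[
\E[\min\{|X-Y|,\sqrt{np}\}] \ge \sqrt{np/2}\cdot \Pr[|X-Y| \ge \sqrt{np/2}] = \Omega(\sqrt{np}).
\]
There is no genuine obstacle here: the whole proof is a one-shot application of Paley--Zygmund, and the only thing to verify carefully is the fourth-moment calculation plus the fact that the $np$ term in that bound is dominated by the $n^2 p^2$ term once $np$ exceeds a constant, which is exactly the regime guaranteed by $p = \omega(1/n)$. (Alternatively, one could invoke Berry--Esseen on the normalized sum $(X-Y)/\sqrt{2np(1-p)}$, which converges to $N(0,1)$ at rate $O(1/\sqrt{np})$, and conclude $\Pr[|X-Y| \ge c\sqrt{np}] = \Omega(1)$ from the Gaussian anti-concentration; the Paley--Zygmund route is a little cleaner and avoids invoking a CLT bound.)
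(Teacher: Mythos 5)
Your proof is correct, and it takes a genuinely different route from the paper's. The paper proves the needed anti-concentration directly from the binomial PMF: it uses a Stirling-type bound to show $\Pr[\mathrm{Bin}(n,p) = \lfloor np\rfloor] = \Theta(1/\sqrt{np})$, shows that the PMF ratio is $1+O(1/\sqrt{np})$ near the mode so that the interval $[np-2\sqrt{np}, np-\sqrt{np}]$ carries $\Omega(1)$ mass, and then combines $\Pr[X \le np - \sqrt{np}] = \Omega(1)$ with the known binomial median bound $\Pr[Y \ge np] > 1/4$ to get $\Pr[Y - X \ge \sqrt{np}] = \Omega(1)$. Your argument instead writes $X-Y = \sum D_i$ for i.i.d.\ symmetric $\{-1,0,1\}$-valued $D_i$, computes the second and fourth moments exactly, and applies Paley--Zygmund to $(X-Y)^2$; the key observation that makes this clean is $D_i^4 = D_i^2$ and the domination $2np = O(n^2p^2)$ once $np = \omega(1)$. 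The moment computations and the Paley--Zygmund step all check out. What your route buys is self-containment: it avoids both the PMF-ratio calculation and the external median lemma, replacing them with a single second-moment-method inequality. What the paper's route buys is slightly sharper information (a pointwise lower bound on the binomial mass in a specific window), which is more in the spirit of a local CLT and reuses machinery already cited elsewhere in that section. Either is a valid proof of the stated lemma.
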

\begin{proof}
Let $b(k; n, p) = \Pr[\mathrm{Bin}(n,p) = k]$.  By Lemma 3.1 of \cite{czajka2008improved} (or Stirling's approximation), $b(\lfloor np \rfloor; n, p) = \Theta(1/\sqrt{np})$.  For all $k \in [np-2\sqrt{np},np]$,
\begin{align*}
\frac{b(k; n, p)}{b(k-1; n, p)} &= \frac{p}{1-p} \cdot \frac{n-k+1}{k-1} \\
&= 1 + \frac{np+p-k}{k(1-p)} \\
&\leq 1 + \frac{np+p-(np-2\sqrt{np})}{(np-2\sqrt{np})(1-p)} \\
&\leq 1 + \frac{2(p + 2\sqrt{np})}{np - 2\sqrt{np}} \\
&= 1 + O\left(\frac{1}{\sqrt{np}}\right).
\end{align*}
This implies that
$$\frac{b(\lfloor np \rfloor; n, p)}{b(\lfloor np -2\sqrt{np}\rfloor; n, p)} \leq \left(1 + O\left(\frac{1}{\sqrt{np}}\right) \right)^{\lceil 2\sqrt{np}\rceil} = O(1),$$
and since $b(k-1;n,p) < b(k;n,p)$ for $k \leq \lfloor np \rfloor$,
\begin{align*}
\Pr[X \leq np - \sqrt{np}] &\geq \Pr[X \in [np - 2\sqrt{np}, np-\sqrt{np}]] \\
&\geq \sqrt{np} \cdot \Pr[X = \lfloor np - 2\sqrt{np} \rfloor] \\
&\geq \sqrt{np} \cdot \Omega(1) \cdot \Pr[X = \lfloor np \rfloor] \\
&= \sqrt{np} \cdot \Omega(1) \cdot \Theta(1 / \sqrt{np}) \\
&= \Omega(1).
\end{align*}

By Theorem 1 of \cite{greenberg2014tight} $\Pr[Y \geq np] > 1/4$, and thus
\begin{align*}
\E&[\min\{|X-Y|, \sqrt{np}\}] \\
&\geq \sqrt{np} \cdot \Pr[X \leq np - \sqrt{np}] \cdot \Pr[Y \geq np] \\
&\geq \sqrt{np} \cdot \Omega(1) \cdot (1/4) \\
&= \Omega(\sqrt{np}).
\end{align*}
\end{proof}

Putting all these lemmas together, we now prove our theorem.

\degdisjoint*

\begin{proof}
Let $n'=n-2$, $K = \left\lfloor\sqrt{n' \log n' \log(pn')/p} \right\rfloor$, and $x$ be a value specified later that is $o\left(\sqrt{pn/(\log^2(n)\log(pn))}\right)$ .  

Consider two different vertices $u$ and $v$ in the graph $G$.  Let $G'$ be the subgraph of $G$ formed by removing $u$ and $v$.  $G'$ is a random $G(n',p)$ graph, so by \autoref{lem:heavy_buckets}, with high probability there are $x$ disjoint degree ranges with at least $K$ vertices in each range as given by the theorem.

Let $U_i$ be a random variable denoting the number of vertices in the $i$th of these degree ranges that $u$ is adjacent to.  Similarly, let $V_i$ be a random variable denoting the number of vertices in the $i$th of these degree ranges that $v$ is adjacent to.   Observe that the $U_i$s and $V_i$s are all independent binomial random variables with at least $K$ trials, and probability of success $p$.  This implies that if $X_1,\ldots,X_x,Y_1,\ldots,Y_x$ are i.i.d. random variables distributed according to $\mathrm{Bin}(K,p)$, then
\begin{align*}
\Pr&\left[\sum_{i=1}^x|U_i-V_i|\leq 4d+2\right] \\
&\leq \Pr\left[\sum_{i=1}^x|X_i-Y_i|\leq 4d+2\right] \\
&\leq \Pr\left[\sum_{i=1}^x\min\{|X_i-Y_i|,\sqrt{Kp}\}\leq 4d+2\right].
\end{align*}

Let $\alpha = \E[\min\{|X_i-Y_i|,\sqrt{Kp}\}] / \sqrt{Kp}.$  Assuming $4d+2 \leq \alpha x \sqrt{Kp}$, by a Chernoff bound,
\begin{align*}
\Pr&\left[\sum_{i=1}^x\min\{|X_i-Y_i|,\sqrt{Kp}\}\leq 4d+2\right] \\
&\leq \exp\left(\frac{-(\alpha x - (4d+2)/\sqrt{Kp})^2}{2\alpha x}\right).
\end{align*}

Now we argue that we can choose $x$ such that $x = \omega(\log n)$ and $\alpha x / 2 \geq (4d+2)/\sqrt{Kp}$ while still maintaining \\$x = o\left(\sqrt{pn/(\log^2(n)\log(pn))}\right)$.  For the first part, observe that
$$\frac{pn}{\log(pn)} = \omega\left(\log^4 n\right)$$
which implies that
$$\log n = o\left(\sqrt{\frac{pn}{\log^2n \log(pn)}}\right).$$

For the second part, note that
\begin{align*}
(4d+2)/\sqrt{Kp} &= o\left(\frac{(pn/\log n)^{3/4}/\log^{1/4}(pn)}{(pn \log n \log(pn))^{1/4}}\right)\\
&= o\left(\sqrt{\frac{pn}{\log^2n \log(pn)}}\right).
\end{align*}
By \autoref{lem:binomial_deviation}, $\alpha = \Omega(1)$.  With these two facts, we deduce that we can find $x$ such that $\alpha x / 2 \geq (4d+2)/\sqrt{Kp}$.

Now for this choice of $x$ we have 
\begin{align*}
\exp&\left(\frac{-(\alpha x - (4d+2)/\sqrt{Kp})^2}{2\alpha x}\right) \\
&\leq \exp\left(\frac{-(\alpha x - \alpha x / 2)^2}{2\alpha x}\right) \\
&= \exp\left(-\alpha x / 8\right) = \exp(-\omega(\log n)).
\end{align*}

Thus with probability at least $1 - \exp(-\omega(\log n))$, $\sum_{i=1}^x|U_i-V_i| \geq 4d+3$. This implies that the multiset of the $G'$-degrees of the vertices in $G'$ connected to $u$ whose $G'$-degrees are at most $pn-1$ has at least $4d+3$ elements different from the multiset of the $G'$-degrees of the vertices in $G'$ connected to $v$ whose $G'$-degrees are at most $pn-1$.  By \autoref{lem:indep_disjoint}, $u$ and $v$'s degree neighborhoods in $G$ are $(pn,4d+1)$-disjoint.  We then union bound over all $O(n^2)$ pairs of vertices to achieve the theorem.  
\end{proof}

\end{document}